\documentclass[sigplan,nonacm]{acmart}
\usepackage{amsmath}
\usepackage{amsthm,thm-restate}
\usepackage{graphicx}
\usepackage{subcaption}
\usepackage{enumitem}
\usepackage{bbm}
\usepackage{xspace}
\usepackage{makecell}

\newcommand{\E}{\mathbb{E}}

\newcommand{\arrival}{\mathcal{A}}
\newcommand{\policybest}{BOA\xspace}
\newcommand{\NumEpoch}[1]{\ell_{#1}}

\newcommand{\boa}{BOA Constrictor\xspace}
\newcommand{\base}{\mathcal{K}}
\newcommand{\subhead}[1]{\noindent\textbf{#1}}

\usepackage[ruled,linesnumbered]{algorithm2e}

\title{BOA Constrictor: Squeezing Performance out of GPUs in the Cloud via Budget-Optimal Allocation}

 \author{Zhouzi Li}
 \email{zhouzil@andrew.cmu.edu}
 \affiliation{
   \institution{Carnegie Mellon University}
   \department{Computer Science Department}
   \country{United States}
 }

 \author{Cindy Zhu}
 \email{cindyz@andrew.cmu.edu}
 \affiliation{
   \institution{Carnegie Mellon University}
   \department{Computer Science Department}
   \country{United States}
 }

 \author{Arpan Mukhopadhyay}
 \email{arpan.mukhopadhyay@warwick.ac.uk}
 \affiliation{
   \institution{University of Warwick}
   \department{Computer Science Department}
   \country{England}
 }

 \author{Mor Harchol-Balter}
 \email{harchol@cs.cmu.edu}
 \affiliation{
   \institution{Carnegie Mellon University}
   \department{Computer Science Department}
   \country{United States}
 }

 \author{Benjamin Berg}
 \email{ben@cs.unc.edu}
 \affiliation{
   \institution{UNC Chapel Hill}
   \department{Computer Science Department}
   \country{United States}
 }

\begin{document}

\begin{abstract}
The past decade has seen a dramatic increase in demand for GPUs to train Machine Learning (ML) models. Because it is prohibitively expensive for most organizations to build and maintain a large GPU cluster, organizations instead rent GPUs from cloud providers.  A \emph{cloud customer} must decide (i) how many GPUs to rent at every moment in time to process a stream of training jobs and (ii) how to allocate the rented GPUs among the currently active jobs.  
While allocating more GPUs to a single training job helps the job complete more quickly, the customer pays for each GPU-hour they use.
Because training jobs often receive a diminishing marginal benefit from running on additional GPUs, allocating too many GPUs to a single job can dramatically increase the cost the customer pays.  This gives rise to a \emph{cost-performance tradeoff} when training models in the cloud.

To balance the cost-performance tradeoff, we develop \boa, a new scheduler for ML training jobs that uses a Budget-Optimal Allocation (BOA) policy to squeeze the most performance out of a cloud-based GPU cluster given a fixed budget. 
While prior approaches focus on fixed-sized clusters and heuristic approaches for balancing cost and performance, we formalize the problem as a budget-constrained scheduling problem.
Given a monetary budget that the customer is willing to spend on GPUs, we derive the BOA policy that minimizes the average job completion time (JCT) of a stream of arriving jobs.
Our BOA policy can be computed efficiently for any budget level, providing users with the optimal tradeoff between cost and performance.
For a given budget level, \boa can reduce average JCT by $1.6\times$ in small-scale implementation experiments and by $2\times$ in detailed, large-scale simulations compared to state-of-the-art schedulers.
\boa also reduces the budget needed to achieve a given average JCT by up to $2\times$.
\end{abstract}

\maketitle
\section{Introduction}
\label{sec:intro}

The explosion in Machine Learning (ML) over the past ten years has led to a dramatic increase in demand for GPUs to train ML models \cite{jouppi2017datacenter}.
Because it is prohibitively expensive for many organizations to build and maintain a large GPU cluster, hyperscale cloud providers (Microsoft Azure, Amazon AWS, Google Cloud) have seen explosive growth in demand for renting cloud-based GPUs~\cite{gpureport}.
\begin{figure}[t]
    \centering
    \begin{subfigure}{0.47\linewidth}
        \centering
        \includegraphics[width=\linewidth]{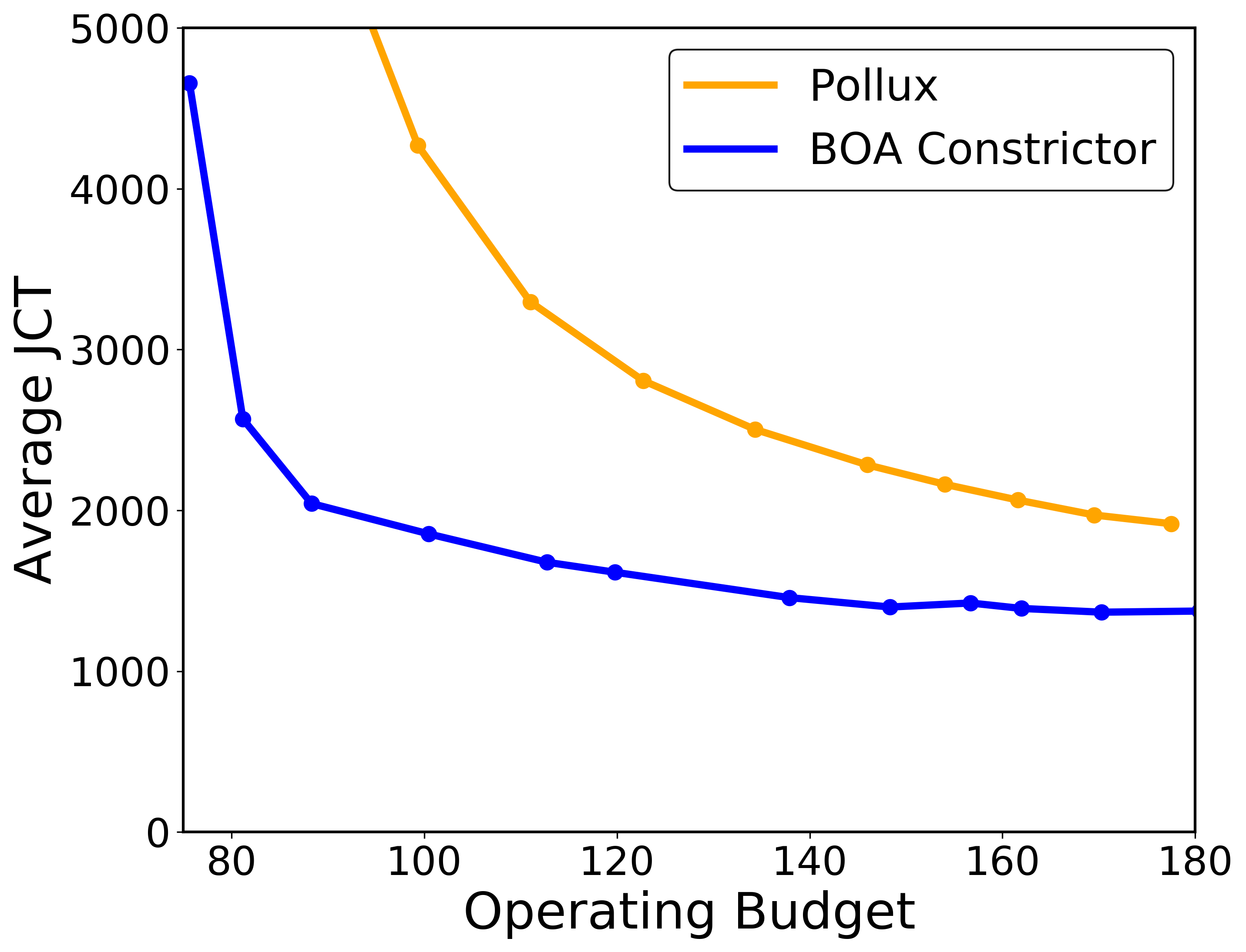}
        \caption{Average JCT vs. budget.}
    \label{fig:intro:mean}
    \end{subfigure}
    \hfill
    \begin{subfigure}{0.47\linewidth}
        \centering
        \includegraphics[width=\linewidth]{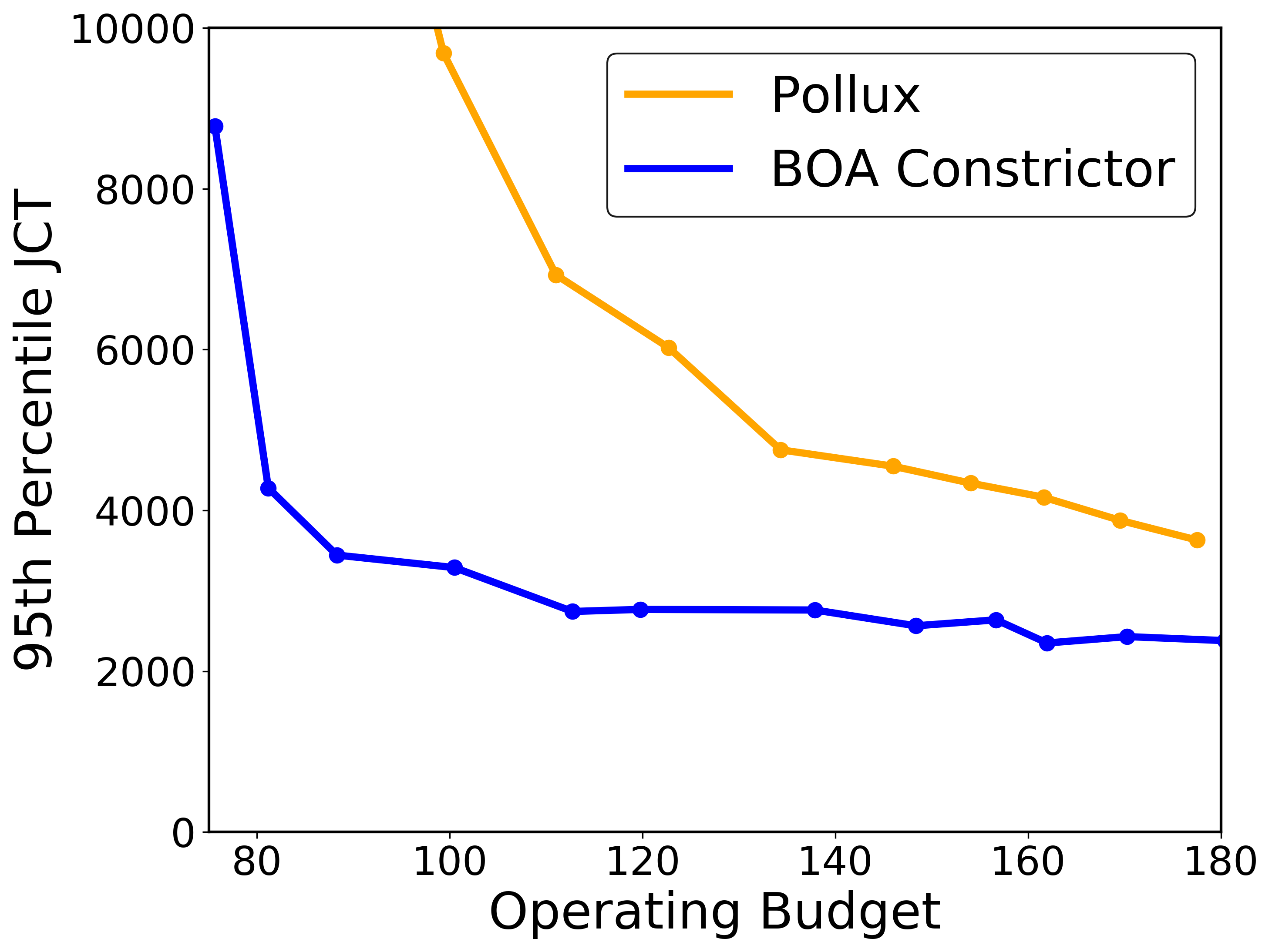}
        \caption{P95 JCT vs. budget.}
    \label{fig:intro:tail}
    \end{subfigure}
    \caption{\boa balances the cost-performance tradeoff for training in the cloud.  While existing policies such as Pollux \cite{qiao2021pollux} do not explicitly balance the cost-performance tradeoff, \boa provides up to a $2\times$ improvement in average JCT and up to a $3\times$ improvement in P95 JCT for a given budget by deriving a budget-optimal allocation policy.
    }
    \label{fig:intro}
\end{figure}
This raises the crucial question of how cloud customers can efficiently train ML models using cloud-based GPU clusters~\cite{subramanya2023sia,qiao2021pollux,moritz2018ray,misra2021rubberband}.


In this cloud computing paradigm, the customer is responsible for devising a \emph{GPU rental policy} that decides how many GPUs to rent at every moment in time~\cite{salvaris2018microsoft}.
Here, the {\em customer} may be an individual developer or researcher, or an organization such as a company or research lab that rents a pool of GPUs to serve a stream of ML training jobs that are submitted over time.
The customer is charged on a pay-for-what-you-use basis according to a set fee per GPU-Hour of usage.
The customer may increase or decrease their rental demands dynamically over time to both suit the performance needs of their workloads and control costs.

\subhead{The cost-performance tradeoff.}
Renting additional GPUs may decrease \emph{job completion times} (JCTs), but can also increase the time-average cost paid by the customer.
Because cloud-based GPU instances can be rented or returned on a timescale of 1 - 2 minutes, customers can scale their clusters up or down in response to changes in the current workload.
This raises the question of what rental policy a customer should employ to scale their cluster in real time and balance a tradeoff between achieving good performance and limiting the costs paid to the cloud provider.
We note several key {\em challenges} that complicate the cost-performance tradeoff:


\subsubsection*{C.1: Training jobs receive sublinear speedup.}
Modern ML training jobs can be parallelized across multiple GPUs to complete training more quickly.
However, the benefits of parallelism are tempered by a combination of sequential bottlenecks, synchronization overheads, and statistical inefficiency (see Section \ref{sec:bg:train}).
As a result, training jobs generally receive a diminishing marginal benefit from running on additional GPUs.
This effect is commonly measured via a job's \emph{speedup function}, $s(k)$, which describes how fast a job makes progress when training a model on $k$ GPUs. 
Due to the aforementioned limits to parallelism, a training job's speedup function is generally sublinear and concave.

\begin{figure}[t]
    \centering
    \begin{subfigure}{0.48\linewidth}
        \centering
        \includegraphics[width=\linewidth]{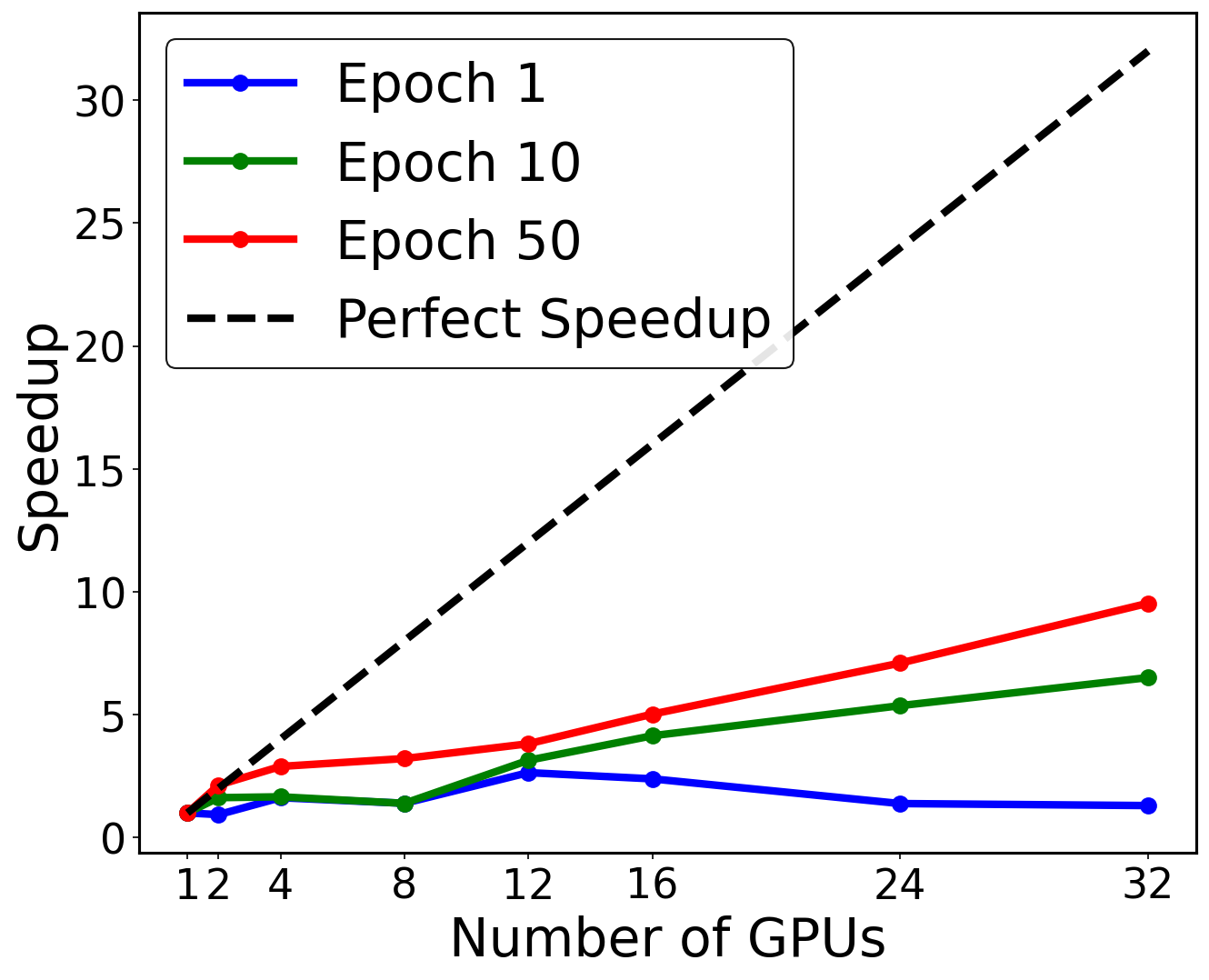}
        \caption{Speedup functions}
    \label{fig:speedup:changing}
    \end{subfigure}
    \hfill
    \begin{subfigure}{0.48\linewidth}
        \centering
        \includegraphics[width=\linewidth]{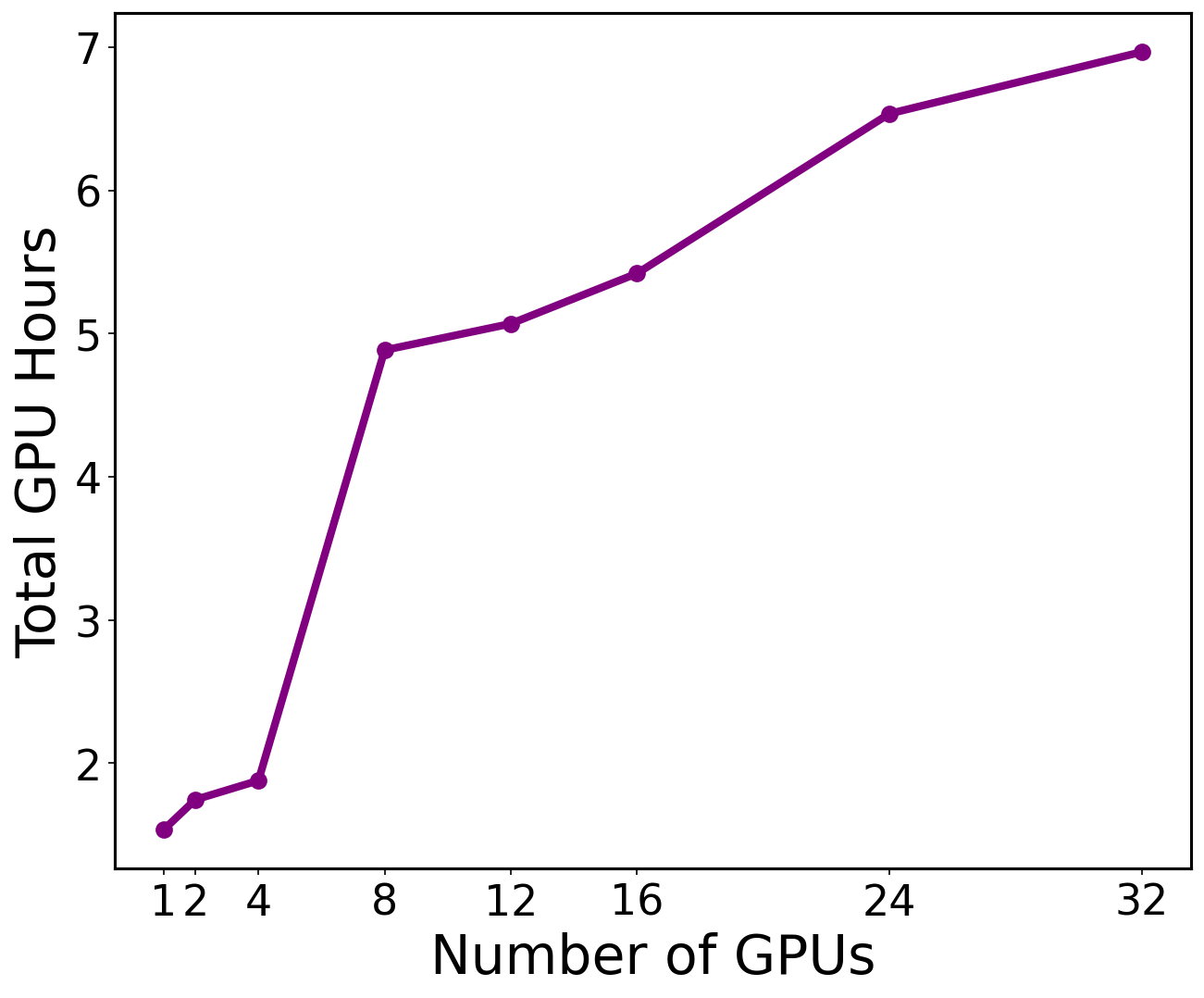}
        \caption{Training cost}
    \label{fig:usage}
    \end{subfigure}
    \caption{Performance of a Cifar10 training job on different numbers of GPUs. Changes to job hyperparameters over the course of training change the job's speedup function.  Because the speedup functions are sublinear, training costs increase when the job runs on a larger number of GPUs.}
\end{figure}
Sublinear speedup functions have a major impact on the cost-performance tradeoff.
If a job receives a linear speedup, it can run on $k$ GPUs and complete in $\frac{1}{k}$th the time, meaning that its total GPU usage remains constant.
In this case, the cost of training is independent of the rental policy, and every job should run on as many GPUs as possible.
Because speedup functions are sublinear in practice, a job run on $k$ GPUs will complete in $\frac{1}{s(k)}$ as much time, leading to a factor $\frac{k}{s(k)}$ \emph{increase} in the number of GPU-hours required to complete the job (Figure~\ref{fig:usage}).
Hence, the choice of rental policy can have a massive impact on both cost and performance.

\subsubsection*{C.2: Training jobs and training workloads change over time.}
Balancing the cost-performance tradeoff for cloud-based ML training is particularly challenging because the problem is dynamic along several dimensions.
First, a job's speedup function may change over the course of training (see Figure~\ref{fig:speedup:changing}) as the optimal choice of job hyperparameters changes (see Section \ref{sec:bg:train}) \cite{qiao2021pollux,zhang2025rubick,subramanya2023sia}. 
Second, a customer does not deal with a single job, but a stream of training jobs that are submitted over time. Each job may have a different speedup function that changes as the job runs.
The composition of the training workload may also shift over time.
Third, job arrivals can exhibit significant bursts.
We confirm that real-world traces exhibit both bursts and composition changes over time in Section \ref{sec:bg:workload}.
As a result, rental policies designed to optimize the execution of a single job (see Section \ref{sec:bg:prior}) do not generalize to handle a stream of jobs. 


\subsubsection*{C.3: Existing cluster schedulers fall short.}
Broadly speaking, balancing the cost-performance tradeoff involves two principal decisions that must be made in real time:
\begin{enumerate}[leftmargin=*]
\item How many GPUs to rent from the cloud provider at every moment in time?
\item How to divide the rented GPUs among the training jobs currently in the system?
\end{enumerate}
For example, perhaps the cluster size should grow to avoid queueing when a burst of arrivals occurs, or perhaps GPUs should be shifted away from less-parallelizable jobs to more-parallelizable jobs over time to control costs.

As detailed in Section \ref{sec:bg:prior}, {\em existing schedulers address at most one of the above questions}, with the majority of existing cluster schedulers focusing on fixed-size clusters.
Prior approaches have the added drawback that they generally make use of scheduling heuristics aimed at maximizing \emph{cluster-wide efficiency} (see Section \ref{sec:bg:prior}).
We show that efficiency-based heuristics are provably suboptimal and perform poorly in practice compared to theoretically-grounded policies.

\subsubsection*{Our Solution: \boa.}

This paper presents \emph{\boa}, a scheduling framework for cost-efficient cloud-based training of ML models.
To control costs, the customer provides \boa with an operating \emph{budget}, the long-run average number of GPU-hours they wish to expend per hour.
Given a stream of training jobs, at every moment of time, \boa determines (i) the overall cluster size, and (ii) how to allocate the rented GPUs to each job in the system.
Based on the customer's budget and the overall workload composition, \boa minimizes the \emph{average job completion time} (average JCT) across jobs.
For example, the customer might aim to spend \$10,000 per month on training, which is equivalent to maintaining an average cluster size of 40 GPUs.
Over the course of a day, \boa will adjust the cluster size and job allocations. Note that both of these rescaling actions come with overheads, so \boa must limit the frequency of these changes.

The core of \boa is our \emph{Budget-Optimal Allocation} policy (BOA), which minimizes average JCT subject to a budget constraint under a very general system model.
Our BOA policy exposes the entire Pareto frontier between cost and performance,
allowing the customer to optimally balance the cost-performance tradeoff according to their own preferences (see Figure \ref{fig:intro}).

\subsection*{Contributions}
The contributions of this paper are as follows:
\begin{itemize}[leftmargin=*]
\item We formally model the \emph{GPU rental problem} in which a cloud service customer has a budget on the long-run average number of GPU-hours consumed per hour.
The customer aims to minimize the average JCT (or weighted average JCT) across a stream of jobs while adhering to the budget. 
Our stochastic model (Section \ref{sec:model:problem}) makes very mild distributional assumptions, allowing us to accurately model real-world systems.
\item We solve the GPU rental problem in Section \ref{sec:BOA}.
Our results handle both homogeneous (Section \ref{sec:theory:hmo}) and heterogeneous (Section \ref{sec:hetero}) clusters.
We refer to our optimal policy as the Budget Optimal Allocation (BOA) policy.

\item We implement \boa, a cluster scheduler that leverages the BOA policy, using AdaptDL \cite{adaptdl} (Section \ref{sec:design}).
Using both our implementation and detailed ML training simulators, we evaluate \boa on production traces (Section \ref{sec:eval}), comparing it to several state-of-the-art cluster schedulers.
While existing systems mainly target fixed-size clusters, we also develop autoscaling variants of each system to compare against \boa.

\item For a given budget, \boa improves average JCT by up to a factor of ${\sim}1.75\times$ in implementation (Figure~\ref{fig:intro}). 
Additionally, for a given average JCT, \boa reduces the budget needed to achieve this average JCT by up to a factor of ${\sim}2.2\times$.
\boa balances the cost-performance tradeoff by using our theoretical results to minimize queueing time and limit job preemption/rescaling instead of relying on the intuitive scheduling heuristics used by state-of-the-art systems.
\end{itemize}

\section{Background}
\label{sec:bg}
This section describes the current state-of-the-art in training ML models in the cloud.
We describe how modern ML training jobs can be dynamically configured to leverage multiple GPUs (Section \ref{sec:bg:train}), why training workloads require dynamic scheduling and autoscaling policies (Section \ref{sec:bg:workload}), and how state-of-the-art approaches for scheduling training jobs fall short both in theory and in practice (Section \ref{sec:bg:prior}), revealing the need for a theoretically-grounded policy that optimizes the cost-performance tradeoff as a primary concern.

\subsection{Distributed Training in the cloud}\label{sec:bg:train}
To accelerate the ML training process, training jobs are typically distributed across multiple GPUs in parallel.
However, making efficient use of a set of GPUs to accelerate training requires careful configuration of the training job. 
\subsubsection*{Exploiting parallelism in distributed training.}
A single job can exploit parallelism in two central ways.
First, \emph{data parallelism} involves replicating a full model across many GPUs and dividing batches of training data across these replicas to process a higher number of samples in parallel per unit time.
Second, \emph{model parallelism} involves spreading parts of a single model instance across multiple GPUs, either by partitioning different layers of the model onto separate GPUs (pipeline parallelism) or splitting a single tensor of model weights across multiple GPUs (tensor parallelism).

This paper mainly considers data parallel jobs, but we consider other modes of parallelism in Section \ref{sec:eval:setup}.
This choice eases our comparison with the prior work.
We track a job's training progress via its \emph{accuracy}, which we measure as the training loss with respect to a customer-defined loss function.
We consider a job to be complete when it reaches a desired level of accuracy.
As described in \cite{qiao2021pollux}, data-parallel training jobs experience two main overheads from parallelization.
First, these jobs periodically synchronize with a parameter server to compute gradients.
This limits the job's \emph{throughput}, the number of training examples processed per second.
Second, computing gradients on higher numbers of GPUs in parallel requires larger batches of data to be processed between gradient updates, reducing the training \emph{efficiency}, defined as the accuracy gained per training sample \cite{qiao2021pollux}.
Combined, these effects give rise to the sublinear speedup functions measured for the data parallel jobs in Figure \ref{fig:speedup:changing}.  
Similar overheads produce sublinear speedup functions when different modes of parallelism are used \cite{zhang2025rubick}.


\subsubsection*{Hyperparameter selection.}
Training jobs expose a vast array of \emph{hyperparameters} that can be tuned to control the performance characteristics of the training process.
These hyperparameters range from training algorithm parameters like learning rate and batch size \cite{qiao2021pollux} to higher-level decisions such as the mode of parallelism \cite{zhang2025rubick} or the type of model being trained.
The choice of hyperparameters can significantly affect the speed of training \cite{li2018hyperband}, and 
finding the optimal configuration is known to be difficult.
Nonetheless, techniques such as Bayesian optimization and multi-armed bandit theory have been used to explore the hyperparameter space and efficiently find good hyperparameter configurations.

Recent works have noted that the choice of a degree of parallelism affects the optimal choices of other hyperparameters (e.g., optimal batch size depends on the degree of parallelism \cite{qiao2021pollux}).
Several papers suggest intertwining the hyperparameter search process with the scheduling algorithm that chooses the degree of parallelism for each job \cite{qiao2021pollux,subramanya2023sia,zhang2025rubick,misra2021rubberband,dunlap2021elastic}.
We observe that this work tends to leverage highly sophisticated hyperparameter search techniques in combination with overly simplistic scheduling heuristics.
This suggests that there is significant room for improvement solely by optimizing the scheduling aspect of these systems.

\subhead{Our Approach.}
Our approach is to decouple scheduling from the hyperparameter search.  
We assume that for every job, a hyperparameter search has already discovered a good configuration to use with each possible number of GPUs.
That is, the speedup functions considered in our model (Section~\ref{sec:model:problem}) represent the speedup a job receives given the best known set of hyperparameters for that degree of parallelism. 
We show in Section \ref{sec:eval} that our approach is general enough to improve performance across several systems that make different choices at lower levels in the system.
That is, our work is \emph{complementary} to prior work on optimizing distributed training and hyperparameter selection.
We show that \boa is beneficial when different hyperparameter tuning strategies and different modes of parallelism are used.


\subsection{Workloads}\label{sec:bg:workload}
To understand why modern training workloads can benefit so greatly from improved scheduling on flexibly-sized clusters, we note that 
modern workloads like 
\cite{qiao2021pollux,subramanya2023sia} exhibit notable variability along several dimensions:

\subhead{Job running times are highly variable.} Given a fixed number of GPUs, the time required to complete a training job can vary by more than an order of magnitude depending on the model being trained and the level of accuracy required. Improved scheduling policies are known to reduce queueing time when job running times are highly variable \cite{harchol2013performance}.

\subhead{Job arrivals are bursty.} Jobs experience several bursts of arrivals over time.  Leveraging cluster autoscaling can reduce the effects of variability in the arrival process \cite{gandhi2012autoscale}.

\subhead{A job's speedup function changes over time.}  A job's speedup function depends on the job's throughput and efficiency.  As noted in \cite{qiao2021pollux}, job efficiency tends to be lower at the beginning of training.  Hence, job speedup functions tend to shift upwards over the course of training.  

For a full description of the workloads used in our evaluation, see Section \ref{sec:eval:setup}.
These observations about training workloads suggest that systems could benefit from more complex, theoretically grounded allocation policies.  

\subsection{Prior Work: Current Schedulers Fall Short}\label{sec:bg:prior}
\subhead{Prior theoretical work.}
Prior theoretical work has separately considered the problem of scheduling parallelizable jobs \cite{berg2018,berg2020hesrpt,berg2022case,berg2020optimal} and autoscaling \cite{gandhi2012autoscale,gandhi2011distributed,psychas2022theory}, but we are not aware of any work which considers both problems {\em simultaneously} to balance the cost-performance tradeoff.
By developing a stochastic model of this problem and solving for the BOA policy in Sections \ref{sec:model:problem} and \ref{sec:BOA}, we provide a novel theoretical contribution on how real-world systems can benefit by optimally balancing the cost-performance tradeoff.

Prior systems fall broadly into one of three categories.
\subsubsection*{Approach 1: Reservation-based systems.}
Reservation-based schedulers like Ray \cite{moritz2018ray} and Tiresias \cite{gu2019tiresias} and others \cite{hu2023lucid,xiao2020antman} ask the customer to specify the GPU requirements of each of their jobs.  For these systems, the goal is to provide the resources demanded for each job as quickly as possible.  To this end, Ray uses very simple heuristics related to data locality and load balancing, but does not explicitly aim to minimize the average JCT across jobs.  Tiresias recommends using the Gittins index as a scheduling heuristic, ignoring the fact the Gittins index is suboptimal when scheduling multiserver jobs outside of very heavy load  \cite{scully2020gittins}.  
Ray allows autoscaling, while Tiresias does not, but all reservation-based systems inherently defer the cost-performance tradeoff to the customer and provide no decision support in how to set GPU requirements.  Moreover, relying on customer-generated reservations also prevents the system from changing job allocations to respond to changes in a job or the overall workload.

\subsubsection*{Approach 2: Policy-based allocation with a fixed cluster size}
A wide variety of recently proposed systems try to improve performance by a combination of hyperparameter tuning and dynamic GPU allocation \cite{zhang2025rubick,qiao2021pollux,subramanya2023sia,zheng2023shockwave,li2023easyscale,gu2023elasticflow,mahajan2020themis,narayanan2020heterogeneity,le2020allox,peng2018optimus,hu2023lucid}.
While approaches differ in exactly how job configurations are tuned, which heuristics are used, and which performance metrics are optimized (e.g., makespan, JCT, or fairness), these systems are similar in their high-level goals of choosing GPU allocations to optimize overall performance across a stream of training jobs.
All prior work only considers a fixed-size cluster (no autoscaling) (except for \cite{qiao2021pollux}, which will be discussed in Approach 3).

This paper considers minimizing the \emph{average job completion time} (JCT) across a stream of training jobs that arrive to the system over time.
Among the above works that optimize for JCT, we note that Pollux \cite{qiao2021pollux}, Rubick \cite{zhang2025rubick}, and Sia \cite{subramanya2023sia} represent the state of the art.
\footnote{Lucid \cite{hu2023lucid} can outperform Pollux, but only at very high loads.}
We therefore compare our work against these three systems in Section \ref{sec:eval} to show that our budget-optimal allocation policy for a cloud-based cluster greatly outperforms these fixed-cluster systems.


\subsubsection*{Approach 3: Autoscaling for a single job}
Work on autoscaling to balance the cost-performance tradeoff has focused on the setting of running a single job \cite{tyagi2023scavenger,qiao2021pollux}. 
We note that these approaches do not generalize to handling a stream of arriving jobs.
Specifically, in Section \ref{sec:eval} we develop augmented versions of Pollux, Sia, and Rubick based on the target efficiency autoscaling mechanism proposed for a single job in \cite{qiao2021pollux}.
These autoscaling variants significantly underperform our solution, \boa.

\section{Our Model}
\label{sec:model:problem}

To minimize notation, we largely limit our description of the model to the case of a homogeneous cluster; we introduce additional notation when we get to the case of a heterogeneous cluster in Section \ref{sec:hetero}.
We model the GPU rental problem from the perspective of a \emph{customer} who submits a stream of ML training jobs to be run in the cloud.
Unlike prior work, which makes highly restrictive assumptions such as a Poisson arrival process, our model makes very few assumptions about the workload and system.
This allows our results to apply across a wide range of real-world scenarios.

\subhead{Training jobs.}
We abstractly view each training job as being associated with some \emph{inherent work} that quantifies the statistical progress required for the model to reach a desired level of accuracy.
The customer has $M$ classes of training jobs, each corresponding to a different combination of model and training data source.
We use random variable $X_i$ to denote the inherent work of a type-$i$ job. The inherent work of different jobs is \emph{not necessarily independent.} 

Each training job has an associated speedup function, $s(k)$, that specifies the {\em rate} at which inherent work is processed when the job runs on $k$ GPUs. As noted in Section~\ref{sec:bg:train}, we take $s(k)$ to be the \emph{best} rate attainable on $k$ GPUs after optimizing over the job's hyperparameters (e.g., batch size, learning rate, parallelism strategy) and the placement of those GPUs across physical nodes. This decouples the allocation decision from hyperparameter selection and placement, which are handled by orthogonal mechanisms in our system (see Section~\ref{sec:design}).

A job's speedup function may vary as training progresses.   Following~\cite{qiao2021pollux}, we capture this by dividing each type-$i$ job into $\NumEpoch{i}$ statistical epochs, where epoch $j\in\{1,\ldots,\NumEpoch{i}\}$ has random inherent work $X_{ij}$,  with finite mean $\E[{X_{ij}}]$, and speedup function $s_{ij}(k)$. Let
$X_i=\sum_{j=1}^{\NumEpoch{i}} X_{ij}$.  A type-$i$ job run on $k$ GPUs during epoch $j$ completes in time $X_{ij}/s_{ij}(k)$.

For each class $i$, we let $\base_i$ denote the minimum number of GPUs on which a class-$i$ job can run; the speedup function $s_{ij}(k)$ is defined on $k\in[\base_i,\infty)$.
This captures the resource requirements of jobs that cannot fit onto a small number of GPUs (e.g., $\base_i=8$ for Large Language Model fine-tuning). 
For ease of analysis, we allow the number of GPUs allocated to a job to be fractional; prior work~\cite{10.1145/3641512.3686370} shows that fractional allocations can be rounded to integral allocations without significant loss of efficiency.\footnote{The BOA policy we derive can be easily adapted to handle physically partitioned GPUs \cite{awsMIG} (e.g., NVIDIA Multi-instance GPUs), but evaluating \boa under these conditions is outside the scope of this paper.}

\subhead{Arrival process and load.}
Our system processes a stream of jobs that arrive at a long-run average rate of $\lambda$ jobs/sec.
Type-$i$ jobs arrive at rate $\lambda_i$ jobs/sec.
We define the \emph{system load} contributed by type-$i$ jobs to be $\rho_i=\lambda_i\,\E[X_i]$.
Intuitively, $\rho_i$ represents the long-run average rate at which type-$i$ work is submitted to the system.
To describe the load contributed by individual job epochs, we define $\rho_{ij}=\lambda_i\,\E[X_{ij}]$ to be the load of the $j$th epoch of the $i$th job type, so $\rho_i=\sum_j\rho_{ij}$.  

\subhead{Rescaling.}
Changing a job's GPU allocation incurs \emph{rescaling overhead}. This is the time required to checkpoint the job, set up the runtime environment, download container images and training data, and resume the job on the new GPUs.
We denote the rescaling time of a class-$i$ job by a random variable $R_i$ with mean $\E[R_i]=r_i$.

\subhead{Cost-performance tradeoff.}
Our goal is to design \emph{allocation policies} that trade off cost and performance by renting GPUs and assigning them to jobs over time.

In terms of {\em performance}, this paper optimizes Job Completion Time (JCT), the time from when a job arrives until it completes.
Let $T_i$ denote the JCT of a type-$i$ job and $\E[T_i]$ its average.
We typically optimize $\E[T]$, the overall {\em average JCT} across a stream of arriving jobs, where $\E[T]$ is given by 
$$\E[T]=\sum_{i=1}^M \frac{\lambda_i}{\lambda}\E[T_i].$$
Our results easily extend to any weighted average JCT, where each class is assigned a weight to capture fairness or the relative importance of different jobs.
Although we do not explicitly optimize for tail latency, the evaluation in Section~\ref{sec:eval} shows that \boa also improves P95-JCT.

We measure {\em cost} in monetary units (dollars).  In the case of homogeneous GPUs, we assume that each GPU is rented for a constant cost, so the dollar cost is equivalent to the number of GPU-hours used. We denote the time-average cost by $\bar{K}$.
In the heterogeneous setting of Section~\ref{sec:hetero}, different GPU types have different dollar-per-hour rates.


{\em The Cost-performance tradeoff}:
Given a customer's time-average operating budget of $b$ GPUs, the goal is to find a policy $\pi$ that minimizes the average JCT subject to $\bar{K}\le b$.

\subhead{Modeling assumptions.}
We have made three incredibly general modeling assumptions, which we summarize here.

First, for the stochastic quantities in the system---interarrival times, inherent work, and rescaling times---we assume that the time averages for sequences of each quantity converge to some finite mean with probability $1$.
We assume these averages are known to the customer.
We make no other distributional assumptions. 
Notably, we allow correlations between the inherent work of different jobs.  Furthermore, we allow correlations between interarrival times.  
This allows us to model bursty, non-stationary workloads.

Second, for each type $i$ and epoch $j$, the speedup function $s_{ij}(k)$ should fulfill the following properties: (1) $s_{ij}(k)$ is defined and continuous on $k\in[\base_i,+\infty)$; (2) \emph{Monotonicity}: $s_{ij}(k_1)\le s_{ij}(k_2)$ for any $\base_i\le k_1<k_2$; (3) \emph{Concavity}: $s_{ij}(k_1)/k_1\ge s_{ij}(k_2)/k_2$ for any $\base_i\le k_1<k_2$. In practice, naively measured speedup functions like those in Figure~\ref{fig:speedup:changing} may violate the latter two assumptions. However, we can always remedy this by considering the monotonic concave envelope of the speedup function. Prior work~\cite{10.1145/3641512.3686370} has shown that one can achieve performance equivalent to this concave envelope without incurring additional rescaling by alternating between allocations on the envelope over time.  
We follow this approach to handle non-concave speedup functions in our implementation (see Algorithm~\ref{alg:boa_width_calculator}).

Third, we assume the budget is high enough to keep up with the stream of training jobs. 

\section{Budget-Optimal Allocation}
\label{sec:BOA}

The goal of BOA is to allocate GPUs to minimize (weighted) average job completion time, $\E[{T}]$, while keeping the time-average cost under a budget, $b$.  Before we formally define BOA, we first develop some intuition for BOA in the simplest case of homogeneous clusters without rescaling overheads.

There are many intuitions that are popular in the scheduling theory literature, such as ``finish short-running jobs first'' \cite{berg2020hesrpt}, or ``prioritize the less parallelizable jobs'' \cite{berg2024asymptotically}. There are also other heuristics used in prior systems (discussed in detail in Section \ref{sec:eval}), such as ``maximize total speedup across jobs''\footnote{This is similar to the concept of \emph{water-filling} \cite{tse2005fundamentals}.} \cite{qiao2021pollux,subramanya2023sia}, or ``autoscale based on cluster efficiency'' \cite{qiao2021pollux}.
It turns out that all these intuitions are suboptimal for our problem setting.  Instead, we prove that the {\em optimal policy} has two somewhat counter-intuitive structural properties: 
\begin{enumerate}[leftmargin=*]
    \item \textbf{No job queues under the optimal policy (Lemma~\ref{lemma:no queue})}.  Forcing jobs to queue may reduce the \emph{instantaneous} number of GPUs used, but it does not reduce the long-run average GPU usage.  Hence, queueing is suboptimal because it worsens JCTs without improving budget usage.
    \item \textbf{The optimal policy is a {\em fixed-width} policy in that the number of GPUs allocated to a job depends only on its current speedup function.  Specifically, type-$i$ jobs in the same epoch $j$ are allocated the same number of GPUs, $k_{ij}$ (Lemma~\ref{lemma:fw}).}  Because speedup functions are concave, a job makes the best use of a fixed number of GPU-hours by running on a constant number of GPUs rather than changing allocations.  This phenomenon also applies across jobs of the same type/epoch.
\end{enumerate}
Both properties are proved in Appendix~\ref{app:offline}. Together they pin down the behavior of the optimal policy: every job is allocated some number of GPUs immediately upon arrival, and its allocation changes only when the job enters its next epoch. New GPUs are requested or released only in response to meeting the requirements of each class/epoch.  
We refer to this type of policy as a \emph{fixed-width policy}. 
\begin{definition}[Fixed-width policy]
\label{def:fixedwidth}
A \emph{fixed-width} policy is parameterized by a set of constants $\{k_{ij}\}$: It assigns $k_{ij}$ GPUs to every type-$i$ job for the entire duration of epoch $j$, without queueing.
\end{definition}

\begin{definition}[BOA policy]
\label{def:boa}
The BOA policy is defined to be the best fixed-width policy.
That is, for any budget $b$, the BOA policy is the fixed-width policy that chooses the values of $k_{ij}$ to minimize (weighted) average JCT subject to $b$.
\end{definition}

We now show that the BOA policy is optimal in both homogeneous (Section~\ref{sec:theory:hmo}) and heterogeneous (Section~\ref{sec:hetero}) clusters when there are no rescaling overheads.  We adapt the BOA policy to handle rescaling overheads in Section \ref{sec:theory:rescaling}.

\subsection{BOA in Homogeneous Clusters}
\label{sec:theory:hmo}



\begin{theorem}[Optimality of BOA]
\label{thm:fw-optimal}
In the homogeneous setting without rescaling overheads, the budget-optimal allocation (BOA) policy is optimal and can be computed by solving the optimization problem:
\begin{equation*}
\begin{aligned}
& \underset{\{k_{ij}\}}{\mbox{minimize}}
& & \textit{average JCT} = \E[T] =\frac{1}{\lambda} \sum_{i} \lambda_i \sum_{j} \left(\frac{\E[X_{ij}]}{s_{ij}(k_{ij})} \right)  \\
& \mbox{subject to}
& & \textit{operating budget}=\sum_{i} \lambda_i \sum_{j} k_{ij}\left(\frac{\E[X_{ij}]}{s_{ij}(k_{ij})} \right) \le b, \\
& & & k_{ij}\ge \base_i.
\end{aligned}
\end{equation*}
This is a convex optimization problem. 
\end{theorem}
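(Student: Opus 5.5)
The proof has three parts. First we reduce to fixed-width policies, then we evaluate a fixed-width policy in closed form, and finally we establish convexity.

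\textbf{Reduction to fixed-width policies.} We invoke Lemma~\ref{lemma:no queue} and Lemma~\ref{lemma:fw}, stated in Section~\ref{sec:BOA} and proved in Appendix~\ref{app:offline}. Together they say that for any admissible policy $\pi$ with $\bar{K}\le b$ there is a fixed-width policy with parameters $\{k_{ij}\}$ whose average JCT is no larger and whose time-average cost is no larger. The optimum is therefore attained within the class of fixed-width policies. By Definition~\ref{def:boa}, BOA is the best policy in that class, so BOA is optimal. It remains to show that the objective and the constraint of a fixed-width policy are exactly the displayed expressions, and that the resulting program is convex.

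\textbf{Closed-form performance and cost.} Under a fixed-width policy there is no queueing. A type-$i$ job in epoch $j$ therefore runs on exactly $k_{ij}$ GPUs from the moment it enters the epoch and spends time $X_{ij}/s_{ij}(k_{ij})$ there. Hence $T_i=\sum_j X_{ij}/s_{ij}(k_{ij})$ pathwise, and taking expectations and weighting by $\lambda_i/\lambda$ gives the objective.

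For the cost, each such epoch consumes $k_{ij}X_{ij}/s_{ij}(k_{ij})$ GPU-hours. We write the total GPU-hours used up to time $t$ as the sum of these per-job quantities over jobs that arrived by $t$, plus a boundary term from jobs still in service. Under the ergodic-average assumptions of Section~\ref{sec:model:problem}, the number of type-$i$ arrivals by $t$ divided by $t$ converges to $\lambda_i$, and sample averages of the $X_{ij}$ converge to $\E[X_{ij}]$. A renewal-reward / strong-law argument then gives
\[
\bar{K}=\sum_i\lambda_i\sum_j k_{ij}\,\E[X_{ij}]/s_{ij}(k_{ij})
\]
with probability $1$. No independence is needed, since only Cesàro averages appear. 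The step needing care is showing that the boundary term is $o(t)$. I would handle it with the finite-mean assumption, which makes the residual work of in-progress jobs sublinear, together with the stability assumption that the budget is sufficient. This is the main technical obstacle, and the same kind of argument presumably already appears in the proofs of the lemmas.

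\textbf{Convexity.} Substitute $x_{ij}=k_{ij}/s_{ij}(k_{ij})$, the GPU-hours per unit of work. By the concavity assumption (3), $s(k)/k$ is nonincreasing, so $x_{ij}$ is nondecreasing in $k_{ij}$, and the budget constraint becomes linear in the variables $x_{ij}$. The objective term $1/s_{ij}(k)$ is nonincreasing in $k$. Writing it as a function of $x$, namely $g_{ij}(x)=1/s_{ij}(k(x))$, the key claim is that $g_{ij}$ is convex. Geometrically, $g_{ij}(x)$ is the time to process a unit of work when spending $x$ GPU-hours per unit of work. Convexity of this time--usage tradeoff follows from the argument behind Lemma~\ref{lemma:fw}: mixing two allocations in time, i.e.\ taking the concave envelope, achieves any convex combination of (time, usage) pairs, and the optimal frontier is by definition the lower boundary of this achievable region. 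That region is convex, so its lower boundary is a convex function.

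The problem in the variables $x_{ij}$, with the linear budget constraint and the bounds $x_{ij}\ge \base_i/s_{ij}(\base_i)$, is therefore convex. Since the map $k\mapsto x$ is monotone, optima correspond between the two parametrizations, which completes the proof.
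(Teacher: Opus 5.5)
Your proof is correct. The reduction and the closed-form evaluation are the paper's own steps: you cite Lemmas~\ref{lemma:no queue} and~\ref{lemma:fw}, and the evaluation matches Lemmas~\ref{lemma:budget of fixed width} and~\ref{lemma:policybest-offline}. The paper phrases the reduction offline, because the no-queueing construction is non-causal. It then lifts to the online setting by noting that BOA's widths depend only on $\{\rho_{ij}\}$. Your pathwise phrasing does this lift implicitly, since fixed-width policies are online-implementable and have the same cost and JCT on every well-behaved path.

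You are more careful than the paper about the in-service boundary term, which its step (a) simply asserts. No stability assumption is needed to close it. Under a fixed-width policy, $B_\ell$ and $T_\ell$ are fixed linear combinations of the $x^{(\ell)}_{ij}$, so Ces\`aro convergence gives $\max_{\ell\le n(t)}T_\ell=o(t)$. Sandwiching $\int_0^tK$ between the GPU-hours of jobs arriving by $(1-\epsilon)t$ and those arriving by $t$ then finishes the argument.

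The genuinely different step is convexity. The paper sets $z_{ij}=1/s_{ij}(k_{ij})$, which makes the objective linear; the budget term $z\,\beta_{ij}(1/z)$, with $\beta_{ij}=s_{ij}^{-1}$ convex, is shown convex by a second-derivative computation. You set $x_{ij}=k_{ij}/s_{ij}(k_{ij})$, which makes the budget linear and the objective the time--usage frontier. The two are inverse parametrizations of the same curve: the paper's map $z\mapsto z\,\beta_{ij}(1/z)$ is convex and decreasing, and the inverse of such a function is convex. So both routes are valid. The paper's gives the explicit recovery $k_{ij}=s_{ij}^{-1}(1/z_{ij})$ but, as written, needs twice-differentiability. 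Yours needs no smoothness and gives a polyhedral feasible set, at the price of a less explicit inverse of $k\mapsto k/s_{ij}(k)$ (take the largest $k$ where this map is flat).

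One sentence in your convexity step needs tightening. The lower boundary of the time-sharing region is indeed convex, but your $g_{ij}$ is defined by \emph{pure} allocations. To conclude that the pure curve is that boundary, you need the Jensen step of Lemma~\ref{lemma:fw}: a mixture of $k_1$ and $k_2$ is weakly dominated, in both time and GPU-hours, by the constant time-averaged allocation. This does not hold ``by definition.''
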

\begin{proof}
In Appendix \ref{app:offline}, we provide a slightly more formal statement of Theorem \ref{thm:fw-optimal} (Theorem \ref{thm:hmo:summary}) and then prove this theorem via a series of lemmas.

Roughly, we prove that the optimal policy is a fixed-width policy and compute its optimal ``widths'', $k_{ij}$.
Under a fixed-width policy with parameters $\{k_{ij}\}$, every type-$i$ job in epoch $j$ runs on $k_{ij}$ GPUs. This epoch completes in expected time $\E[{T_{ij}}] = \E[X_{ij}]/s_{ij}(k_{ij})$.
Summing over classes and epochs and weighting by arrival rates yields the average JCT:
\begin{eqnarray}
\E[T] & = &\frac{1}{\lambda}\sum_{i}\lambda_i\sum_{j} \E[{T_{ij}}] = \frac{1}{\lambda}\sum_{i}{\lambda_i}\sum_{j}\frac{\E[X_{ij}]}{s_{ij}(k_{ij})}\;
\label{eqn:performance}
\end{eqnarray}
Similarly, every type-$i$ job in epoch $j$ holds $k_{ij}$ GPUs for time $\E[{T_{ij}}]$, consuming $k_{ij} \cdot \E[{T_{ij}}]$ GPU-hours.  This yields the time-average operating budget:
\begin{eqnarray}
   \sum_{i}\lambda_i\sum_{j}k_{ij} \E[{T_{ij}}] =
\sum_{i}\lambda_i\sum_{j}k_{ij}\,\frac{\E[X_{ij}]}{s_{ij}(k_{ij})}\;
\label{eqn:operatingbudget}
\end{eqnarray}
Choosing $\{k_{ij}\}$ to minimize the average JCT subject to the operating budget being at most $b$ then reduces to a constrained convex optimization problem which can be solved efficiently via numerical methods.
\end{proof}



\subsection{BOA in Heterogeneous Clusters}
\label{sec:hetero}
We now extend BOA to the heterogeneous setting. Here we consider $H$ different clusters, each composed of a different type of GPU.
A customer can rent GPUs from any cluster, but GPUs from each cluster have distinct performance profiles and rental costs. Following~\cite{subramanya2023sia}, we make the mild assumption that each job runs on a single type of GPUs at a time. 

For each class $i$, epoch $j$, and GPU type $h$, the speedup $s_{ij}^{(h)}(k)$ denotes the rate at which inherent work is processed on $k$ type-$h$ GPUs; $c^{(h)}$ denotes the per-hour rental cost of one type-$h$ GPU; $\base_i^{(h)}$ denotes the minimum number of type-$h$ GPUs required to run a class-$i$ job.

Theorem~\ref{thm:hetero-fw-optimal} generalizes Theorem~\ref{thm:fw-optimal} to heterogeneous clusters.
Roughly, Theorem \ref{thm:hetero-fw-optimal} says that one can find an optimal policy in two steps.  First, the customer must decide which epochs of which jobs get sent to each cluster.
This is determined by picking $p_{ij}^{(h)}$, which we define to be the fraction of type-$i$ jobs in epoch $j$ that run on cluster $h$.
Second, each cluster examines its assigned workload and solves for the optimal fixed-widths.
Let $\{k_{ij}^{(h)}\}$ denote the number of type-$h$ GPUs assigned to each class-$i$, epoch-$j$ job.
We solve both of these steps via a single convex optimization problem.

\begin{theorem}[BOA in heterogeneous systems]
\label{thm:hetero-fw-optimal}
In the heterogeneous setting without rescaling overheads, the budget-optimal allocation (BOA) policy is optimal and can be computed by solving the optimization problem: 
\begin{align*}
&\underset{\{k^{(h)}_{ij}, p_{ij}^{(h)}\}}{\mbox{minimize}} & &\mbox{average JCT} \\
& \mbox{subject to} & &\mbox{operating budget} \leq b,\\
& & & k^{(h)}_{ij}\ge \base^{(h)}_i,\\
& & & \sum_{h}p_{ij}^{(h)}=1\quad\forall i,j
\end{align*}
where
\begin{align*}
    \mbox{average JCT} &= \frac{1}{\lambda}\sum_{i,j,h}\!\lambda_i\,p_{ij}^{(h)}\left(\frac{\E[X_{ij}]}{s_{ij}^{(h)}(k_{ij}^{(h)})}\right)\\
    \mbox{operating budget} &= \sum_{i,j,h}c^{(h)}\,\lambda_i\,p_{ij}^{(h)} k_{ij}^{(h)}\!\left(\frac{\E[X_{ij}]}{s_{ij}^{(h)}(k_{ij}^{(h)})}\right).
\end{align*}
This is a convex optimization problem.
\end{theorem}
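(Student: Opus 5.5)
The proof has two parts: reducing an arbitrary policy to a randomized fixed-width policy, which gives optimality, and showing that the resulting program is convex after a change of variables. Throughout I take each $s^{(h)}_{ij}$ to be concave and increasing on $[\base^{(h)}_i,\infty)$. This is what the proof needs, and it holds after passing to the monotone concave envelope as in Section~\ref{sec:model:problem}.

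\emph{Optimality.} Fix an arbitrary policy $\pi$ with time-average cost at most $b$. For each class $i$, epoch $j$ and GPU type $h$, let $w^{(h)}_{ij}$ be the long-run fraction of epoch-$(i,j)$ inherent work that $\pi$ processes on type-$h$ GPUs, so $\sum_h w^{(h)}_{ij}=1$.
\begin{itemize}
\item Restricted to the $(i,j,h)$ work, the setting is homogeneous. I therefore apply the argument behind Lemma~\ref{lemma:no queue} and Lemma~\ref{lemma:fw} separately to each triple.
\item Removing queueing time on that sub-stream does not raise long-run GPU-hours and does not increase any JCT.
\item By concavity (Jensen applied to the time-share of allocations), replacing the varying allocations by a constant width $k^{(h)}_{ij}$ weakly lowers both the time spent and the GPU-hours used on that work.
\item A job that switches GPU type mid-epoch contributes to several triples at once. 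Its time and GPU-hours are additive over the work pieces, so it is no worse than routing a $w^{(h)}_{ij}$ fraction of jobs wholly to type $h$.
\end{itemize}
Setting $p^{(h)}_{ij}=w^{(h)}_{ij}$ then gives a fixed-width randomized routing policy. By the same renewal-reward computation as in the proof of Theorem~\ref{thm:fw-optimal}, its average JCT and operating budget are exactly the displayed expressions, and both are no larger than those of $\pi$. Hence the optimal value of the program is at most the performance of any feasible policy. Conversely, each feasible point of the program is implementable by routing jobs independently with probabilities $p^{(h)}_{ij}$, so BOA is optimal.

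\emph{Convexity.} As written, the program is not jointly convex, because the terms $p\cdot g(k)$ are bilinear. I would reparametrize as follows.
\begin{itemize}
\item Let $t^{(h)}_{ij}=1/s^{(h)}_{ij}(k^{(h)}_{ij})$, so that $k^{(h)}_{ij}=f^{(h)}_{ij}(1/t^{(h)}_{ij})$ with $f^{(h)}_{ij}=(s^{(h)}_{ij})^{-1}$.
\item Let $z^{(h)}_{ij}=p^{(h)}_{ij}t^{(h)}_{ij}$.
\item The average JCT becomes $\frac1\lambda\sum\lambda_i\E[X_{ij}]z^{(h)}_{ij}$, which is linear in $z$.
\item The budget term becomes $c^{(h)}\lambda_i\E[X_{ij}]\,z^{(h)}_{ij}f^{(h)}_{ij}(p^{(h)}_{ij}/z^{(h)}_{ij})$.
\end{itemize}
Since $s$ is concave and increasing, its inverse $f$ is convex. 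The map $(p,z)\mapsto z f(p/z)$ is the perspective of $f$, hence jointly convex in $(p,z)$ for $z>0$. The remaining constraints are linear:
\begin{itemize}
\item $k\ge\base^{(h)}_i$ becomes $z^{(h)}_{ij}\le p^{(h)}_{ij}/s^{(h)}_{ij}(\base^{(h)}_i)$;
\item the simplex constraints $\sum_h p^{(h)}_{ij}=1$, $p\ge0$ are unchanged.
\end{itemize}
The problem is therefore a convex program, and $k$ is recovered from $(p,z)$ whenever $p>0$. The homogeneous case is the special case with $p\equiv1$, which also recovers the convexity claim of Theorem~\ref{thm:fw-optimal}.

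\emph{Main obstacle.} I expect the hardest step to be making the first part rigorous under only the mild time-average assumptions, that is, without independence, with bursty arrivals, and with jobs allowed to switch GPU type. My plan is to work with sample-path long-run averages per $(i,j,h)$ work piece, exactly as in the appendix lemmas, and to rely on the fact that both JCT and GPU-hours are additive across the pieces of a job's epoch.
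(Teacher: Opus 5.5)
Your proposal is correct. For optimality you follow essentially the paper's route; the paper simply says the result ``follows directly'' from Lemmas~\ref{lemma:no queue}, \ref{lemma:fw} and~\ref{lemma:budget of fixed width}, applied per GPU type. There are two differences. First, you argue by direct dominance (Jensen on the long-run time-share of allocations within each triple $(i,j,h)$), not by pairwise exchange against a presumed optimal policy. Second, you supply the heterogeneous bookkeeping the paper leaves implicit. Setting $p^{(h)}_{ij}$ equal to the work fraction $w^{(h)}_{ij}$ is legitimate: once widths are fixed, both JCT and cost are linear in the work sent to each GPU type, so size-dependent routing and mid-epoch type switches can be replaced by independent random routing.

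The convexity half is genuinely different, and in fact more complete. The paper gives only the homogeneous substitution $z_{ij}=1/s_{ij}(k_{ij})$ (Theorem~\ref{thm:convex}) and then asserts that the heterogeneous program is convex. But in the variables $(k,p)$, a term like $p/s(k)$ has a Hessian with determinant $-\bigl((1/s)'(k)\bigr)^2<0$ wherever $s$ is strictly increasing. Substituting $z=1/s(k)$ alone still leaves bilinear objective terms $p\,z$. Your joint substitution $z=p/s(k)$ turns each budget term into the perspective $z\,f(p/z)$ of the convex function $f=s^{-1}$, and this is what actually justifies the claim. At $p\equiv 1$ it reduces to the paper's $z\beta(1/z)$. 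It also does not need $\beta''$ to exist, whereas the paper's second-derivative check does, which fails for piecewise-linear concave envelopes.

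To finish cleanly, use the closure of the perspective at $p=0$ (value $0$ at $(p,z)=(0,0)$). Also include the linear range constraint $z\ge p/d$ with $d=\sup_k s(k)$. The limit-existence issues you flag as the main obstacle are glossed over in the paper as well.
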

Theorem \ref{thm:hetero-fw-optimal} follows directly from the proofs in Appendix \ref{app:offline}.
To implement this policy, each epoch can be randomly routed to cluster $h$ with probability $p_{ij}^{(h)}$.

\subsection{Adding Rescaling Costs to BOA}
\label{sec:theory:rescaling}

In the idealized models, allocations can be changed instantaneously and without overhead.
However, we must evaluate BOA in the realistic setting where changing a job's allocation incurs a \emph{rescaling overhead} (see Sections \ref{sec:design:overheads} and \ref{sec:eval}).
While a job is being rescaled, it consumes GPU resources but accumulates no training progress, increasing both the average JCT and the cost.
Hence, we must adapt our results on BOA to handle rescaling overheads.

We begin in Theorem \ref{thm:fw-rescale} by computing BOA, the optimal fixed-width policy, in a homogeneous cluster with rescaling overheads.

\begin{theorem}[JCT and budget under rescaling]
\label{thm:fw-rescale}
In the homogeneous setting with rescaling overheads, the optimal fixed-width policy, BOA, can be computed by solving the optimization problem
\begin{equation}
\begin{aligned}
& \underset{\{k_{ij}\}}{\mbox{minimize}}
& & \textit{average JCT} \\
& \mbox{subject to}
& & \textit{operating budget}\le b, \\
& & & k_{ij}\ge \base_i
\end{aligned}
\label{eq:online-opt-rescale}
\end{equation}
where
\begin{align*}
\mbox{average JCT} &= \frac{1}{\lambda} \sum_{i} \lambda_i \sum_{j} \left(\frac{\E[X_{ij}]}{s_{ij}(k_{ij})} + r_i\cdot\mathbf{1}_{ij}\right), \\
\mbox{operating budget} &= \sum_{i} \lambda_i \sum_{j} k_{ij}\left(\frac{\E[X_{ij}]}{s_{ij}(k_{ij})} + r_i\cdot\mathbf{1}_{ij}\right),
\end{align*}
and $\mathbf{1}_{ij}=1$ if $k_{ij}\neq k_{i(j-1)}$ or $j=0$, and $0$ otherwise.
\end{theorem}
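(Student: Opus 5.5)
Theorem~\ref{thm:fw-rescale} only claims to characterize the best \emph{fixed-width} policy, so we need not prove global optimality. The plan is: (i) compute, for an arbitrary fixed-width policy with parameters $\{k_{ij}\}$, the long-run average JCT and the long-run average GPU usage in the presence of rescaling; (ii) observe that these are exactly the two expressions in the statement; (iii) conclude that minimizing over $\{k_{ij}\}$ subject to the budget is, by Definition~\ref{def:boa}, the optimization~\eqref{eq:online-opt-rescale}.

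\textbf{Per-job accounting.} Fix $\{k_{ij}\}$ and follow a single type-$i$ job through its $\NumEpoch{i}$ epochs. Under a fixed-width policy there is no queueing, so the job starts immediately on $k_{i1}$ GPUs. I adopt the convention that the initial setup (downloading images and data, launching on the first allocation) counts as a rescaling event; this is the ``$j=0$'' clause of the statement, read as the first epoch. At each later epoch boundary the job is rescaled exactly when $k_{ij}\neq k_{i(j-1)}$; otherwise it continues on the same GPUs with no overhead.

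During epoch $j$ the job therefore spends $R_i\mathbf{1}_{ij}$ time rescaling while holding $k_{ij}$ GPUs. I take the model to mean that a rescale into the new allocation occupies the target GPUs, which is what the statement's budget term $k_{ij}r_i\mathbf{1}_{ij}$ presumes. It then spends $X_{ij}/s_{ij}(k_{ij})$ time processing, since the rate on $k_{ij}$ GPUs is constant. Taking expectations, which only requires the finite means $\E[X_{ij}]$ and $r_i$:
\begin{itemize}
\item the expected time in epoch $j$ is $\E[X_{ij}]/s_{ij}(k_{ij})+r_i\mathbf{1}_{ij}$;
\item the expected GPU-hours used in epoch $j$ are $k_{ij}$ times that quantity.
\end{itemize}
Summing over $j$ gives $\E[T_i]$ and the expected GPU-hours per type-$i$ job.

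\textbf{From per-job quantities to time averages.} Since there is no queueing, a job's JCT is just the sum of its own epoch durations, unaffected by other jobs. So $\E[T]=\frac1\lambda\sum_i\lambda_i\E[T_i]$ follows by averaging over arrivals, using the ergodic-type assumption that empirical averages of work and rescaling times converge. For the cost, apply a renewal-reward / $H=\lambda G$ (Little's law for cost) argument: the time-average number of rented GPUs equals the long-run arrival rate $\lambda_i$ times the mean GPU-hours per type-$i$ job, summed over $i$. This gives the stated operating-budget expression. Because it is a sample-path argument, it holds under the paper's mild assumptions, with correlated arrivals and work allowed.

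\textbf{Main obstacle.} The delicate step is justifying Little's law for cost without independence. The approach is a sample-path version: bound the total GPU-hours by time $t$ between the sum over jobs completed by $t$ and the sum over jobs arrived by $t$. Then show the contribution of jobs in the system at time $t$ is $o(t)$. This follows because each job's cost divided by its arrival index tends to zero, which in turn follows from convergence of the time averages. The third modeling assumption (budget sufficient) ensures the feasible set is nonempty. Unlike Theorem~\ref{thm:fw-optimal}, convexity is not claimed here: the indicators $\mathbf{1}_{ij}$ make the problem combinatorial. A practical solution enumerates the patterns of which consecutive epochs share widths and solves a convex subproblem for each; this is not needed for the statement itself.
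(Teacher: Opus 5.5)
Your proposal is correct and follows essentially the same route as the paper. The paper's proof simply takes the fixed-width JCT and budget formulas underlying Theorem~\ref{thm:fw-optimal} (Lemmas~\ref{lemma:budget of fixed width} and~\ref{lemma:policybest-offline}) and adds the per-epoch rescaling time $r_i\mathbf{1}_{ij}$ to both the objective and the budget, exactly as your per-job accounting does, and your sample-path sandwich argument for $H=\lambda G$ additionally supplies the justification for step (a) of Lemma~\ref{lemma:budget of fixed width}, which the paper only asserts.
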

\begin{proof}
The proof of this theorem is straightforward.
We begin with the optimization problem from Theorem \ref{thm:fw-optimal}, but introduce a \emph{rescaling indicator}, $\mathbf{1}_{ij}$, in both the objective and the constraint.
This indicator is used to add a rescaling time when a job's allocation changes between adjacent epochs.
\end{proof}
While we can no longer prove that a fixed-width policy is optimal given rescaling overheads, the above BOA policy still represents the best possible fixed-width policy.

Note that, unlike the convex optimization problem from Theorem \ref{thm:fw-optimal}, \eqref{eq:online-opt-rescale} is a mixed-integer convex program (MICP). 
In practice, this MICP can be solved quickly in our implementation (see Section~\ref{sec:design:overheads}).
However, for scalability to extremely large clusters, we also develop a heuristic for solving \eqref{eq:online-opt-rescale} that only invokes a sequence of convex optimization problems (see  Appendix~\ref{app:boa width}).
The policy derived by our heuristic solution performs as well as the true optimal solution in practice.

\subhead{Rescaling overheads in heterogeneous clusters.}
In heterogeneous systems without rescaling overheads, both the average JCT and the operating budget depend only on the choices of $k_{ij}^{(h)}$ and $p_{ij}^{(h)}$, so any policy that realizes a given $\{p_{ij}^{(h)}\}$ is equally optimal.
This enables the use of a simple, randomized scheme for routing work to each cluster.

However, given rescaling overheads, there are many ways to achieve a particular $p_{ij}^{(h)}$, each with potentially different overheads. 
Rather than search over the full space of policies that realize a given $\{p_{ij}^{(h)}\}$, we adopt a simplification: we randomly route each job to a particular cluster and run the job on the same cluster for its lifetime.
That is, we choose values $p_i^{(h)}$ that determine the fraction of type-$i$ jobs sent to cluster $h$.
In our evaluations, this simplification incurs almost no loss of performance compared to searching the full space of routing probabilities (Appendix~\ref{app:simplification}).
Using this simplification, we now define the BOA policy for a heterogeneous system with rescaling overheads.
%
%



\begin{definition}[Heterogeneous BOA with rescaling]
\label{def:BOA-hetero-rescale}
The \policybest policy for a heterogeneous system with rescaling overheads is defined as follows: At the arrival of any type $i$ job, it is routed to the type-$h$ GPU cluster with probability $p_i^{(h)}$ and it stays in that cluster throughout its lifetime; within each cluster, a fixed-width policy with parameters $k_{ij}^{(h)}$ is adopted. The parameters are the solution to the following optimization problem:
\begin{equation}
\begin{aligned}
& \underset{\{p_i^{(h)},k_{ij}^{(h)}\}}{\mbox{minimize}}
& & \textit{average JCT} \\
& \mbox{subject to}
& & \textit{operating budget}\le b, \\
& & & \sum_{h\in H}p_i^{(h)}=1\quad\forall i, \\
& & & k_{ij}^{(h)}\ge \base_i^{(h)},\quad p_i^{(h)}\in[0,1],
\end{aligned}
\label{eq:hetero-opt-rescale}
\end{equation}
where 
\begin{align*}
&\mbox{average JCT} = \frac{1}{\lambda}\sum_{i,h}\lambda_i\,p_i^{(h)}\sum_j\!\left(\frac{\E[X_{ij}]}{s_{ij}^{(h)}(k_{ij}^{(h)})}+r_i\cdot\mathbf{1}_{ij}^{(h)}\right), \\
&\mbox{op. budget} = \sum_{i,h}c^{(h)}\,\lambda_i\,p_i^{(h)}\sum_j k_{ij}^{(h)}\!\left(\frac{\E[X_{ij}]}{s_{ij}^{(h)}(k_{ij}^{(h)})}+r_i\cdot\mathbf{1}_{ij}^{(h)}\right),
\end{align*}
and $\mathbf{1}_{ij}^{(h)}=1$ if $k_{ij}^{(h)}\neq k_{i(j-1)}^{(h)}$ or $j=0$, and $0$ otherwise.
\end{definition}

Once again, \eqref{eq:hetero-opt-rescale} is an MICP that can be solved quickly in practice (see Section~\ref{sec:design:overheads}).
This policy forms the core of the \boa system that we describe and evaluate in Sections \ref{sec:design} and \ref{sec:eval}, respectively.

\section{System Design}
\label{sec:design}

In this section, we describe \boa, our scheduler that realizes the \policybest policy.

\subsection{Implementing the BOA Policy}
\label{sec:implementing_boa}

We implemented \boa using the AdaptDL \cite{adaptdl} scheduling framework.
As shown in Figure \ref{fig:fw}, AdaptDL provides an interface to schedule and rescale training jobs according to an arbitrary scheduling policy.
The AdaptDL control pods include a component for learning job speedup functions via profiling as each job runs.
\boa (marked in red) consumes these learned speedup functions.
We also modified the AdaptDL control pods so that a scheduling policy has fine-grained control over the cluster size at every moment in time.
This allows \boa to control both job allocations and the overall cluster size at every moment in time.
For a more detailed description of the changes we made to AdaptDL, see Appendix \ref{app:adaptdl-extensions}.

Under \policybest, the widths $\{k^{(h)}_{ij}\}$ are determined by the optimization problem \eqref{eq:hetero-opt-rescale}.
Once these widths are computed, the scheduler only needs to look up and execute the precomputed widths for each job epoch.
\boa exploits this property by decoupling the computation of $\{k^{(h)}_{ij}\}$ from the main scheduling loop: 
The \emph{\boa Width Calculator} computes and updates the widths $\{k^{(h)}_{ij}\}$ asynchronously while the synchronous scheduling logic executes the current $\{k^{(h)}_{ij}\}$ (see Figure~\ref{fig:fw} for an illustration).

\paragraph{Asynchronous width computation.}
The widths $\{k^{(h)}_{ij}\}$ themselves are produced by the \emph{\boa Width Calculator}, which solves the MICP~\eqref{eq:hetero-opt-rescale}.
The Width Calculator runs as a background process and refreshes $\{k^{(h)}_{ij}\}$ periodically (every $15$ minutes in our deployment). This decoupling keeps computationally intensive optimization off the scheduler's critical path.
Furthermore, as the underlying profiler accumulates more runtime data, the Width Calculator incorporates refined estimates of the job speedup functions when re-solving the optimization without blocking job execution.

\paragraph{Synchronous fixed-width execution.}
The \policybest policy maintains a table of the current widths. At each scheduling interval, it consults this table to make two decisions:
\begin{enumerate}[leftmargin=*]
    \item \textbf{Job allocation.} For each cluster, $h$, class-$i$ jobs in epoch $j$ are assigned $k^{(h)}_{ij}$ GPUs.  The system must then solve the \emph{placement problem} of selecting $k^{(h)}_{ij}$ GPUs for each job. 
    We solve the placement problem via a greedy heuristic: Jobs already running on the correct number of GPUs keep their allocations.
    Additionally, where possible, the GPUs allocated to each job are condensed onto the smallest possible number of nodes.
    This is similar to the placement approach in Sia~\cite{subramanya2023sia}. 
    
    \item \textbf{Cluster sizing.} The desired cluster size is the sum of $k^{(h)}_{ij}$ across all active jobs. This sum is converted to a number of nodes demanded and forwarded to the AdaptDL Cluster Expander, which provisions or releases cloud instances.
\end{enumerate}



\begin{figure}[t]
    \centering
    \includegraphics[width=0.95\linewidth]{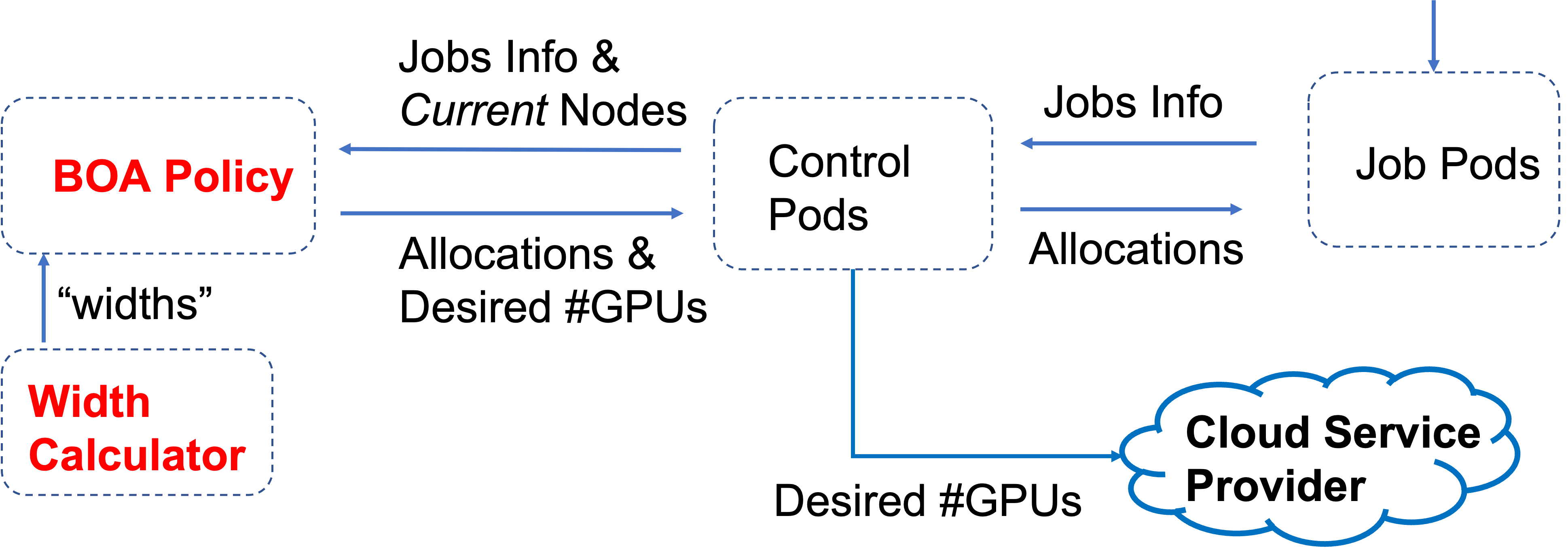}
    \vspace{-.1in}
    \caption{The \boa architecture. The \boa Width Calculator asynchronously refreshes $\{k^{(h)}_{ij}\}$, while the online scheduler executes the fixed-width policy.}
    \label{fig:fw}
    \vspace{-.2in}
\end{figure}

\subsection{Scheduler Overheads}
\label{sec:design:overheads}


\boa incurs low scheduling overhead. The synchronous scheduler performs a simple table lookup of the $\{k^{(h)}_{ij}\}$ for the active jobs, which takes $<1$~ms per scheduling cycle. The \boa Width Calculator runs asynchronously every $15$ minutes. 
Solving the MICP ~\eqref{eq:hetero-opt-rescale} fits comfortably within this time window:
In our evaluation with 5 job types (see Section~\ref{sec:eval:setup}), the MICP for a homogeneous cluster terminates in $4$ seconds and the MICP for a heterogeneous cluster with $3$ GPU types terminates in $52$ seconds.\footnote{Measured on a single CPU core at $5\%$ optimality tolerance.}

Although the MICP overhead is small in our experiments, our implementation in Section~\ref{sec:eval:imple} uses a heuristic (see Appendix~\ref{app:boa width}) that approximates the MICP via a sequence of convex optimizations. 
This heuristic performs well in practice and scales to workloads much larger than those we evaluate.

\paragraph{Rescaling overheads.}
Although our scheduling policy itself has low overhead, the rescaling overheads in our system are non-negligible.
Job rescaling overheads include checkpointing, pod termination, and restarting on new nodes.
Because we leverage the underlying AdaptDL~\cite{adaptdl} rescaling mechanism, the overhead of a single rescaling is unchanged in our implementation. 
Specifically, for the example of a CIFAR-10 training job, we measured rescaling latencies of $\approx 20$ sec on a ``warm'' machine (where container images are cached) and $\approx 120$ sec on a ``cold'' machine. Of the $120$-sec cold-start latency, $\approx 75$ sec is spent installing Python packages and initializing CUDA contexts, followed by $25$ sec for loading the dataset from the Elastic File System (EFS) to local storage. In the warm-start scenario, where these initialization steps can be skipped, half of the latency is due to establishing connections between distributed workers. 

As we will show in Section \ref{sec:eval}, rescaling occasionally over the course of a job's lifetime can greatly improve the cost-performance tradeoff, while rescaling too frequently can easily inflate job completion times.
\boa is designed to rescale only when it is beneficial to do so.

\section{Evaluation}
\label{sec:eval}

We evaluate \boa in three settings (Section~\ref{sec:eval:setup}). We perform simulations (Section \ref{sec:eval:sim}) validated by deploying \boa on AWS (Section~\ref{sec:eval:imple}). We also perform a sensitivity analysis of \boa (Section~\ref{sec:eval:sensitivity}).

\subsection{Experimental Setup}
\label{sec:eval:setup}

We evaluate \boa in three cluster settings targeted by three state-of-the-art adaptive schedulers. 
Many non-adaptive approaches~\cite{narayanan2020heterogeneity, gu2019tiresias, peng2018optimus} are dominated by the adaptive systems we consider, so we omit them from our evaluation.
The settings we consider are as follows:

\subhead{The Pollux ~\cite{qiao2021pollux} setting} considers a homogeneous cluster where job batch size and learning rate are tuned adaptively.

\subhead{The Rubick ~\cite{zhang2025rubick} setting} considers a homogeneous cluster where each job's \emph{parallel-training configuration} (e.g., data parallelism vs model parallelism) is chosen adaptively. 
In the Rubick paper, a job's training progress is measured using raw throughput, without considering statistical efficiency (see Section~\ref{sec:bg:train}). We follow the same convention in this setting.

\subhead{The Sia ~\cite{subramanya2023sia} setting} considers a heterogeneous cluster where batch-size is tuned adaptively.

We compare against the schedulers proposed in \cite{qiao2021pollux,subramanya2023sia,zhang2025rubick} corresponding to each setting.
These schedulers all rely on the same underlying heuristic for scheduling: maximizing the \emph{cluster efficiency}.
We define the cluster efficiency at any time as the sum of all running jobs' \emph{normalized speedups} divided by the cluster size, where the normalized speedup of a class-$i$ job in epoch $j$ allocated $k$ GPUs is given by
\[
\tilde s_{ij}(k) \;=\; \frac{s_{ij}(k)}{s_{ij}(\base_i)}.
\]
In the heterogeneous setting, we apply the same definition per GPU type, with $\base_i^{(h)}$ and $s_{ij}^{(h)}$ replacing $\base_i$  and $s_{ij}$. 


\subsubsection*{Implementing the autoscaling variants.}
In each of the above settings, the proposed scheduling policy targets a fixed-size cluster.
Hence, we additionally implement an autoscaling variant of each policy to compare to \boa.
An autoscaling variant of Pollux is proposed in~\cite{qiao2021pollux}, but is not implemented or evaluated. 
The proposed Pollux variant scales the cluster to meet a target level of cluster efficiency. 
The concept is that it is inefficient to allocate a high number of GPUs per job, so the cluster efficiency drops when there are few jobs in the system.
Hence, the system can set a target level of cluster efficiency, and the cluster size can be scaled up or down depending on how many jobs are in the system to meet this target cluster efficiency.
This proposal is evaluated in \cite{qiao2021pollux} only via a simulation of a \emph{single job} whose speedup function changes over time. 
AdaptDL also provides a prototype of this proposal, but it does not scale to realistic cluster sizes or workloads. 
We implemented \emph{Pollux-with-Autoscaling} using AdaptDL, a significant engineering effort.

Since all three competitor systems aim to maximize cluster efficiency, we extend the same autoscaling idea to Rubick and Sia, creating \emph{Rubick-with-Autoscaling} and \emph{Sia-with-Autoscaling} (see Appendix \ref{app:autoscaling-extensions} for details).
While these autoscaling variants are somewhat impractical due to the overhead of computing the cluster size hitting the target cluster efficiency, we ignore these overheads in our simulation experiments. 

\subsubsection*{Workloads.}
Table~\ref{tbl:workload-rubick} summarizes the traces used in each setting. The Pollux and Sia settings both use the production trace \textbf{newTrace} from~\cite{subramanya2023sia}. Here, job runtimes vary by $10\times$ across types, and interarrival times have squared coefficient of variation $C^2=2.65$. The Rubick setting uses the \textbf{Rubick} workload from~\cite{zhang2025rubick}. Here, jobs train one of the seven Transformer-based models, two of which (LLaMA-2-7B and LLaMA-30B) cannot fit on a single GPU (i.e., $\base_i>1$).

For the implementation experiment in Section~\ref{sec:eval:imple}, we subsample the original \textbf{workload-1} trace from~\cite{qiao2021pollux} to create a trace that is cost-efficient to use on AWS. 
This trace consists of ResNet18, BERT, and DeepSpeech2 jobs, totaling $85$ jobs. 


{\small
\begin{table}[t]
    \centering
    \begin{tabular}{|c|c|c|c|}
        \hline
        \textbf{Trace} & \textbf{Model} & \textbf{Dataset} & \textbf{Share of jobs} \\
        \hline\hline
        \textbf{newTrace} & ResNet18    & CIFAR-10    & 50.42\% \\ \hline
        \textbf{newTrace} & DeepSpeech2 & CMU-ARCTIC  & 23.54\% \\ \hline
        \textbf{newTrace} & BERT        & SQuAD       & 21.67\% \\ \hline
        \textbf{newTrace} & YOLOv3      & PASCAL-VOC  & 4.75\%  \\ \hline
        \textbf{newTrace} & ResNet50    & ImageNet    & 0.62\%  \\
        \hline\hline
        \textbf{Rubick}  & GPT-2                       & Wikipedia    & 17.00\% \\ \hline
        \textbf{Rubick}  &BERT                        & Wikipedia   & 15.76\% \\ \hline
        \textbf{Rubick}  &T5                          & Wikipedia    & 15.27\% \\ \hline
        \textbf{Rubick}  &RoBERTa                     & WikiText-2   & 14.29\% \\ \hline
        \textbf{Rubick}  &LLaMA-2-7B        & WuDaoCorpora & 13.30\% \\ \hline
        \textbf{Rubick}  &LLaMA-30B         & WuDaoCorpora & 12.56\% \\ \hline
        \textbf{Rubick}  &ViT                         & ImageNet-1K  & 11.82\% \\ \hline
    \end{tabular}
    \caption{Composition of the \textbf{newTrace} workload~\cite{subramanya2023sia} ($960$ jobs total) and the \textbf{Rubick} workload~\cite{zhang2025rubick} ($406$ jobs total).} 
    \vspace{-.3in}
    \label{tbl:workload-rubick}
\end{table}
}

\subsubsection*{Cluster hardware and cost.}
For each setting we configure our experiments to match the GPU types used in the corresponding paper:

\subhead{The Pollux cluster} is composed of NVIDIA T4 GPUs. Our AWS implementation in Section~\ref{sec:eval:imple} uses the same GPU type via \texttt{g4dn.12xlarge} instances ($4\times$~T4 per node).

\subhead{The Rubick cluster} is composed of NVIDIA A800 GPUs ($8\times$ A100 per node, $100$~Gb/s RDMA).

\subhead{The Sia cluster} was originally composed of three GPU types: NVIDIA T4, RTX~2080Ti, and V100. RTX~2080Ti is not offered by any major hyperscaler, so we cannot obtain a list price for it; we use NVIDIA A100's instead. The Azure SKU, memory tier, and prices we use for each type are listed in Appendix~\ref{app:heter-pricing}.

Because the two homogeneous settings use a single GPU type, we report cost in \emph{GPU-hours}. In the heterogeneous setting, we report cost in \emph{USD per hour}. 
\begin{figure*}[ht]
    \centering
    \begin{subfigure}[b]{0.28\linewidth}
        \centering
        \includegraphics[width=\textwidth]{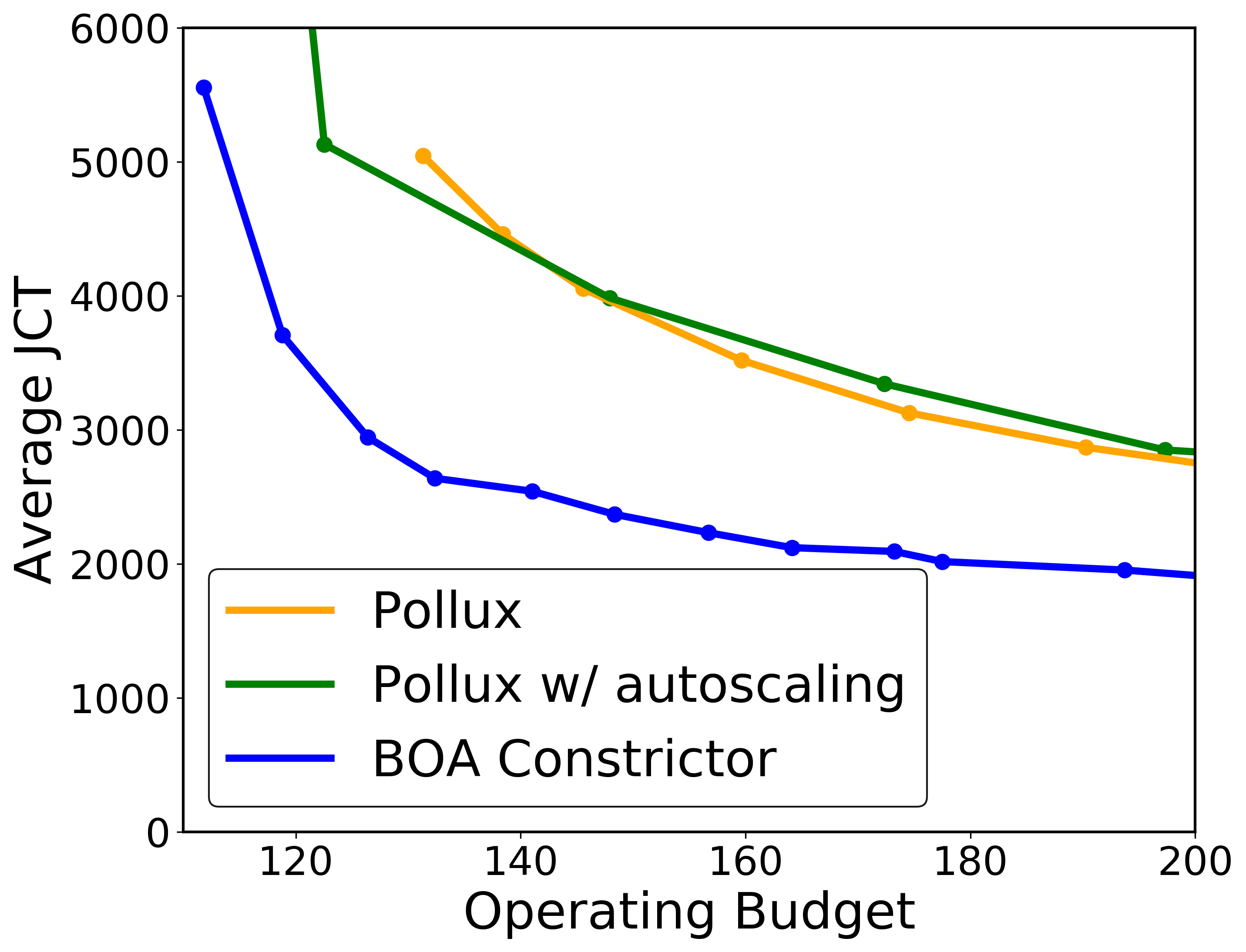}
        \caption{Pollux setting, average JCT.}
        \label{fig:sim:pollux:mean}
    \end{subfigure}
    \quad
    \begin{subfigure}[b]{0.28\linewidth}
        \centering
        \includegraphics[width=\textwidth]{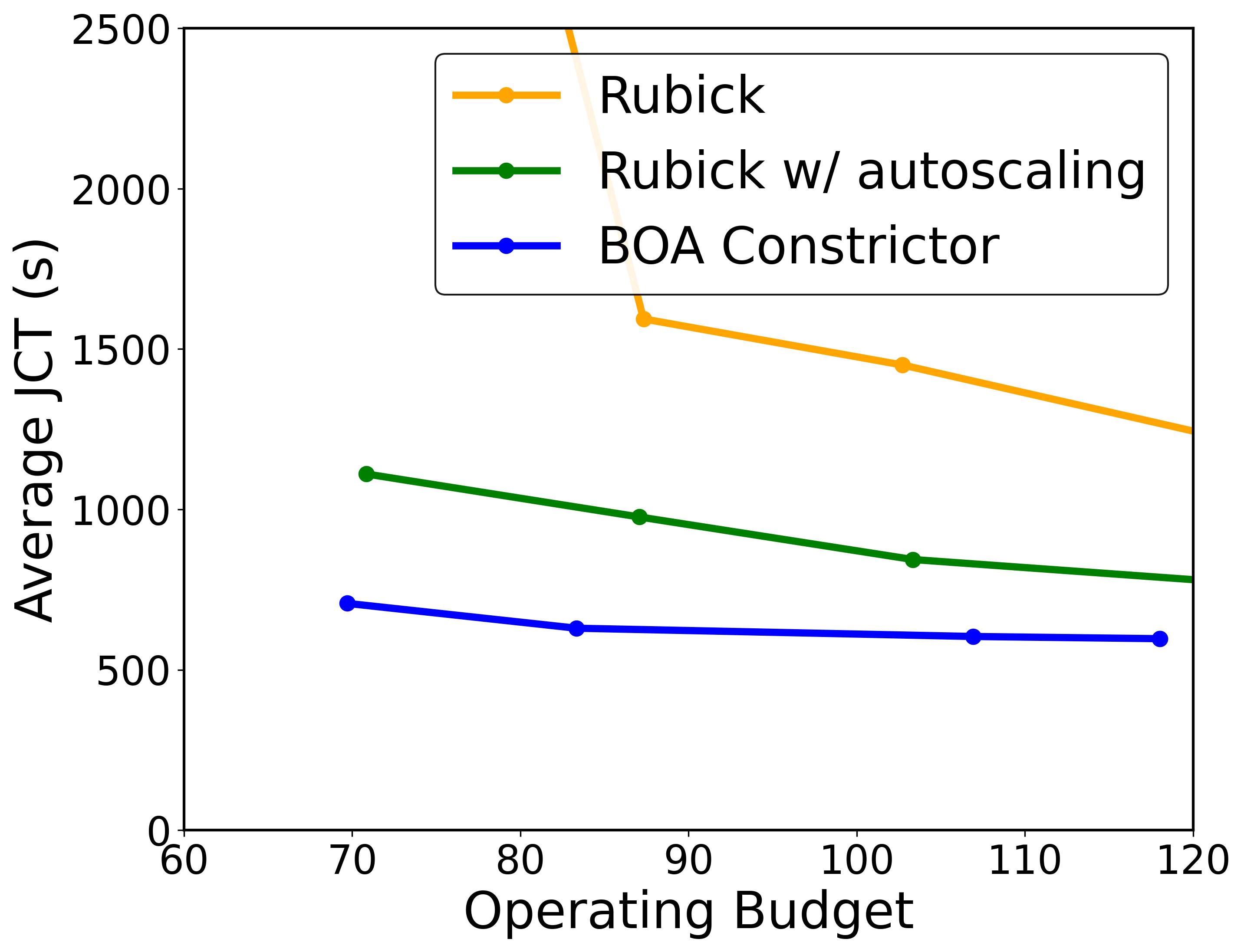}
        \caption{Rubick setting, average JCT.}
        \label{fig:sim:rubick:mean}
    \end{subfigure}
    \quad
    \begin{subfigure}[b]{0.28\linewidth}
        \centering
        \includegraphics[width=\textwidth]{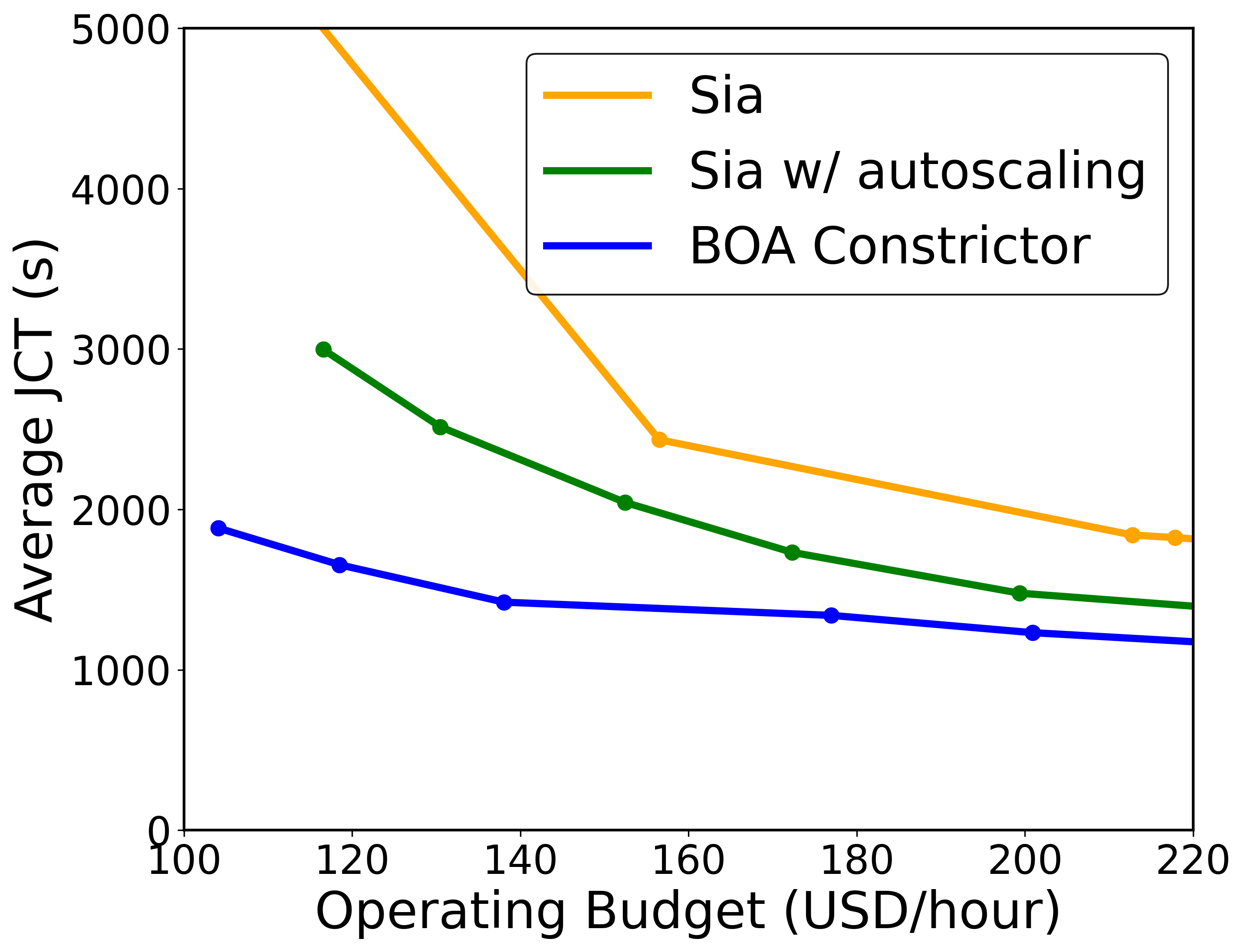}
        \caption{Hetero setting, average JCT.}
        \label{fig:sim:heter:mean}
    \end{subfigure}
    \\
    \begin{subfigure}[b]{0.28\linewidth}
        \centering
        \includegraphics[width=\textwidth]{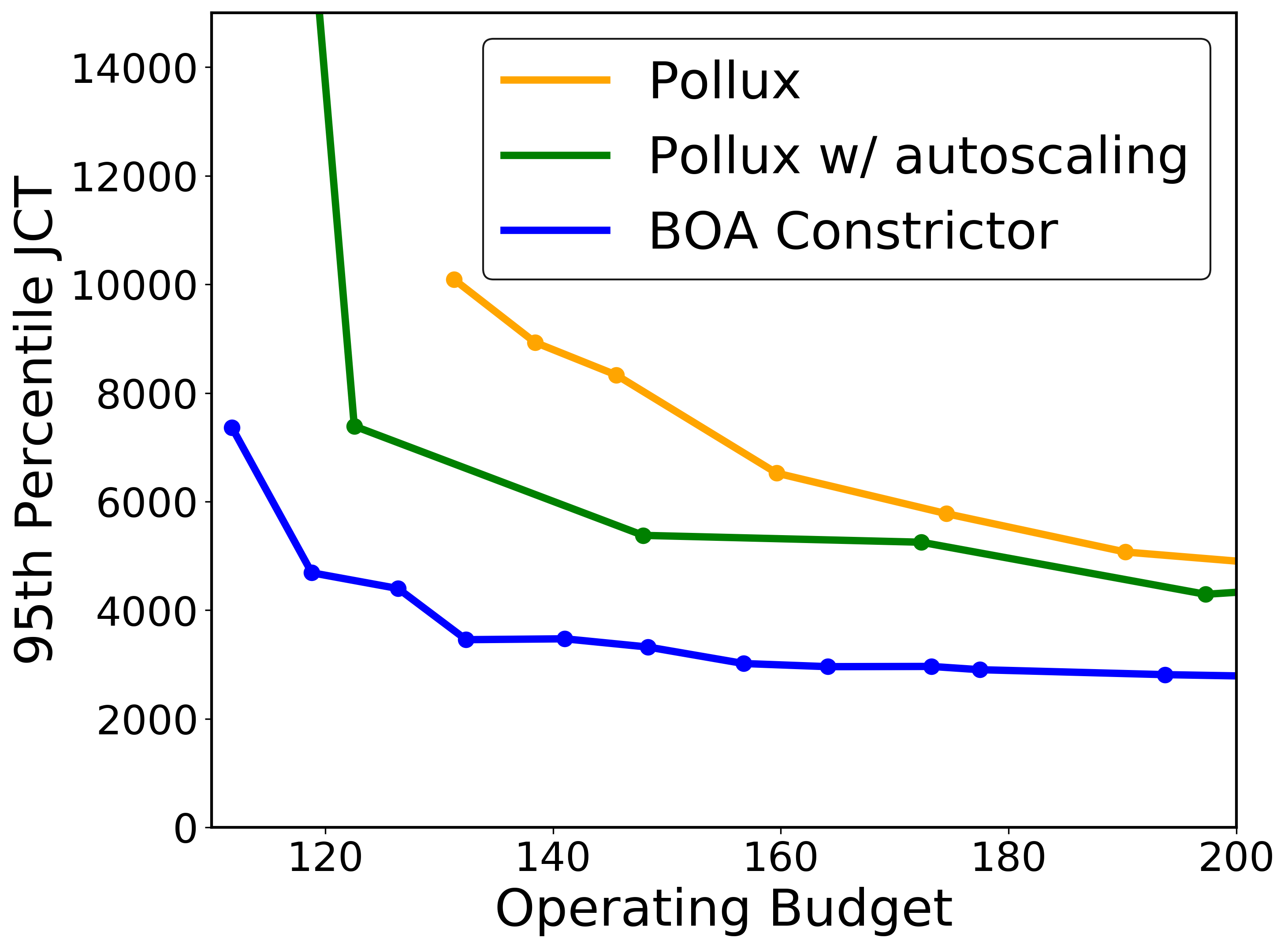}
        \caption{Pollux setting, P95 JCT.}
        \label{fig:sim:pollux:tail}
    \end{subfigure}
    \quad
    \begin{subfigure}[b]{0.28\linewidth}
        \centering
        \includegraphics[width=\textwidth]{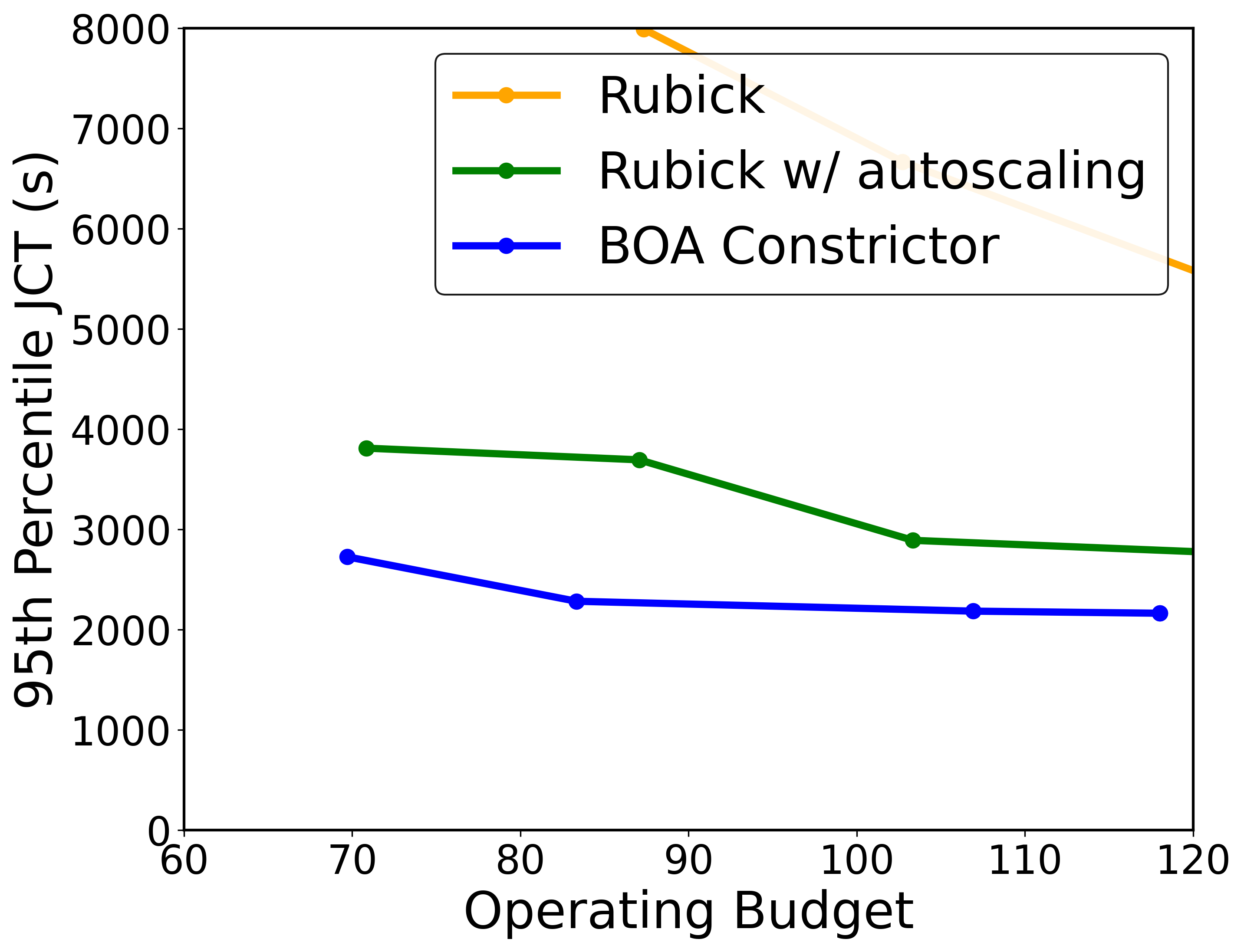}
        \caption{Rubick setting, P95 JCT.}
        \label{fig:sim:rubick:tail}
    \end{subfigure}
    \quad
    \begin{subfigure}[b]{0.28\linewidth}
        \centering
        \includegraphics[width=\textwidth]{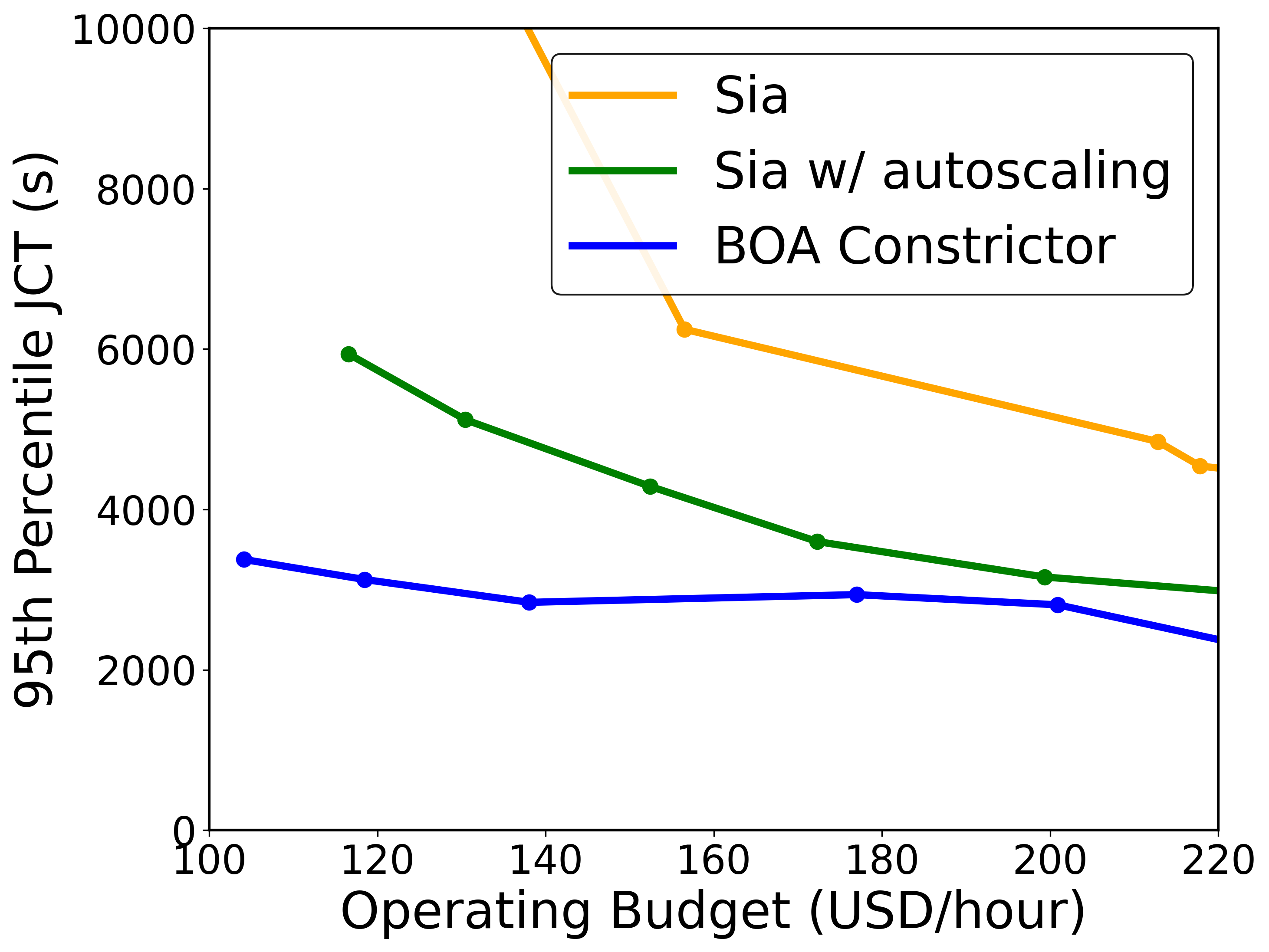}
        \caption{Hetero setting, P95 JCT.}
        \label{fig:sim:heter:tail}
    \end{subfigure}
    \caption{Comparing \boa with the corresponding competitors in each of the three settings: Pollux setting (a, d), Rubick setting (b, e), and heterogeneous (Sia) setting (c, f). \boa improves the Pareto frontier of both average JCT (top) and P95 JCT (bottom) in every setting.}
    \label{fig:sim}
\end{figure*}
\subsection{Simulation Experiments}
\label{sec:eval:sim}

We simulate \boa and the corresponding competitors in each of the three settings. 
For the Pollux and Sia settings, we use the simulator from~\cite{subramanya2023sia}.
For the Rubick setting, we use the simulator from~\cite{zhang2025rubick}. 
Both simulators model each job from profiling data, account for rescaling overheads, and capture inter-job interference.

\boa requires knowing a speedup function $s_{ij}$ for each class $i$ and epoch $j$.
In each setting, the competitor policies already leverage a predicted speedup function to make their scheduling decisions, so \boa uses the exact same predictions in our comparisons. 
These speedup function predictions are not exact, and they continue to evolve as more runtime data is collected.
We seed each speedup prediction by assuming that one job of each type has been run prior to the start of our experiments.

Figure~\ref{fig:sim} shows the Pareto frontier of cost vs. performance for all three settings, with average JCT in the top row and P95 JCT in the bottom row used to measure performance.
\boa improves the Pareto frontier in every panel: across all three settings, \boa achieves up to ${\sim}1.8\times$ lower average JCT than the strongest competitor at the same operating budget, and equivalently requires up to ${\sim}2\times$ less operating budget to achieve the same average JCT.
One source of this improvement is that it is able to control the number of rescalings.  
For example, in the sia-setting simulation, when the budget is $\sim$135, on average each job rescales $\sim$4 times under Sia-with-Autoscaling, but only rescales $<1$ time under \boa.

Although \boa directly optimizes average JCT, it also benefits tail performance.
For each of the competitor policies, a single unlucky job may be rescaled many times (e.g., due to a burst of arrivals).
This cannot happen under \boa, where each job's allocation is independent of the system state.  
As a result, \boa reduces P95 JCT by up to ${\sim}1.7\times$ for a given operating budget.

\subsection{Implementation Experiment}
\label{sec:eval:imple}

We deploy \boa on an AWS cluster in the Pollux setting (see Section \ref{sec:eval:setup}). 
Our real-world experiments both verify that \boa runs as expected in a production-style cluster and validate our simulation results.


\subhead{Measurement.}
The GPU usage statistics we report include the time from when a node first enters a ``starting'' state until it is marked for release by the allocation policy. In practice, there is some additional time from when a node is marked for release until it is fully reclaimed by the cloud provider. This additional time is controlled by the cloud provider's autoscaling infrastructure, varies from one cloud platform to the next, and cannot be directly optimized. We therefore exclude this reclamation time from our usage statistics. Appendix~\ref{sec:usage} shows that reclamation time affects all policies we consider, but has a smaller impact on \boa than the competitor policies.

\subhead{Workload and profiling.}
To limit the cost of running experiments on AWS, our implementation experiments use the shorter \textbf{workload-1} trace (see Section~\ref{sec:eval:setup}).
The drawback here is that the scheduler has less time to learn job speedup functions and find good hyperparameter configurations for each job. 
To counteract this effect, we fix the predicted speedup functions and job hyperparameters in each experiment using profiling information gathered offline.

Hence, our implementation experiments indicate how each policy performs after running for long enough that the relevant performance models and hyperparameter selection algorithms have converged. These modifications help to isolate the performance gain that \boa achieves by improving scheduling.
While this simplification benefits Pollux significantly, \boa is largely insensitive to errors in the predicted speedup function (see Figure~\ref{fig:sim:perfect}).

\subhead{Results.}
Figure~\ref{fig:impl} shows our implementation results alongside a matched simulation for a range of operating budgets.

The implementation results in Figure~\ref{fig:impl:impl} show an improvement in average JCT at \emph{all operating budgets}. \boa reduces average JCT by up to a factor of ${\sim}1.6\times$ when both policies use a budget of ${\sim}40$ GPU-hours per hour, and reduces the budget required to achieve an average JCT of ${\sim}2100$ seconds by a factor of ${\sim}2.2\times$. 
\boa again benefits from finding the right GPU allocations with minimal rescaling: at a budget of ${\sim}55$ GPU-hours per hour, \boa rescales each job $3.54$ times on average, compared to $6.71$ times under the autoscaling variant of Pollux.

Crucially, the simulations in Figure~\ref{fig:impl:sim} match the implementation results in Figure~\ref{fig:impl:impl}. There are minor discrepancies between the two sets of experiments due to increased variability in the real-world experiments, which makes both policies' speedup models less accurate. Because this variability affects both policies similarly, the percentage difference between the policies in implementation is predicted closely by simulation. This validates our simulations from Section~\ref{sec:eval:sim}.

\begin{figure}[t]
    \centering
    \begin{subfigure}[b]{0.48\linewidth}
        \centering
        \includegraphics[width=\linewidth]{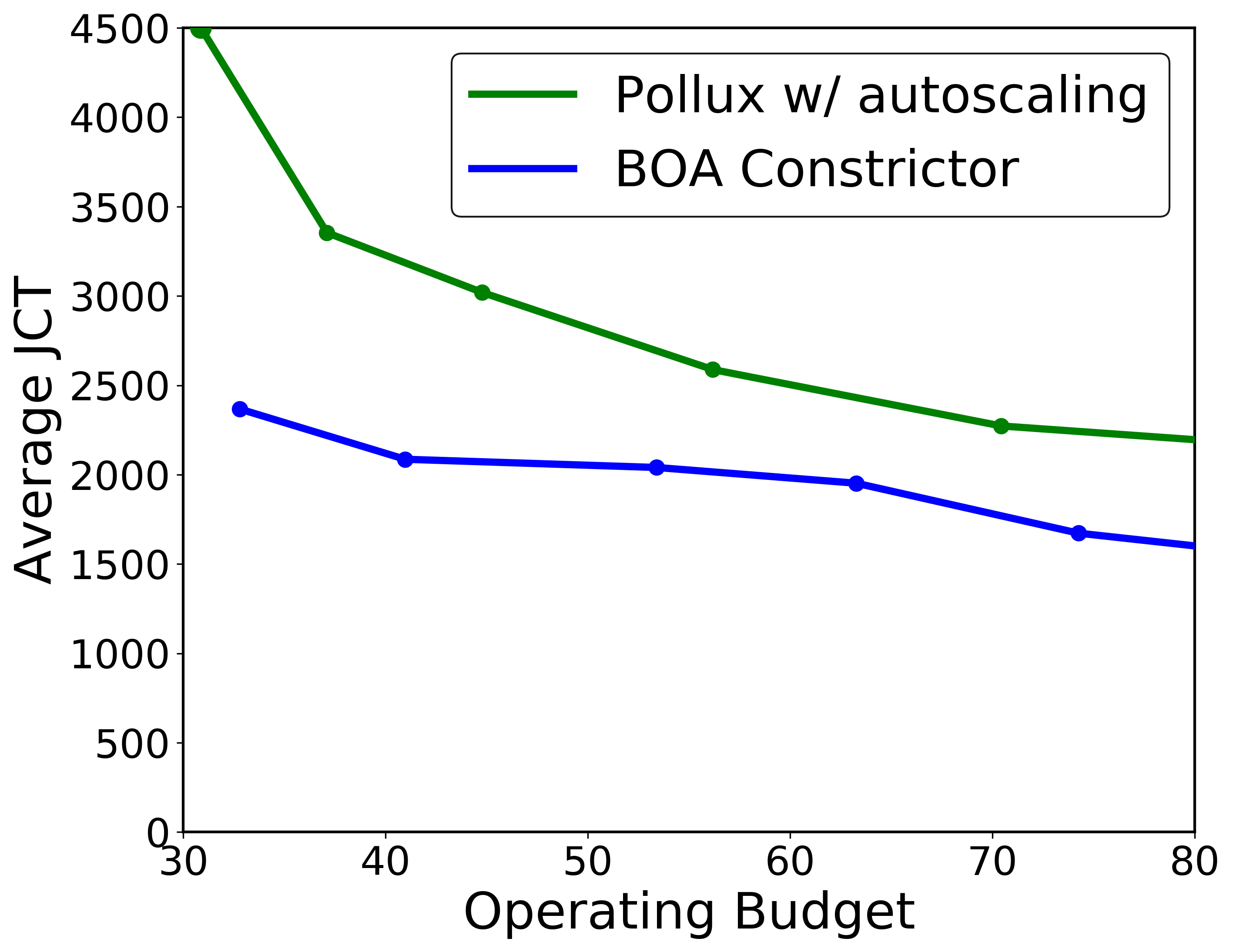}
        \caption{Implementation.}
        \label{fig:impl:impl}
    \end{subfigure}
    \hfill
    \begin{subfigure}[b]{0.48\linewidth}
        \centering
        \includegraphics[width=\linewidth]{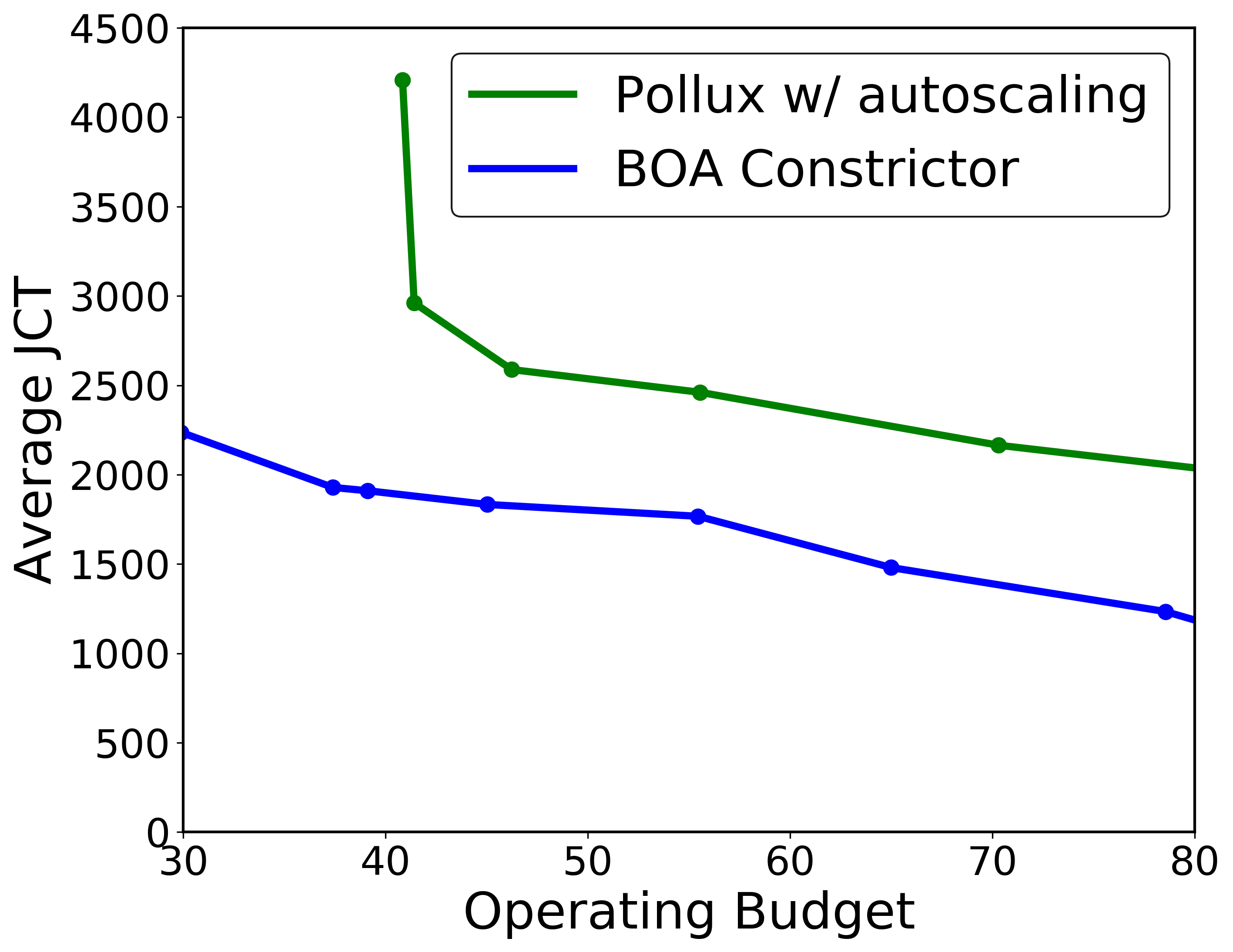}
        \caption{Simulation.}
        \label{fig:impl:sim}
    \end{subfigure}
    \caption{Comparing \boa and Pollux-with-Autoscaling running $85$ jobs from the \textbf{workload-1} trace~\cite{qiao2021pollux}. Simulation results match implementation results, showing that \boa significantly improves the Pareto frontier of average JCT vs.~operating budget.}
    \label{fig:impl}
\end{figure}

The experiment for Pollux-with-Autoscaling with the lowest budget encountered issues where Pollux's optimizer struggled to find allocations that met the policy's target cluster efficiency. The resulting policy rescaled jobs too frequently, leading to crashes when there were many jobs in the system.
For this experiment, we report the average JCT measured at the time of the crash.
Pollux's optimizer also failed to maintain the target cluster efficiency during low-budget simulation experiments.
This suggests that autoscaling based on cluster efficiency is fundamentally limited in its ability to achieve very low resource utilization, whereas \boa remains stable even under an extremely low budget.

\subsection{Sensitivity Analysis}
\label{sec:eval:sensitivity}

We complete our evaluation by analyzing the robustness of the above results. For simplicity, we mainly focus on the Pollux setting for the sensitivity experiments below.
\begin{figure}[h]
    \centering
    \includegraphics[width=0.5\linewidth]{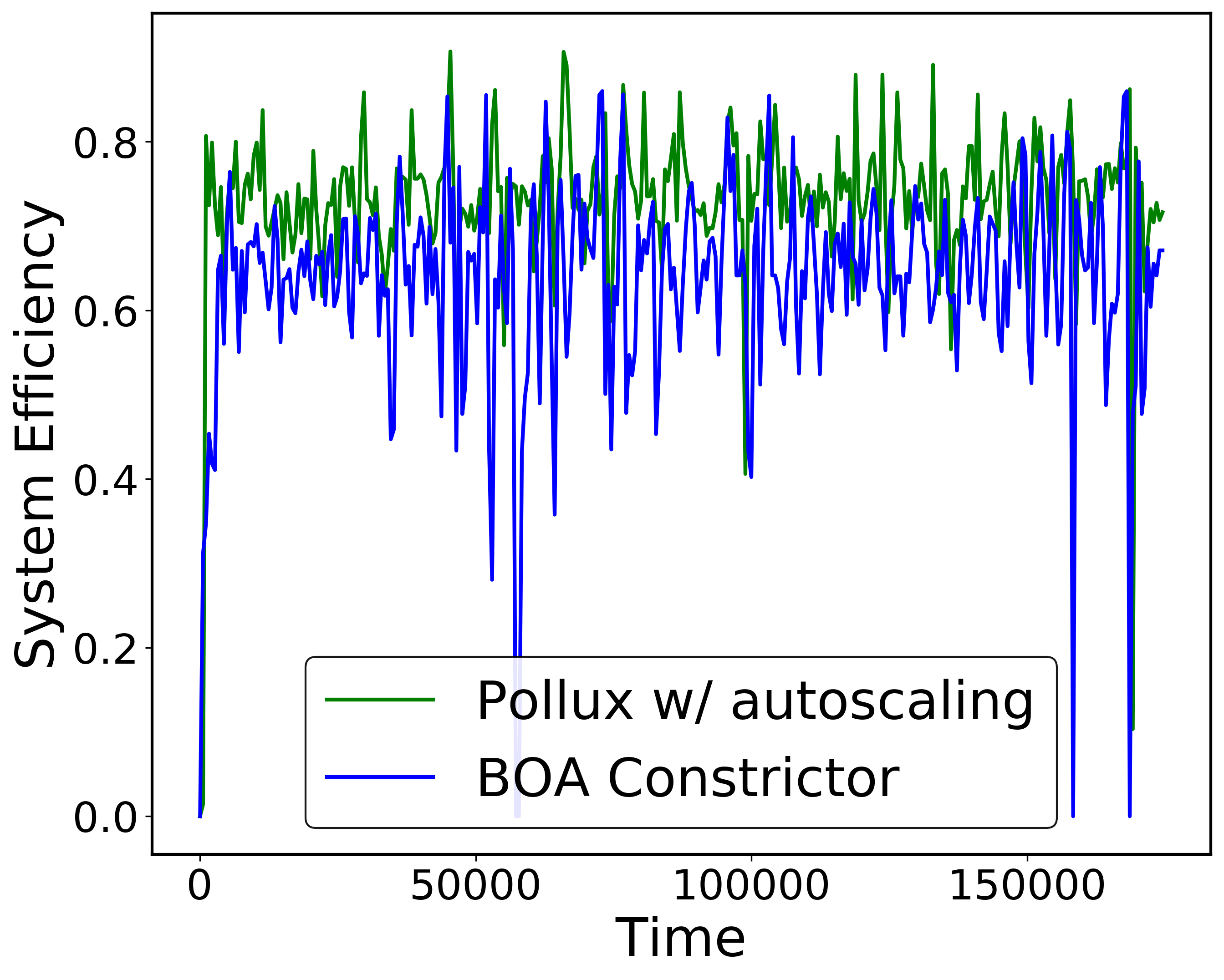}
    \caption{Cluster efficiency over time in the Pollux setting. \boa maintains lower cluster efficiency than Pollux-with-Autoscaling, but provides better average JCT. This implies that cluster efficiency is a flawed heuristic.}
    \label{fig:efficiency}
\end{figure}
\subsubsection*{Is \boa close to ``cluster-efficiency-based autoscaling''?}
One potential concern is whether the decisions made by \boa are close enough to the decisions made by Pollux that one could match \boa's performance by simply tuning the parameters of Pollux to avoid some additional overheads.

On the contrary, we find that \boa and Pollux-with-Autoscaling pursue fundamentally different goals. Figure~\ref{fig:efficiency} plots cluster efficiency over time for the Pollux-setting simulation, comparing \boa and Pollux-with-Autoscaling under the same operating budget. 
Pollux-with-Autoscaling runs at an average cluster efficiency of $0.73$ while \boa runs at an average cluster efficiency of $0.64$.
That is, \boa uses the same number of GPU-hours on average, but achieves significantly lower average JCT by purposely using its resources \emph{less efficiently} than Pollux-with-Autoscaling.
This implies that the fundamental approach of Pollux-with-Autoscaling --- pick a cluster size, maximize cluster efficiency given this cluster size --- is flawed.
The BOA policy makes allocation decisions that are not easily described in terms of cluster efficiency.

A more intuitive view of the two policies' autoscaling behavior comes from inspecting their GPU usage over time in the implementation experiment of Section~\ref{sec:eval:imple}. Figure~\ref{fig:gpu usage} shows the GPU usage of the two policies given a budget of ${\sim}55$ GPU-hours per hour. \boa's advantage comes from the form of the BOA policy. The BOA policy computes the optimal allocation for each job epoch and decides ahead of time which rescaling costs are worth paying for each job.
By contrast, Pollux-with-Autoscaling makes allocation decisions in a reactive manner and uses a fixed scheduling quantum of $60$ seconds to avoid rescaling too frequently. 
As a result, \boa reacts faster to bursts in arrivals.
Pollux-with-Autoscaling allows queueing and scales the cluster size based on target efficiency rather than the optimal GPU demands of each job. 
Once Pollux-with-Autoscaling does react to a burst, it tends to not scale the cluster size as aggressively as \boa.
Both of these effects lead to higher JCTs for a given usage level.

\begin{figure}[t]
    \centering
    \includegraphics[width=0.8\linewidth]{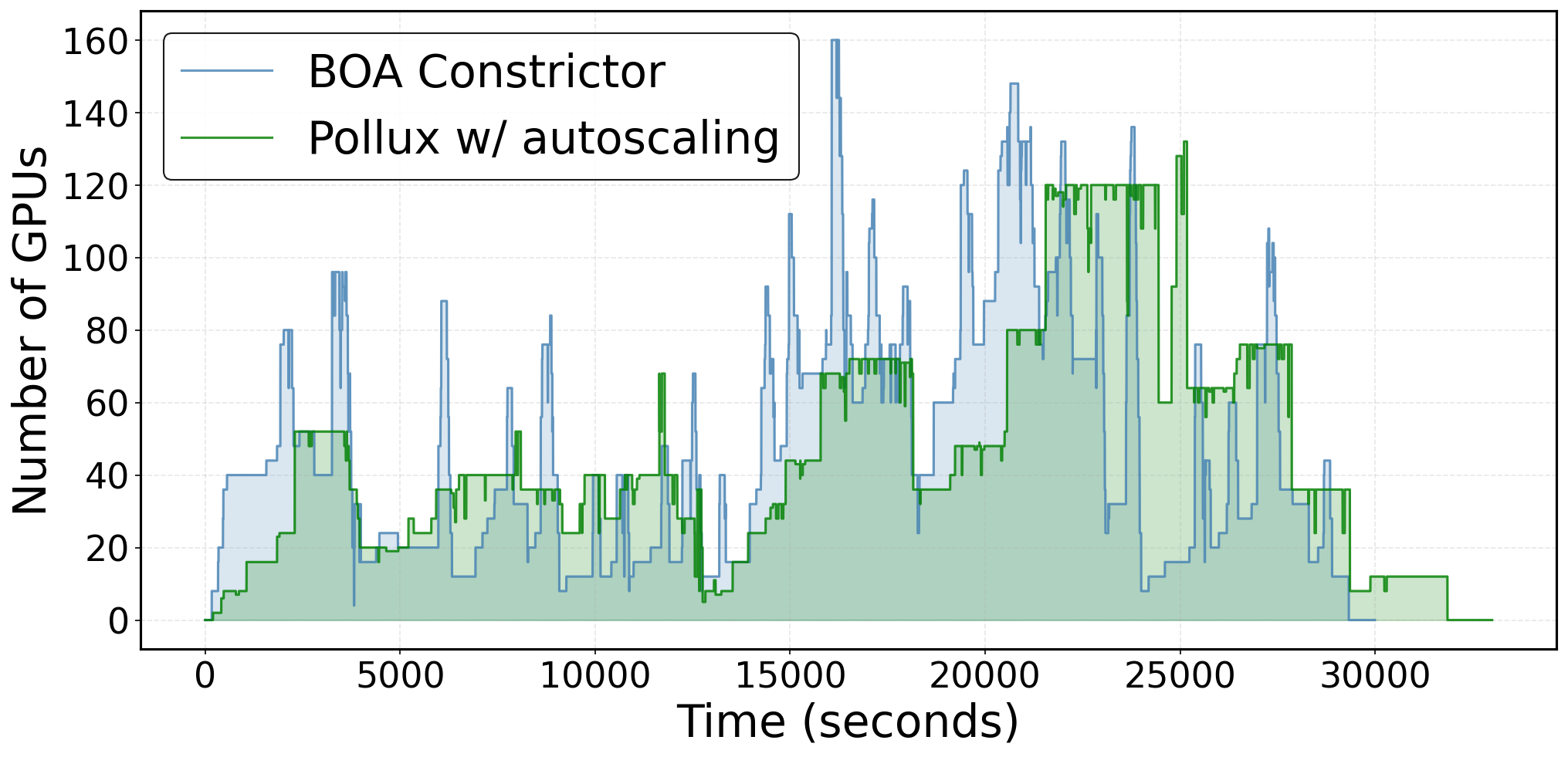}
    \caption{GPU usage of \boa and Pollux-with-Autoscaling implementations. \boa lowers JCTs by reacting faster and more aggressively to bursts in arrivals.}
    \label{fig:gpu usage}
\end{figure}
\begin{figure}[h]
    \centering
    \includegraphics[width=0.5\linewidth]{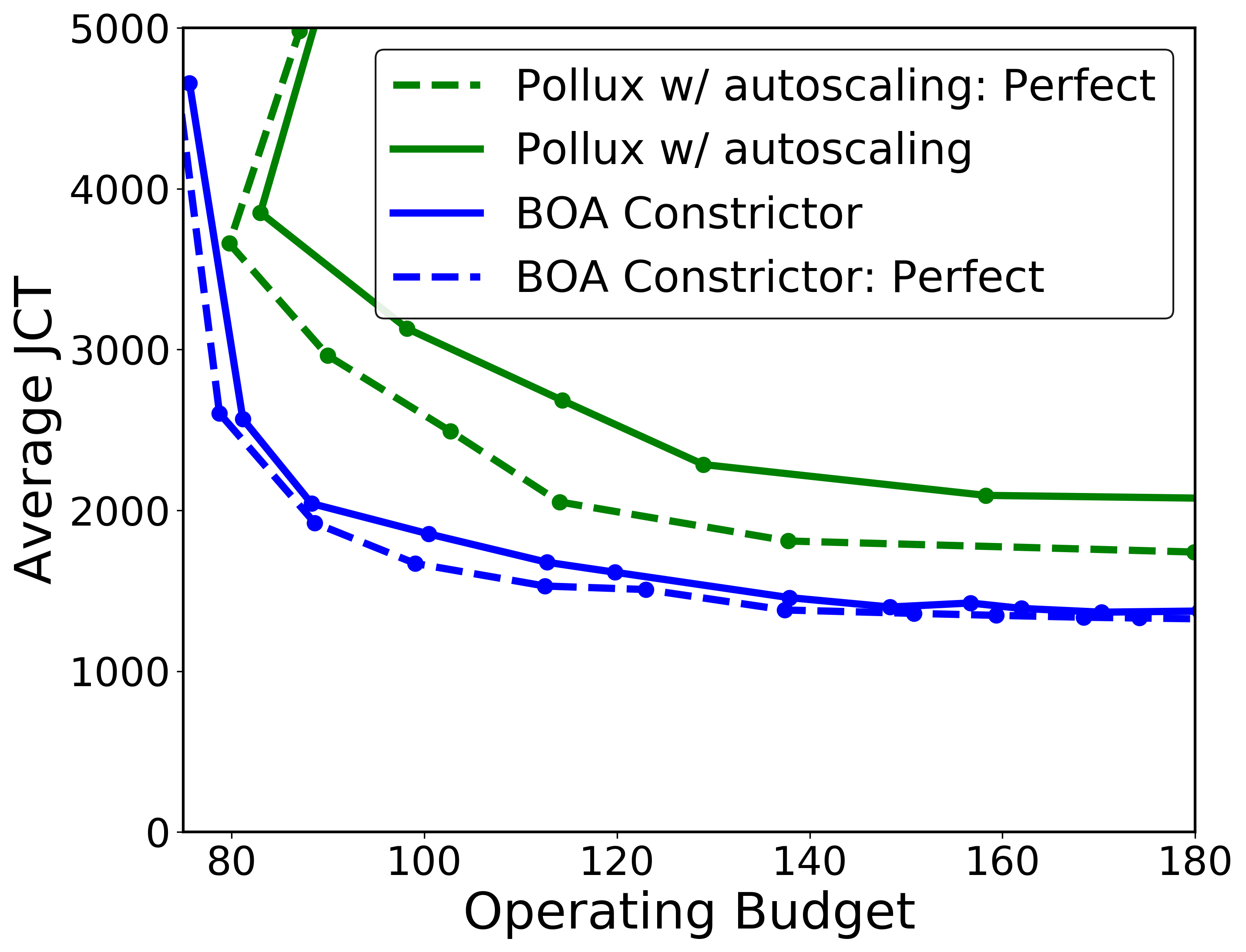}
    \caption{Average JCT for policies with perfect / imperfect speedup function information. \boa is less sensitive to modeling accuracy  than Pollux-with-Autoscaling.}
    \label{fig:sim:perfect}
\end{figure}
\subsubsection*{How sensitive is \boa to prediction error?}
Even when the models used to predict job speedups have converged, these  models are still imperfect.
We therefore evaluate \boa's sensitivity to errors in the speedup models. 
Figure~\ref{fig:sim:perfect} shows via simulation that \boa is significantly more robust to modeling errors than Pollux-with-Autoscaling.
Here, we compare each policy's performance when using an imperfect speedup model to its performance when given perfect profiling information about each job.
While the performance of \boa is essentially unchanged by prediction error, the average JCT under Pollux-with-Autoscaling increases by up to a factor of ${\sim}1.4\times$ in the face of prediction errors.

\begin{figure}[t]
    \centering
    \includegraphics[width=0.55\linewidth]{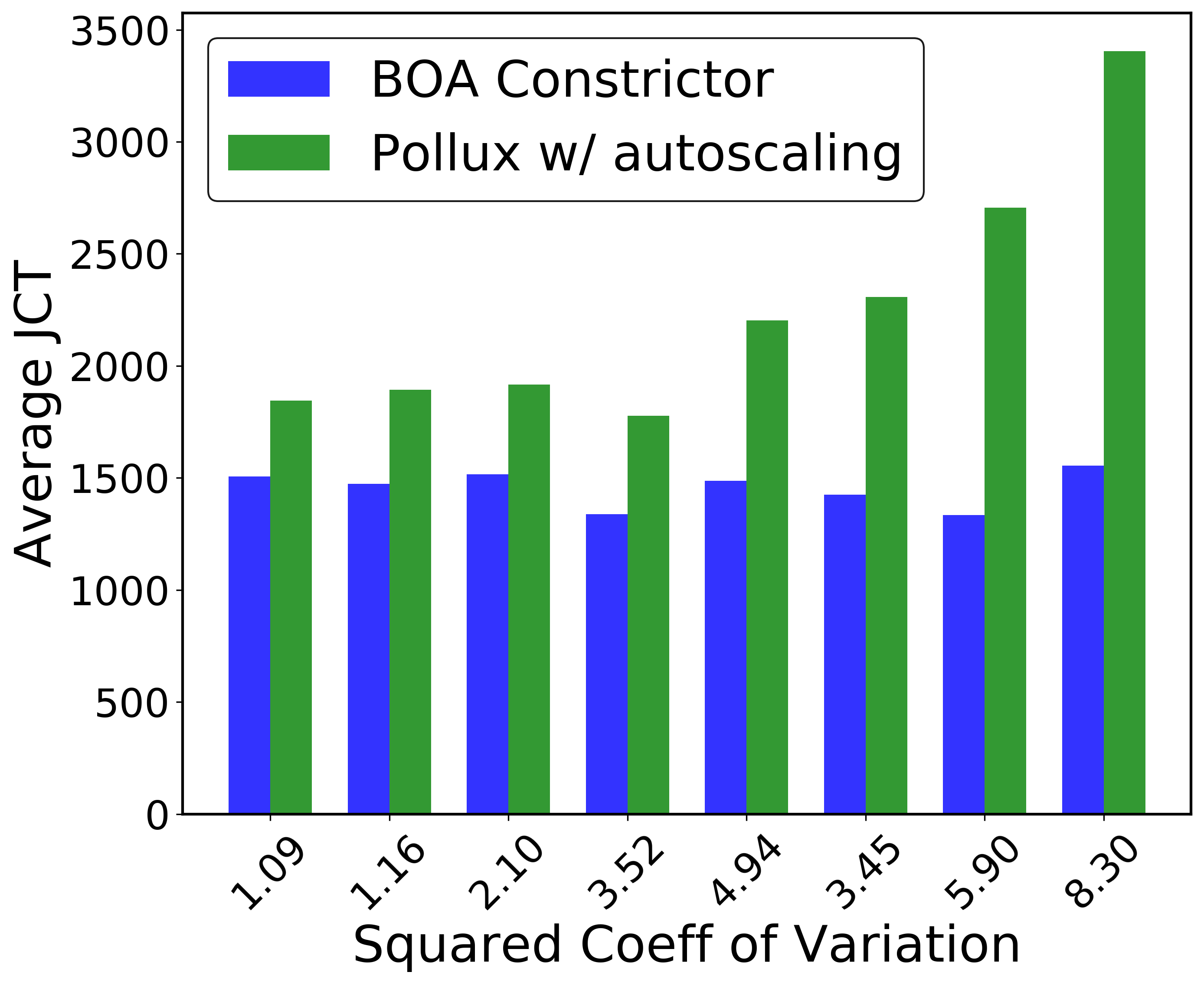}
    \caption{Comparison of the sensitivity of \boa and Pollux-with-Autoscaling to arrival process variability. Here, both policies use a budget of ${\sim}120$ GPUs to process a synthetic trace based on \textbf{newTrace}. }
    \label{fig:time:poisson}
\end{figure}
\subsubsection*{How sensitive are our results to workload variability?}
Another concern is whether our results hold as workload variability becomes more extreme.
To examine this question, we generate a synthetic trace based on \textbf{newTrace}.
Our synthetic trace uses the same mixture of jobs, but we adjust the job arrival times to control how bursty the workload is.
Specifically, we construct a time-varying Poisson process where jobs arrive at either a high rate or a low rate.
To control variability, we fix the long-run average arrival rate, but create increasingly short and intense bursts of arrivals.

We describe the arrival variability via the squared coefficient of variation of interarrival times, $C^2$.
Figure~\ref{fig:time:poisson} shows that, as variability increases, \boa outperforms Pollux-with-Autoscaling by an increasing margin.
The increasing bursts in traffic amplify the differences between \boa and Pollux-with-Autoscaling's slow, conservative autoscaling.
For reference, \textbf{newTrace} has $C^2=2.65$.

\section{Conclusion}
Running ML training jobs in the cloud gives rise to a cost-performance tradeoff that the cloud customer must balance.
Running on additional GPUs in parallel can reduce job completion times.
However, because jobs receive a sublinear speedup from parallelism, this performance improvement comes at a price.
We derive the \policybest policy, which explicitly optimizes the average JCT across a stream of training jobs for any given budget constraint on GPU usage.
\boa significantly outperforms state-of-the-art schedulers by determining the correct allocations for each job proactively and avoiding excessive rescaling overheads.
Our results apply broadly: \boa optimizes the cost-performance tradeoff in homogeneous clusters, heterogeneous clusters, and for a wide range of distributed training workloads. 
\clearpage


\bibliographystyle{ACM-Reference-Format}
\bibliography{bib,bibshort}
\clearpage

\appendix
\section{Proof in offline setting}
\label{app:offline}

Throughout this appendix, we consider the setting of homogenous GPUs without rescaling overheads. Our main goal is to prove the following theorem, which derives the BOA policy in this setting.
\begin{theorem}
We define \policybest-no-rescaling as the fixed-width policy whose widths $\{k_{ij}\}$ are given by the solution of
\begin{equation}
\begin{aligned}
& \underset{\{k_{ij}\}}{\text{minimize}}
& & \frac{1}{\lambda}\sum_{i,j}\frac{\rho_{ij}}{s_{ij}(k_{ij})} \\
& \text{subject to}
& & \sum_{i,j}\frac{\rho_{ij}\,k_{ij}}{s_{ij}(k_{ij})}\leq b, \\
& & & k_{ij}\geq \base_i.
\end{aligned}
\label{eq:online-opt-norescale}
\end{equation}

\policybest-no-rescaling is the budget-optimal allocation policy for homogenous clusters with no rescaling overhead, and this policy can be computed by solving a convex optimization problem.
\label{thm:hmo:summary}
\end{theorem}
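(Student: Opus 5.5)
Our plan is to prove Theorem~\ref{thm:hmo:summary} in three stages: (i) a lower bound showing that every policy meeting the budget has average JCT at least the optimum of \eqref{eq:online-opt-norescale}; (ii) a matching upper bound from the fixed-width policy with widths $\{k_{ij}\}$; (iii) convexity of \eqref{eq:online-opt-norescale}.

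\textbf{Lower bound.} Fix an arbitrary policy $\pi$ with $\bar K\le b$, and a type-$i$ job $n$ in epoch $j$. Let $a_{ij}^n(t)$ be its allocation at time $t$, let $W_{ij}^n$ be the measure of time it is allocated at least $\base_i$ GPUs during the epoch, and let $G_{ij}^n=\int a_{ij}^n(t)\,dt$ be the GPU-hours it consumes. Time spent queued (allocation $0$) only adds to JCT, so $T^n\ge\sum_j W_{ij}^n$. The first step is a per-job inequality based on concavity ($s(k)/k$ nonincreasing) together with monotonicity. For fixed usage $G$ and fixed time $W$, spreading GPUs evenly maximizes progress, so the epoch's work satisfies $X_{ij}^n\le W_{ij}^n\, s_{ij}(G_{ij}^n/W_{ij}^n)$. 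Where $s$ itself is not concave we use its concave envelope, which is dominated as in the model. Setting $k^n=G^n/W^n\ge\base_i$, we get $W\ge X/s(k^n)$ and $G=k^nW\ge k^nX/s(k^n)$. This establishes Lemma~\ref{lemma:fw} (constant width is best per job) and Lemma~\ref{lemma:no queue} (queueing only inflates $T$).

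\textbf{Averaging across jobs.} Jobs of the same type and epoch may receive different widths $k^n$, and the works $X^n$ may be correlated. Define the work-weighted pair
\[
\tau=\lim \frac{1}{N}\sum_n \frac{X^n}{s(k^n)}, \qquad \gamma=\lim\frac1N\sum_n \frac{k^nX^n}{s(k^n)}.
\]
We reparametrize by $u=1/s(k)$ (time per unit work) and $v=k/s(k)$ (GPU-hours per unit work). The set of achievable $(u,v)$ has the property that the optimal $v$ is a convex decreasing function of $u$. Hence, by Jensen with weights $X^n/\sum X^n$, a single width $k_{ij}$ attains time $\le\tau$ and usage $\le\gamma$ per unit work. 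Using ergodic averages (existence of time-average limits w.p.~1, via $\lambda_i$ and $\E[X_{ij}]$), the time-average cost of $\pi$ is at least $\sum\lambda_i\E[X_{ij}]k_{ij}/s_{ij}(k_{ij})$, and its average JCT is at least $\frac1\lambda\sum\rho_{ij}/s_{ij}(k_{ij})$. So $\{k_{ij}\}$ is feasible for \eqref{eq:online-opt-norescale} and lower-bounds $\pi$. Care is needed with $\liminf$ versus $\lim$ and with jobs still in the system at the horizon; under the finite-means assumption and stability, the latter are negligible.

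\textbf{Achievability and convexity.} The fixed-width policy achieves exactly \eqref{eqn:performance} and \eqref{eqn:operatingbudget} by Little-type renewal-reward arguments, since no queueing occurs and each epoch lasts $X_{ij}/s_{ij}(k_{ij})$. For convexity, substitute $u_{ij}=1/s_{ij}(k_{ij})$. The objective is linear in $u$. The constraint term $k/s(k)=u\,s^{-1}(1/u)$ is the perspective-like transform $\phi(u)=u\,g(1/u)$ with $g=s^{-1}$ convex increasing, so $\phi$ is convex. Alternatively, $k/s(k)$ is the perspective of a convex function. The domain $u\le 1/s(\base_i)$ is an interval. This gives a convex program; if $s$ is not strictly increasing we restrict to the monotone part.

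\textbf{Main obstacle.} We expect the hardest step to be the Jensen/averaging step: justifying, for arbitrary non-stationary, correlated arrivals and allocations, that a single width per $(i,j)$ dominates. We would first try to do this via the convexity of $\phi$ applied to work-weighted averages of $u$, rather than averages of $k$.
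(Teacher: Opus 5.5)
Your proposal is correct in substance, but it takes a genuinely different route from the paper.

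The paper argues by exchange and contradiction on a fixed well-behaved sample path. Presupposing an optimal offline policy, it:
\begin{enumerate}
\item removes queueing by time-shifting allocations (Lemma~\ref{lemma:no queue});
\item replaces any two distinct widths $k_1,k_2$ used on type-$(i,j)$ work, within one job or across jobs, by their time-weighted mean $(k_1t_1+k_2t_2)/(t_1+t_2)$, using Jensen on $s_{ij}$ (Lemma~\ref{lemma:fw});
\item evaluates fixed-width policies by sample-path limits;
\item lifts the result to the online setting because \eqref{eq:online-opt-norescale} depends only on $\{\rho_{ij}\}$.
\end{enumerate}
It then shows convexity via $z=1/s(k)$ and the second derivative of $z\,\beta(1/z)$, which is exactly your perspective argument.

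You instead prove a converse bound valid for every policy, clairvoyant or not, together with achievability. You average across jobs by work-weighted Jensen on $\phi(u)=u\,s^{-1}(1/u)$, so one convexity fact serves both the structural and the computational claims.

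Your route buys rigor where the paper is informal. You need not assume an optimal policy exists, and no separate online lift is required. You also average over infinitely many jobs at once. By contrast, an exchange touching finitely many jobs leaves every long-run average unchanged, so the paper's contradiction tacitly needs a positive-fraction limiting version, which your argument supplies.

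The paper's route buys short, intuitive structural lemmas that use only concavity of $s$. You can get the same economy by applying Jensen once over the disjoint union of all type-$(i,j)$ running intervals with $\bar k=\sum_n G^n/\sum_n W^n$; this merges your per-job and cross-job steps.

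The loose ends you flag close easily:
\begin{enumerate}
\item A finite average JCT forces the JCT of the $m$th arrival to be $o(m)$. So every job arriving by $t$ finishes by $t+o(t)$, and its GPU-hours can be charged against $\int_0^{t+o(t)}K(s)\,ds$.
\item Passing to a subsequence on which the bounded $u$-averages converge, together with continuity of $\phi$, settles $\liminf$ versus $\lim$.
\end{enumerate}

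Two small corrections:
\begin{enumerate}
\item Optimality of even spreading is Jensen and needs genuine concavity of $s$, not merely $s(k)/k$ nonincreasing. The paper's Lemma~\ref{lemma:fw} tacitly assumes the same.
\item Achievability should be argued by sample-path averages as in Lemma~\ref{lemma:budget of fixed width}, not by renewal--reward, since the model allows correlated, non-i.i.d.\ inputs.
\end{enumerate}
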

This appendix proves Theorem \ref{thm:hmo:summary} via a series of lemmas. 
Our argument uses an offline auxiliary problem as a technical tool: in the offline problem, the arrival times and inherent work of every job at every epoch are known to the customer a priori. We characterize the optimal offline policy on every \emph{well-behaved} sample path (defined below) and then lift this offline result to the online setting.

\subsection{Offline auxiliary problem and well-behaved sample paths}

A sample path $\arrival$ is an infinite sequence of arrival times and inherent works, where $x^{(\ell)}_{ij}$ denotes the inherent work of the $\ell$th arriving type-$i$ job in epoch $j$. Let $n_i(t)$ denote the number of type-$i$ arrivals by time $t$, and $n(t):=\sum_{i=1}^M n_i(t)$ the total number of arrivals.

\begin{definition}[Well-behaved sample path]
\label{def:well-behaved}
A sample path $\arrival$ is \emph{well-behaved} if its time-average arrival rates and time-average inherent work converge to their finite means:
\[
\lambda_i \;=\; \lim_{t\to\infty}\frac{n_i(t)}{t},
\qquad
\E[X_{ij}] \;=\; \lim_{t\to\infty}\frac{\sum_{\ell=1}^{n_i(t)}x_{ij}^{(\ell)}}{n_i(t)} \quad\forall i,j.
\]
\end{definition}

The finite-means assumption stated in Section~\ref{sec:model:problem} implies that a random sample path is well-behaved with probability $1$, so it suffices to derive a policy that is optimal for the set of well-behaved sample paths.

\subsection{Structural lemmas}

\begin{lemma}[No queueing]
\label{lemma:no queue}
For any well-behaved sample path $\arrival$, no job queues under the optimal policy. 
\end{lemma}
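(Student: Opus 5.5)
We argue by an exchange (time-shifting) argument on a fixed well-behaved sample path $\arrival$ in the offline problem. The statement to prove is this: any policy $\pi$ under which some job waits with zero GPUs for a set of positive measure can be modified into a policy $\pi'$ that never queues. The modified policy should have time-average cost no larger than that of $\pi$ and strictly smaller (or no larger) average JCT. Since ``optimal'' means minimizing average JCT subject to $\bar K\le b$, this lets us restrict attention to non-queueing policies, and strictness when queueing occurs on a set of positive density gives the claim.

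The key observation is that, with no rescaling overhead, the cost a job incurs depends only on the \emph{sequence of allocations} it receives as a function of its attained work, not on \emph{when} that service is delivered. For a job, write its service as a trajectory $k(\cdot)$ over its processing time. The GPU-hours consumed in epoch $j$ are $\int k\,dt$, where work accrues at rate $s_{ij}(k)$, and this integral is invariant under translating the trajectory in time.

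The construction of $\pi'$ is then as follows. For each job, we take exactly the allocation trajectory it receives under $\pi$ during the periods in which it is served ($k>0$). We delete the idle intervals and concatenate the served pieces, so that the job starts at its arrival time. Each job's total GPU-hours are unchanged, and its completion time weakly decreases, strictly whenever it was ever queued. Because the instantaneous cluster size in our model is unconstrained (only the long-run average is budgeted), no capacity conflict arises from compressing jobs independently.

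It remains to check that the time-average cost $\bar K=\lim_t \frac1t\int_0^t K(u)\,du$ does not increase. Total GPU-hours spent on jobs arriving by time $t$ is the same under $\pi$ and $\pi'$. The difference in $\int_0^t K$ is therefore the work of jobs that arrived before $t$ but whose service, under one of the two policies, spills past $t$. Under $\pi'$ service is only moved earlier, so $\int_0^t K'(u)\,du \ge \int_0^t K(u)\,du$ can hold pointwise. This is the main obstacle: the time average could in principle be larger at finite $t$. We handle it by bounding the spillover. Under $\pi'$ the cost of jobs in service at time $t$ is at most the cost of the jobs that arrived within their (unqueued) sojourn window. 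Using well-behavedness ($n_i(t)/t\to\lambda_i$ and the sample means of $x^{(\ell)}_{ij}$ converging), together with the finiteness of $\pi$'s average JCT, we show this spillover is $o(t)$. Concretely, Cesàro convergence of per-job costs implies the cost of the last $o(n(t))$ arrivals is $o(t)$. Both limits then equal $\lim_t \frac1t\sum_{\text{arrivals by }t}(\text{GPU-hours of job})$, so $\bar K'=\bar K\le b$.

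Finally, the average JCT of $\pi'$ is at most that of $\pi$, and strictly less if a positive fraction of jobs queue. This shows an optimal policy never queues. If queueing occurs only on a null-density set of jobs, it can be removed without changing either metric, giving the lemma in the sense needed for the subsequent fixed-width argument (Lemma~\ref{lemma:fw}).
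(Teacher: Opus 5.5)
Your proposal is correct and is essentially the paper's argument: the paper's $\pi'$ also deletes a queued job's idle intervals, allocating $k(t')$ at time $t$ where $t'$ is the smallest time with $t=t'-q(t')$, and concludes that per-job GPU-hours are unchanged while JCT drops. Your spillover bound on $\bar K$ and your null-density caveat (removing one job's wait does not change a long-run average, so the lemma holds only in the without-loss-of-generality sense) address points the paper's short proof glosses over.

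One step needs a small repair. Cesàro convergence of per-job costs under an arbitrary $\pi$ is not given a priori, but you do not need it. A finite average-JCT limit forces $T_\ell=o(\ell)$, so for any $\delta>0$ and all large $t$, every job arriving by $t$ finishes under $\pi$ by $(1+\delta)t$. This gives $\int_0^t K\le\int_0^t K'\le\int_0^{(1+\delta)t}K$, and hence $\limsup_t \frac1t\int_0^t K'\le(1+\delta)b$ for every $\delta>0$.
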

\begin{proof}
Assume there is a job queueing in the optimal policy. 
Suppose the job is allocated $k(t)$ GPUs at time $t$. 
Then, we create a policy $\pi'$ that removes all the queueing from the optimal policy. Mathematically, at any time $t$, $\pi'$ allocates the job $k(t')$ number of GPUs where $t'$ is the smallest time that satisfies $t=t'-q(t')$, where $q(t')$ is the jobs' total queueing time under the optimal policy before time $t'$.

\begin{figure}[h]
    \centering
        \centering
        \includegraphics[width=0.6\linewidth]{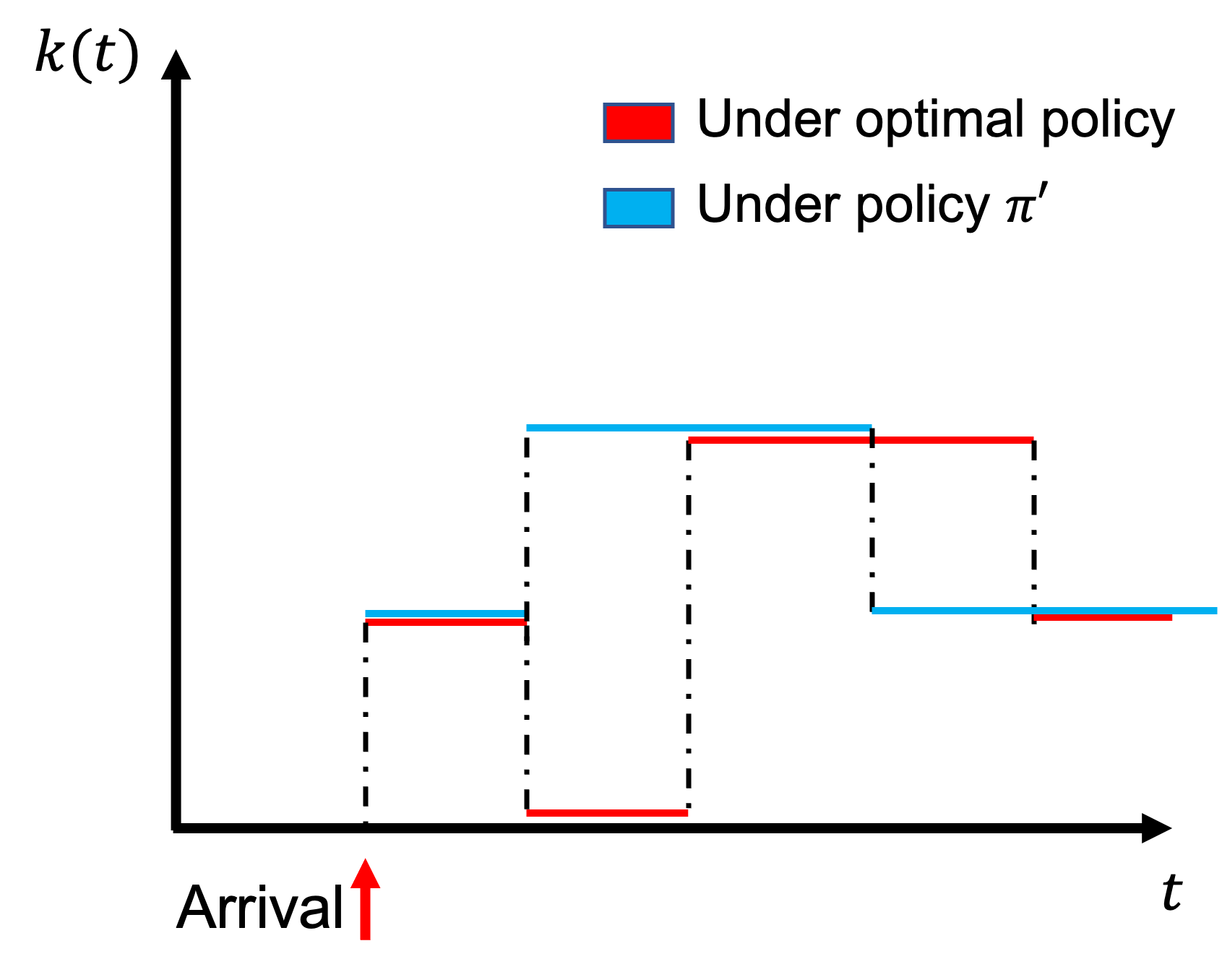}
    \caption{Illustration for the construction of policy $\pi'$.}
    \label{fig:noq}
\end{figure}

Removing its waiting time makes the mean response time lower but leaves the total GPU-hours the same, a contradiction. Note that for the online problem, we cannot create such a policy $\pi'$ because we do not know what happens in the future.
\end{proof}

\begin{lemma}[Same allocation for same speedup]
\label{lemma:fw}
Under any well-behaved sample path $\arrival$, any type-$i$ job at epoch $j$ is always assigned the same number of GPUs, and this allocation does not change until the epoch completes.
\end{lemma}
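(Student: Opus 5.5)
Given an optimal offline policy on a well-behaved sample path, we will show that replacing each epoch's time-varying allocation by a constant one, and then averaging those constants across all type-$i$ epoch-$j$ jobs, weakly improves both the JCT and the GPU-hour budget. By Lemma~\ref{lemma:no queue} we may assume no queueing, so each type-$i$ job in epoch $j$ receives an allocation $k(t)\ge \base_i$ from the moment the epoch starts until it ends.

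\textbf{Step 1: constant allocation within one epoch.} Fix one job's epoch with work $x$ and duration $\tau$ under the optimal policy. Then
\[
\int_0^\tau s_{ij}(k(t))\,dt = x ,
\qquad
\int_0^\tau k(t)\,dt = G
\]
is the GPU-hours used. Replace $k(t)$ by the constant $\bar k = G/\tau$. By concavity and monotonicity of $s_{ij}$, Jensen's inequality gives $\tau\, s_{ij}(\bar k)\ge \int_0^\tau s_{ij}(k(t))\,dt = x$. Hence the epoch finishes in time $\tau'\le\tau$ and uses $\bar k\tau'\le G$ GPU-hours.

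One caveat: the model states concavity only through $s(k)/k$ being nonincreasing. I would either read concavity in its usual sense, as the Jensen step needs, or pass to the concave envelope as the paper describes. Shortening epochs only shifts later epochs earlier, so no job's JCT increases.

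\textbf{Step 2: equalize allocations across jobs of the same type and epoch.} This is the main obstacle, because the argument must work on an infinite sample path using only time averages. For each $(i,j)$, job $\ell$ now runs at width $k_\ell$ for time $x_\ell/s_{ij}(k_\ell)$. Its time and GPU-hours are
\[
x_\ell\, g(k_\ell), \quad g(k)=\frac{1}{s_{ij}(k)},
\qquad
x_\ell\, h(k_\ell), \quad h(k)=\frac{k}{s_{ij}(k)} .
\]
Because $s$ is concave and increasing, $g$ is decreasing and convex on $[\base_i,\infty)$. The right way to combine jobs is to look at the work-weighted combination. Consider the pairs $(h(k),g(k))$ traced out as $k$ varies. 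Since $h$ is increasing (by the $s(k)/k$ monotonicity) and $g$ is convex decreasing in $k$, $g$ is also convex as a function of $h$ along this curve. Concretely, $g$ as a function of $u=h(k)$ equals $(1/k)\cdot u$ with $k=h^{-1}(u)$, and I will verify its convexity via the standard perspective trick.

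Given convexity in $u$, Jensen with weights $x_\ell$ shows the following. Choose the single $k^*$ with $h(k^*)$ equal to the work-weighted average of the $h(k_\ell)$ over the first $n_i(t)$ jobs. This keeps total GPU-hours unchanged and does not increase total time.

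Passing to the limit $t\to\infty$ uses well-behavedness: the running averages converge, and any limit point of the averaged widths gives a fixed-width policy whose time-average cost and JCT are no worse. This yields a fixed-width policy, proving the lemma. Combining this with the formulas \eqref{eqn:performance} and \eqref{eqn:operatingbudget} then gives Theorem~\ref{thm:hmo:summary}.
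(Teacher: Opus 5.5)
Your argument is correct. Step~1 is essentially the paper's within-job case: the paper merges two allocation levels pairwise, while you apply Jensen over the whole epoch at once.

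Step~2 takes a genuinely different route. The paper gives two same-class, same-epoch pieces of work with widths $k_1,k_2$ and durations $t_1,t_2$ the \emph{time-weighted} average width $k=(k_1t_1+k_2t_2)/(t_1+t_2)$. Since $x_1+x_2=s_{ij}(k_1)t_1+s_{ij}(k_2)t_2\le s_{ij}(k)(t_1+t_2)$, total processing time and total GPU-hours both weakly decrease, using nothing beyond concavity of $s_{ij}$.

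You instead hold GPU-hours fixed, which forces you to prove that $g\circ h^{-1}$ is convex. That is true, but not for the reason you give. An increasing $h$ and a convex decreasing $g$ do not by themselves make $g\circ h^{-1}$ convex: $g(k)=2-k$ and $h(k)=\sqrt{k}$ on $[0,1]$ give $2-u^2$. The valid argument is the paper's convexification. With $z=1/s_{ij}(k)$, the quantity $u=z\,\beta_{ij}(1/z)$ is a perspective of the convex $\beta_{ij}=s_{ij}^{-1}$, hence convex and nonincreasing in $z$, and the inverse of a convex nonincreasing function is convex. (Take the largest $k$ with $h(k)=u$ where $h$ is flat.) You can skip this detour entirely by setting $k^*=\sum_\ell k_\ell t_\ell/\sum_\ell t_\ell$ over the first $n_i(t)$ jobs, exactly as the paper does for a pair.

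What your route buys is a more honest treatment of the infinite horizon. The paper's contradiction argument modifies a single pair, which cannot change a long-run average. Its inequalities are also non-strict, so no contradiction actually arises. Your global averaging plus a limit point handles this, provided the limit point is extracted along a common subsequence for all $(i,j)$ and continuity of $g\circ h^{-1}$ is used to pass the JCT bound to the limit. This proves the without-loss-of-generality version: some optimal policy is fixed-width. That is all Theorem~\ref{thm:hmo:summary} needs, and it is the most one can claim without strict concavity. For example, $s(k)=1+k$ gives $g=1-h$, so mixed widths are equally optimal.

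Your caveat that Jensen needs genuine concavity, not just monotonicity of $s(k)/k$, applies equally to the paper's own proof.
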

\begin{proof}
    Suppose the optimal policy $P$ does not satisfy the statement in the lemma. 
    Then either $P$ assigns $k_1\neq k_2$ GPUs to different type-$i$ jobs at epoch $j$, or $P$ assigns $k_1\neq k_2$ GPUs to a single type-$i$ job at epoch $j$ at different times. Let $x_1,x_2$ be the work completed using $k_1$ and $k_2$ GPUs, respectively.

    We construct a policy $P'$ which uses the same GPU-hours, but has lower mean response time than $P$.
    $P'$ is identical to $P$ except for the work $x_1$ and $x_2$:
    Instead of using two different assignments $k_1$ and $k_2$, $P'$ will choose a constant number of GPUs, $k$, to use in both instance.
    Let $t_1=\frac{x_1}{s_{ij}(k_1)}$ and $t_2=\frac{x_2}{s_{ij}(k_2)}$ be the durations of each of the GPU assignments under $P$.
    We choose $k$ to be the time average of the two assignments by setting
    \begin{equation}
        k= k_1\cdot \frac{t_1}{t_1+t_2} + k_2 \cdot \frac{t_2}{t_1+t_2}.
        \label{eq:fix-width proof eq1}
    \end{equation}

    The concavity of $s_{ij}(k)$ implies that the total time to complete $x_1$ and $x_2$ is lower under $P'$.  
    To see this, consider the average rate of work completion when processing $x_1$ and $x_2$ under both policies.
    The average work rate under $P$ is 
    $$\frac{t_1}{t_1+t_2}\cdot s_{ij}(k_1) + \frac{t_2}{t_1+t_2}\cdot s_{ij}(k_2).$$
    The average work rate under $P'$ is $s_{ij}(k)$.
    By concavity, we have
    \begin{align*}
        &\frac{t_1}{t_1+t_2}\cdot s_{ij}(k_1) + \frac{t_2}{t_1+t_2}\cdot s_{ij}(k_2)\\&\leq s_{ij}\left(\frac{t_1}{t_1+t_2}\cdot k_1 + \frac{t_2}{t_1+t_2}\cdot k_2\right)=s_{ij}(k).
    \end{align*}
    The total time to process $x_1$ and $x_2$ can be computed as the total work ($x_1 +x_2$) divided by the average work rate.
    We thus have that $P'$ completes the $x_1 + x_2$ work sooner than $P$.
    As a result, $P'$ has a lower mean response time than $P$.

    It is easy to see that $P'$ does not use more total GPU-hours than $P$.
    Specifically, note that $P$ uses $k_1 t_1 + k_2 t_2$ GPU-hours to process $x_1$ and $x_2$.
    Let $t'_1$ and $t'_2$ be the times required to process $x_1$ and $x_2$ respectively under $P'$.  Then the GPU-hours used to process $x_1$ and $x_2$ under $P'$ is
    $$(t'_1 +t'_2)\left(\frac{k_1t_1}{t_1+t_2} + \frac{k_2t_2}{t_1+t_2}\right).$$
    We have already shown that $t'_1 + t'_2 \leq t_1+t_2$, giving 
    \begin{align*}
        &(t'_1 +t'_2)\left(\frac{k_1t_1}{t_1+t_2} + \frac{k_2t_2}{t_1+t_2}\right)\\
        &\leq(t_1 +t_2)\left(\frac{k_1t_1}{t_1+t_2} + \frac{k_2t_2}{t_1+t_2}\right)= k_1t_1+k_2t_2.
    \end{align*}

    This leads to the contradiction, in that the policy $P'$ is using less budget but achieving better mean response time. 
\end{proof}

Lemmas~\ref{lemma:no queue} and \ref{lemma:fw} together imply that the optimal offline policy is a \emph{fixed-width} policy in the sense of Definition~\ref{def:fixedwidth}: every job is allocated some number of GPUs immediately upon arrival, and its allocation depends only on its class and current epoch.

\subsection{Operating budget of fixed-width policies}

The following lemma develops an alternate formulation of the operating budget of a fixed-width policy and matches the closed-form expressions used in Section~\ref{sec:theory:hmo}.

\begin{lemma}[Operating budget of a fixed-width policy]
\label{lemma:budget of fixed width}
    Given a well-behaved sample path $\arrival$, the operating budget of a fixed width policy with parameters $\{k_{ij}\}$ is
    \[\bar{B} :=\lim_{t\to\infty} \frac{\int_0^tK(s)ds}{t} \ \stackrel{(a)}{=} \ \lim_{t\to\infty} \frac{\sum_{i=1}^{n(t)}B^{(i)}}{t} \ \stackrel{(b)}{=}  \ \sum_{i,j} \frac{\rho_{ij}k_{ij}}{s_{ij}(k_{ij})}.\]
    Here $B^{(i)}$ is defined to be the GPU-hours used to complete the $i^{th}$ arriving job.
\end{lemma}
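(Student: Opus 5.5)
Under a fixed-width policy, the $\ell$th type-$i$ job holds exactly $k_{ij}$ GPUs throughout its epoch $j$, with no queueing, for a duration $x^{(\ell)}_{ij}/s_{ij}(k_{ij})$. Its total GPU-hours are therefore
\[
B = \sum_j k_{ij}\,x^{(\ell)}_{ij}/s_{ij}(k_{ij}).
\]
The proof has two parts. Equality (b) is a direct computation from well-behavedness. Equality (a) is an exchange of ``integral of usage over time'' with ``sum of per-job usage'', the standard $H=\lambda G$ type of argument.

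\textbf{Step 1: equality (b).} Group the jobs arriving by time $t$ by class:
\[
\frac{1}{t}\sum_{\ell\le n(t)} B^{(\ell)} \;=\; \sum_{i,j}\frac{k_{ij}}{s_{ij}(k_{ij})}\cdot\frac{n_i(t)}{t}\cdot\frac{\sum_{\ell=1}^{n_i(t)}x^{(\ell)}_{ij}}{n_i(t)}.
\]
By Definition~\ref{def:well-behaved}, the two factors converge to $\lambda_i$ and $\E[X_{ij}]$. The sum is finite, so the limit is $\sum_{i,j}\lambda_i\E[X_{ij}]k_{ij}/s_{ij}(k_{ij})=\sum_{i,j}\rho_{ij}k_{ij}/s_{ij}(k_{ij})$.

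\textbf{Step 2: equality (a), sandwich argument.} Write $K(s)=\sum_\ell K^{(\ell)}(s)$, where $K^{(\ell)}$ is job $\ell$'s allocation, so that $B^{(\ell)}=\int K^{(\ell)}$. A job contributes to $\int_0^t K$ only if it arrived by time $t$, which gives the upper bound
\[
\int_0^t K(s)\,ds \le \sum_{\ell\le n(t)} B^{(\ell)}.
\]
For the lower bound, every job that arrived by $t$ and completed by $t$ contributes fully, so
\[
\int_0^t K(s)\,ds \ge \sum_{\ell\le n(t)} B^{(\ell)} - \sum_{\ell\le n(t):\ \text{job } \ell \text{ incomplete at } t} B^{(\ell)}.
\]
It remains to show that the residual term, divided by $t$, tends to $0$.

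\textbf{Step 3: the residual (main obstacle).} The residual is bounded by the total usage of jobs whose completion time exceeds $t$. Fix $\epsilon>0$ and split these jobs into two groups.
\begin{itemize}
\item Jobs arriving in $(t(1-\epsilon),t]$. Their total $B$ is at most $\sum_{\ell\le n(t)}B^{(\ell)}-\sum_{\ell\le n(t(1-\epsilon))}B^{(\ell)}$. By Step 1 applied at both $t$ and $t(1-\epsilon)$, this is $\epsilon\bar B\,t+o(t)$.
\item Jobs arriving by $t(1-\epsilon)$ but still running at $t$. Each has sojourn time $T^{(\ell)}=\sum_j x^{(\ell)}_{ij}/s_{ij}(k_{ij})$ greater than $\epsilon t$. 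Its usage is at most $\bar k\,T^{(\ell)}$ with $\bar k=\max k_{ij}$, and its work is proportional to $T^{(\ell)}$ up to constants.
\end{itemize}
The key fact for the second group is that convergence of the per-job averages in Definition~\ref{def:well-behaved} forces $\max_{\ell\le n(t)}x^{(\ell)}_{ij}/t\to0$. Indeed, $x^{(n)}/n = S_n/n-\frac{n-1}{n}\,S_{n-1}/(n-1)\to0$, where $S_n$ is the partial sum, and $n_i(t)/t\to\lambda_i$. Hence for large $t$ no job arriving by $t$ has a sojourn time exceeding $\epsilon t$, so the second group is empty.

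Dividing by $t$ and letting $\epsilon\to0$ gives liminf $\ge$ limsup. Thus the limit in (a) exists and equals the expression in (b).
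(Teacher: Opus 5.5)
Your proof is correct. Step 1 is the paper's argument for (b): group the per-job GPU-hours $\sum_j x^{(\ell)}_{ij}k_{ij}/s_{ij}(k_{ij})$ by class and epoch, then factor each class sum as $\frac{n_i(t)}{t}\cdot\frac{1}{n_i(t)}\sum_{\ell\le n_i(t)}x^{(\ell)}_{ij}\to\lambda_i\E[X_{ij}]=\rho_{ij}$. You differ from the paper on (a). The paper gives no argument there, only the remark that tracking $\int_0^t K(s)\,ds$ is equivalent to tracking per-job GPU-hours, and it tacitly assumes the limit defining $\bar B$ exists. Your sandwich argument supplies the missing step and also proves that the limit exists. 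The upper bound holds because no GPU is held before a job arrives. The lower bound loses only the usage of jobs still running at $t$. The key input is that Ces\`aro convergence of the $x^{(\ell)}_{ij}$, together with $n_i(t)/t\to\lambda_i$, forces every individual job size to be $o(t)$. This is exactly where the fixed-width structure enters. With no queueing and a constant rate $s_{ij}(k_{ij})\ge s_{ij}(\base_i)>0$, sojourn time is a fixed multiple of work. So the jobs straddling $t$ all arrived in a window of length $o(t)$, and their usage is $o(t)$ by Step 1. For a general policy with queueing, the same identity would also need a stability argument, so (a) is not a purely formal accounting step. The paper's route buys brevity; yours buys an actual proof. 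In a write-up, spell out the passage from $x^{(n)}/n\to0$ to $\max_{\ell\le n}x^{(\ell)}/n\to0$, and take that maximum over $\ell\le n_i(t)$ rather than $n(t)$. Also note that classes with $\lambda_i=0$, or with finitely many arrivals, are handled by the same bound.
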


\begin{proof}

    Part (a) of our claim says that tracking the total GPU usage at every time $t$ is equivalent to tracking the GPU-hours used to process each job, $B^{(i)}$. 

    To prove part (b), we show that we can take the limit of this equivalent formulation to prove our claim.
    Note that the fixed width policy assigns $k_i$ GPUs to any type-$i$ job. 
    Hence, the GPU-hours spent on the $\ell^{th}$ type-$i$ job is $\sum_j \frac{x^{(\ell)}_{ij}k_{ij}}{s_{ij}(k_{ij})}$.

    Thus we have that 
    \begin{align*}
        \lim_{t\to\infty} \frac{\sum_{i=1}^{n(t)}B^{(i)}}{t}&=\lim_{t\to\infty} \frac{\sum_{i=1}^{M} \sum_{\ell=1}^{n_i(t)} \sum_j \frac{x^{(\ell)}_{ij}k_{ij}}{s_{ij}(k_{ij})}}{t} \\
        &=\sum_{i=1}^M \sum_j \frac{k_{ij}}{s_{ij}(k_{ij})} \left(\lim_{t\to\infty}
        \frac{\sum_{\ell=1}^{n_i(t)}x^{(\ell)}_{ij}}{t}
        \right),
    \end{align*}
    where
    \[\lim_{t\to\infty}
        \frac{\sum_{\ell=1}^{n_i(t)}x^{(\ell)}_{ij}}{t} = \lim_{t\to\infty}
        \frac{\sum_{\ell=1}^{n_i(t)}x^{(\ell)}_{ij}}{n_i(t)} \frac{n_i(t)}{t} = \lambda_i\E[X_{ij}] = \rho_{ij}.\]
    
\end{proof}

\subsection{Optimality of \policybest-no-rescaling and lift to the online setting}

We now show that \policybest-no-rescaling is offline optimal for any well-behaved sample path $\arrival$.

\begin{lemma}
\label{lemma:policybest-offline}
    For any well-behaved sample path $\arrival$, \policybest-no-rescaling is the optimal offline policy.
\end{lemma}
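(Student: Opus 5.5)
We reduce Lemma~\ref{lemma:policybest-offline} to the structural lemmas. Fix a well-behaved sample path $\arrival$ and let $P$ be any offline policy whose long-run operating budget is at most $b$. The argument has three steps.

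\textbf{Step 1: reduce to fixed-width policies.} By Lemma~\ref{lemma:no queue}, we may remove queueing from $P$. This lowers average JCT and leaves GPU-hours unchanged. By Lemma~\ref{lemma:fw}, we may then repeatedly merge distinct allocations used by type-$i$, epoch-$j$ work into their time-weighted average. Each merge does not increase GPU-hours and does not increase response time.

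The main obstacle is that these exchange arguments are pairwise and finite, while the sample path is infinite and may contain infinitely many distinct allocations. To handle this, we would not iterate swaps. Instead, for each $(i,j)$ we define directly the time-average allocation $\bar k_{ij}(t)$ over all type-$(i,j)$ processing done by $P$ up to time $t$. We then apply Jensen's inequality (concavity of $s_{ij}$) to the whole collection at once. Concretely, let $W_{ij}(t)$ be the type-$(i,j)$ work completed by time $t$, and let $\tau_{ij}(t)$ be the total job-time spent on it. Concavity gives $W_{ij}(t)\le \tau_{ij}(t)\,s_{ij}(\bar k_{ij}(t))$. The GPU-hours spent are $\tau_{ij}(t)\bar k_{ij}(t)$. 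We then pass to limits along a subsequence where the $\bar k_{ij}(t)$ converge. This is possible if we compactify by allowing $k=\infty$, and the monotonicity of $s_{ij}(k)/k$ controls that case. Since jobs that never finish would give infinite mean JCT, we may assume all jobs finish. Well-behavedness then gives $W_{ij}(t)/t\to\rho_{ij}$.

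\textbf{Step 2: lower-bound JCT and budget.} Average JCT is the time-average number in system divided by $\lambda$, by a sample-path Little's law. The number in system equals $\sum_{i,j}\tau_{ij}(t)/t$ in the limit. Combining the inequalities from Step 1 yields, for some limit widths $k^*_{ij}\ge\base_i$,
\[
\text{average JCT of } P \;\ge\; \frac{1}{\lambda}\sum_{i,j}\frac{\rho_{ij}}{s_{ij}(k^*_{ij})},
\qquad
b\;\ge\;\sum_{i,j}\frac{\rho_{ij}k^*_{ij}}{s_{ij}(k^*_{ij})}.
\]
The second inequality uses that spending less time per unit of work at a fixed average width only lowers GPU-hours. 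More precisely, GPU-hours per unit of work equal $\bar k/\text{rate}$, and this is at least $\bar k/s(\bar k)$. So the widths $\{k^*_{ij}\}$ are feasible for \eqref{eq:online-opt-norescale}, and $P$'s JCT is at least the objective value they attain.

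\textbf{Step 3: BOA-no-rescaling achieves the optimum.} By Lemma~\ref{lemma:budget of fixed width}, the fixed-width policy with the optimal widths of \eqref{eq:online-opt-norescale} achieves exactly the budget expression. An identical sample-path computation shows it achieves exactly the JCT objective, since each job's response time is $\sum_j x^{(\ell)}_{ij}/s_{ij}(k_{ij})$. Its JCT is therefore at most that of $P$, which proves offline optimality.

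Convexity of \eqref{eq:online-opt-norescale} would be checked separately, e.g.\ via the substitution to rates. Since that is not part of this lemma, we defer it.
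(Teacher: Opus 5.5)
Your argument is correct in substance, but it takes a different route from the paper. The paper's proof is two lines. It invokes Lemmas~\ref{lemma:no queue} and~\ref{lemma:fw} to assert that the optimal offline policy is fixed-width. It then evaluates the JCT and budget of a fixed-width policy (the same computation as your Step~3), so optimizing over widths is exactly problem~\eqref{eq:online-opt-norescale}.

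You instead prove a direct lower bound for an arbitrary budget-feasible policy $P$. You use an aggregate Jensen inequality per class--epoch, $W_{ij}(t)\le\tau_{ij}(t)\,s_{ij}(\bar k_{ij}(t))$, together with the easy direction of sample-path Little's law, $\int_0^t N(u)\,du\le\sum_{\ell\le n(t)}T_\ell$, and a subsequence limit of the average widths. This shows that $P$'s JCT and budget dominate those of some width vector.

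What your route buys is rigor the paper's route lacks. The paper's exchange arguments presuppose that an optimal policy exists and need strict concavity to get a strict improvement. Moreover, as you correctly observe, a single swap alters only finitely many jobs and so changes no long-run average, so the claimed contradiction is never actually obtained. Your argument avoids all of this. In fact it makes the appeals to Lemmas~\ref{lemma:no queue} and~\ref{lemma:fw} in your Step~1 unnecessary: queueing only makes the number in system exceed $\sum_{i,j}\tau_{ij}(t)/t$, which is the right direction for a lower bound. What the paper's route buys is the explicit structural picture (no queueing, constant width per epoch), which it reuses for the heterogeneous and rescaling variants.

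Two details need tightening.

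First, $W_{ij}(t)/t\to\rho_{ij}$ does not follow from ``all jobs finish'' plus well-behavedness. A FIFO policy that serves too slowly finishes every job, yet its backlog grows linearly. The fact does follow from finiteness of the mean JCT, taken as a limit as the paper defines it. That gives $T_n/n\to0$, hence $\max_{\ell\le n(t)}T_\ell=o(t)$. So the unfinished work at time $t$ belongs to jobs that arrived in the last $o(t)$ time, and it is $o(t)$ by well-behavedness.

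Second, if some $k^*_{ij}=\infty$ with $\rho_{ij}>0$, the limit is not a feasible width. This can happen only when $\lim_k s_{ij}(k)/k>0$, since otherwise the budget term blows up. Here you need an approximation step. Because $k/s_{ij}(k)$ is nondecreasing, finite widths $k\to\infty$ have budget terms at most the limiting value, while their objective terms converge to the limiting one. The limiting objective is therefore a limit of feasible objective values, and hence at least the optimum of~\eqref{eq:online-opt-norescale}. This step presupposes that this optimum is attained, as the lemma itself implicitly does.
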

\begin{proof}
    Lemmas \ref{lemma:no queue} and \ref{lemma:fw} show that the optimal offline policy is a fixed-width policy. Thus, it suffices to show that \policybest-no-rescaling is the optimal offline fixed-width policy.

    For any job in $\arrival$ of type $i$ and size $x_{ij}$ at epoch $j$, the JCT under a fixed width policy parameterized with $k_i$ is $\sum_j \frac{x_{ij}}{s_{ij}(k_{ij})}$. 
    Thus, we have that

    \begin{align*}
        \E[T]:=\lim_{t\to\infty} \frac{\sum_{i=1}^{n(t)}T_i}{n(t)}&=\lim_{t\to\infty}\sum_{i=1}^M \frac{\sum_{\ell=1}^{n_i(t)} \sum_j x^{(\ell)}_{ij}}{s_{ij}(k_{ij})n(t)}\\&=\sum_{i=1}^M \sum_j \frac{1}{s_{ij}(k_{ij})} \left(
    \lim_{t\to\infty} \frac{\sum_{\ell=1}^{n_i(t)}x^{(\ell)}_{ij}}{n(t)}
    \right)\\
    &=\frac{1}{\lambda}\sum_{i=1}^M \frac{\sum_j \rho_{ij}}{s_{ij}(k_{ij})}.
    \end{align*}
    Moreover, by Lemma~\ref{lemma:budget of fixed width}, the operating budget is $\sum_{i,j} \frac{\rho_{ij}k_{ij}}{s_{ij}(k_{ij})}$. Thus solving the optimal set of $k_{ij}$ is equivalent to solving the following optimization problem:
    \begin{equation*}
        \begin{aligned}
        & \underset{k_{ij}} {\text{minimize}}
        & & \frac{1}{\lambda}\sum_{i,j}\frac{  \rho_{ij} }{s_{ij}(k_{ij})}\\
        & \text{subject to}
        & & \sum_{i,j}\frac{\rho_{ij} k_{ij}}{s_{ij}(k_{ij})}\leq b.\\
        \end{aligned}
    \end{equation*}

This is the optimization problem~\eqref{eq:online-opt-norescale} that defines \policybest-no-rescaling. This shows that \policybest-no-rescaling is the optimal fixed-width policy. 
\end{proof}

\medskip
\noindent\textbf{Lift to the online setting.} We now complete the proof of Theorem~\ref{thm:fw-optimal} by lifting Lemma~\ref{lemma:policybest-offline} to the online problem. The optimization~\eqref{eq:online-opt-norescale} that defines \policybest-no-rescaling depends only on the load parameters $\{\rho_{ij}\}$, which are known to the customer in the online setting. Hence \policybest-no-rescaling is implementable online. By Lemma~\ref{lemma:policybest-offline}, \policybest-no-rescaling is optimal on every well-behaved sample path; since the finite-means assumption implies that the set of sample paths that are not well-behaved has measure zero, \policybest-no-rescaling is optimal in the online setting almost surely.

\subsection{Convex optimization Translation}
\label{sec:appendix convex}

In this appendix, we rewrite the optimization problem \eqref{eq:online-opt-norescale} into a convex optimization problem. The problem~\eqref{eq:online-opt-norescale} is not immediately convex because the constraint terms $\frac{k_{ij}}{s_{ij}(k_{ij})}$ are not necessarily convex. However, by applying a change of variables, we can translate it into a convex optimization problem.

\begin{theorem}
    The optimization problem \eqref{eq:online-opt-norescale} can be solved in the following two steps:
    \begin{enumerate}
        \item Solve the convex optimization problem \eqref{eq:convex} to get the optimal solution $z_{ij}$;
        \item Get the optimal $k_{ij}=s_{ij}^{-1}(\frac{1}{z_{ij}})$ where $s_{ij}^{-1}$ is the inverse of the speedup function $s_{ij}$.
    \end{enumerate}
    \label{thm:convex}
\end{theorem}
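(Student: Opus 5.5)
The plan is a change of variables that makes the objective linear and the budget constraint a sum of convex one-dimensional functions. I expect \eqref{eq:convex} to be the problem obtained by setting $z_{ij}=1/s_{ij}(k_{ij})$. In the variables $\{z_{ij}\}$ it reads: minimize $\frac{1}{\lambda}\sum_{i,j}\rho_{ij}z_{ij}$ subject to $\sum_{i,j}\rho_{ij}\,g_{ij}(z_{ij})\le b$ and $0<z_{ij}\le 1/s_{ij}(\base_i)$, where $g_{ij}(z)=z\,s_{ij}^{-1}(1/z)$. The proof then has three steps: the substitution is a bijection between feasible sets, the objectives match, and the new problem is convex.

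\textbf{Step 1: make the substitution well defined.} Speedup functions may be flat on part of their domain, in which case $s_{ij}$ is not invertible there. I would first show we may restrict each $k_{ij}$ to the range on which $s_{ij}$ is strictly increasing. If $s_{ij}(k')=s_{ij}(k)$ with $k'>k$, then replacing $k'$ by the smallest such $k$ leaves the objective unchanged and weakly lowers $\rho_{ij}k/s_{ij}(k)$ in the constraint. So some optimum of \eqref{eq:online-opt-norescale} uses only points where $s_{ij}$ is strictly increasing.

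On that range, $s_{ij}^{-1}$ is well defined, continuous and increasing. The map $k_{ij}\mapsto z_{ij}=1/s_{ij}(k_{ij})$ is then a bijection onto an interval $\bigl(z^{\min}_{ij},\,1/s_{ij}(\base_i)\bigr]$. The constraint $k_{ij}\ge\base_i$ becomes $z_{ij}\le 1/s_{ij}(\base_i)$.

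\textbf{Step 2: rewrite objective and constraint.} The objective $\frac{1}{\lambda}\sum\rho_{ij}/s_{ij}(k_{ij})$ becomes $\frac{1}{\lambda}\sum\rho_{ij}z_{ij}$, which is linear. Each budget term $\rho_{ij}k_{ij}/s_{ij}(k_{ij})$ becomes $\rho_{ij}\,z_{ij}\,s_{ij}^{-1}(1/z_{ij})=\rho_{ij}g_{ij}(z_{ij})$. So optimal $z$'s correspond exactly to optimal $k$'s via $k_{ij}=s_{ij}^{-1}(1/z_{ij})$, which is item 2 of the theorem.

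\textbf{Step 3: prove convexity (the main obstacle).} We need each $g_{ij}$ to be convex in $z$. Write $h=s_{ij}^{-1}$. Because $s_{ij}$ is concave and increasing, its inverse $h$ is convex and increasing. This is the standard epigraph/hypograph reflection argument, which I would spell out using the concavity inequality as in the proof of Lemma~\ref{lemma:fw}.

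Then $g_{ij}(z)=z\,h(1/z)$ is the perspective function $(x,t)\mapsto t\,h(x/t)$ of $h$, restricted to the line $x=1$, $t=z>0$. Perspectives of convex functions are jointly convex, and restricting to an affine line preserves convexity. Hence $g_{ij}$ is convex. If a direct argument is preferred, one can verify it by checking $g_{ij}(\theta z_1+(1-\theta)z_2)\le\theta g_{ij}(z_1)+(1-\theta)g_{ij}(z_2)$ with weights $\mu=\theta z_1/(\theta z_1+(1-\theta)z_2)$ applied to $h$ at the points $1/z_1$ and $1/z_2$.

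\textbf{Conclusion and caveat.} A linear objective over the sublevel set of a sum of convex functions, intersected with box constraints, is a convex program, which gives item 1. The delicate point is that the model's stated ``concavity'' assumption only requires $s_{ij}(k)/k$ to be nonincreasing. The argument above needs genuine concavity of $s_{ij}$ (as Lemma~\ref{lemma:fw} already uses), or the paper's replacement of $s_{ij}$ by its concave envelope. I would state this explicitly.
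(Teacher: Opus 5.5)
Your proposal is correct and follows essentially the same route as the paper. Both substitute $z_{ij}=1/s_{ij}(k_{ij})$, note that the objective becomes linear, and restrict to the range where $s_{ij}$ is invertible; the paper does this via $d_{ij}=\sup_k s_{ij}(k)$ and the box $1/d_{ij}\le z_{ij}\le 1/s_{ij}(\base_i)$. Both then show each budget term $z\,s_{ij}^{-1}(1/z)$ is convex because $s_{ij}^{-1}$ is convex.

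The one real difference is how that last convexity is established. The paper differentiates twice to get a positive multiple of $\beta_{ij}''(1/z)$, which tacitly assumes $s_{ij}^{-1}$ is twice differentiable. Your perspective-function argument needs only convexity of $s_{ij}^{-1}$. Your justification for discarding flat regions, and your remark that genuine concavity (not just $s_{ij}(k)/k$ nonincreasing) is required, make explicit points the paper leaves implicit.
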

\begin{proof}

Define $z_{ij}:=\frac{1}{s_{ij}(k_{ij})}$. Define $d_{ij}:=\sup_{k\geq \base_i} s_{ij}(k)$ (if it does not exist, let $d_{ij}=\infty$). If exists $k$ such that $s_{ij}(k)=d_{ij}$, denote the smallest one by $\xi_{ij}$; Otherwise, let $\xi_{ij}$ be $\perp$.

Note that since $s_{ij}$ is non-decreasing and concave, we have that it is strictly increasing in $[\base_i,\xi_{ij}]$ ($[\base_i,\infty)$ if $\xi_{ij}=\perp$).
Thus we can define the function $\beta_{ij} = s_{ij}^{-1}$ to be the inverse of the speedup function on $[s_{ij}(\base_i),d_{ij}]$ ($[s_{ij}(\base_i), d_{ij})$ if $\xi_{ij}=\perp$).
Using the fact that $s_{ij}$ is increasing, positive and concave, we have that $\beta_{ij}$ is decreasing and convex.

By definition of $z_{ij}$, we have that $k_{ij}=\beta_{ij}(\frac{1}{z_{ij}})$. Substituting this into the optimization problem \eqref{eq:online-opt-norescale}, we have 
\begin{equation}
    \begin{aligned}
    & \underset{z_{ij}} {\text{minimize}}
    & & \sum_{i,j}\rho_{ij}z_{ij}\\
    & \text{subject to}
    & & \sum_{i,j} \rho_{ij} z_{ij}\beta_{ij}(\frac{1}{z_{ij}})\leq b\\
    & & & \frac{1}{s_{ij}(\base_i)}\geq z_{ij} \geq \frac{1}{d_{ij}}.\\
    \end{aligned}
    \label{eq:convex}
\end{equation}

Note that \[\frac{d}{dz_{ij}}(z_{ij}\beta_{ij}(\frac{1}{z_{ij}}))=\beta_{ij}(\frac{1}{z_{ij}}) - \frac{1}{z_{ij}}\beta_{ij}'(\frac{1}{z_{ij}}),\]
and 
\begin{align*}
    \frac{d^2}{dz_{ij}^2}(z_{ij}\beta_{ij}(\frac{1}{z_{ij}}))&=\frac{-1}{z_{ij}^2}\beta_{ij}'(\frac{1}{z_{ij}}) +\frac{1}{z_{ij}^2}\beta_{ij}'(\frac{1}{z_{ij}}) + \frac{1}{z_{ij}^2}\beta_{ij}''(\frac{1}{z_{ij}})\\
    &=\frac{1}{z_{ij}^2}\beta_{ij}''(\frac{1}{z_{ij}})>0 .
\end{align*}
Thus the optimization problem (\ref{eq:convex}) is a convex optimization, which we can numerically solve for the optimal $z_{ij}$.

Finally, given the optimal $z_{ij}$, we can get the optimal $k_{ij}$ by letting $k_{ij}=\beta_{ij}(\frac{1}{z_{ij}})$.

\end{proof}

\section{Detailed heuristic for BOA width calculator}
\label{app:boa width}

We provide the detailed pseudocode for our BOA width calculator as follows (Algorithm~\ref{alg:boa_width_calculator}). We note that we only implement the glue-parameter-based heuristic for homogeneous clusters (i.e., a heuristic for the MICP \eqref{eq:online-opt-rescale}). For the BOA policy on heterogeneous clusters, we directly use a solver to solve the MICP \eqref{eq:hetero-opt-rescale}.
\begin{algorithm}[h]
\caption{BOA Width Calculator}
\label{alg:boa_width_calculator}
\DontPrintSemicolon

\KwIn{Total budget $b$, Job classes $1,\dots,M$ with epoch counts $l_1,\dots,l_M$.}
\KwOut{Optimal parameters $\{k_{ij}^*\}$ for the fixed-width policy.}

\tcp{First Step: Generate a set of glue configurations.}
$\mathcal{G} \leftarrow \emptyset$\;
\For{$n \leftarrow 1$ \KwTo $50$}{
    \ForEach{job class $i \in \{1,\dots,M\}$}{
        Define candidate set $S_i = \{2^0, 2^1, \dots, 2^{\lfloor \log_2 l_i \rfloor}\}$\;
        Sample $g_i$ uniformly from $S_i$\;
    }
    Add configuration $G = \{g_1, \dots, g_M\}$ to $\mathcal{G}$\;
}

\tcp{Second Step: Solve a feasible solution for each glue configuration.}
$\mathbb{E}[T]_{\min} \leftarrow \infty$\;
$\{k_{ij}^*\} \leftarrow \emptyset$\;
\ForEach{$G \in \mathcal{G}$}{
    Construct super-epochs by gluing adjacent epochs for each class $i$ according to $g_i \in G$\;
    $b_{run} \leftarrow b$\;
    $b' \leftarrow \infty$\;
    
    \While{$b' > b$}{
        Compute $\{k_{ij}\}$ by solving \textbf{Optimization Problem~\eqref{eq:online-opt-norescale}} with budget $b_{run}$\;
        Round each $k_{ij}$ to the nearest integer on the non-decreasing concave hull of $s_{ij}$\;
        
        Compute $\mathbb{E}[T]$ and total cost $b'$ using $\{k_{ij}\}$ (Theorem~\ref{thm:fw-rescale})\;
        
        \eIf{$b' > b$}{
            $b_{run} \leftarrow 0.99 \cdot b_{run}$\;
        }{
            \If{$\mathbb{E}[T] < \mathbb{E}[T]_{\min}$}{
                $\mathbb{E}[T]_{\min} \leftarrow \mathbb{E}[T]$\;
                $\{k_{ij}^*\} \leftarrow \{k_{ij}\}$\;
            }
        }
    }
}

\Return $\{k_{ij}^*\}$\;
\end{algorithm}

\section{Usage Statistics}
\label{sec:usage}

As discussed in Section~\ref{sec:eval:setup}, the operating budget reported in Figure~\ref{fig:impl:impl} reflects the effective GPU hours—the duration during which GPUs are actively allocated to jobs. This metric excludes the platform-dependent latency between when a node is marked for release by the allocator and when it is actually reclaimed by the cloud platform.

Figure~\ref{fig:actual} extends Figure \ref{fig:impl:impl} by providing a comparison of the total usage for both policies. These results demonstrate that these infrastructure-level reclamation overheads affect the operating budget of both policies, but the negative effect on \boa is smaller. Moreover, even with these infrastructure-level reclamation overheads, \boa still significantly improves the Pareto frontier of average JCT vs. operating budget.

\begin{figure}[h]
    \centering
        \centering
        \includegraphics[width=0.6\linewidth]{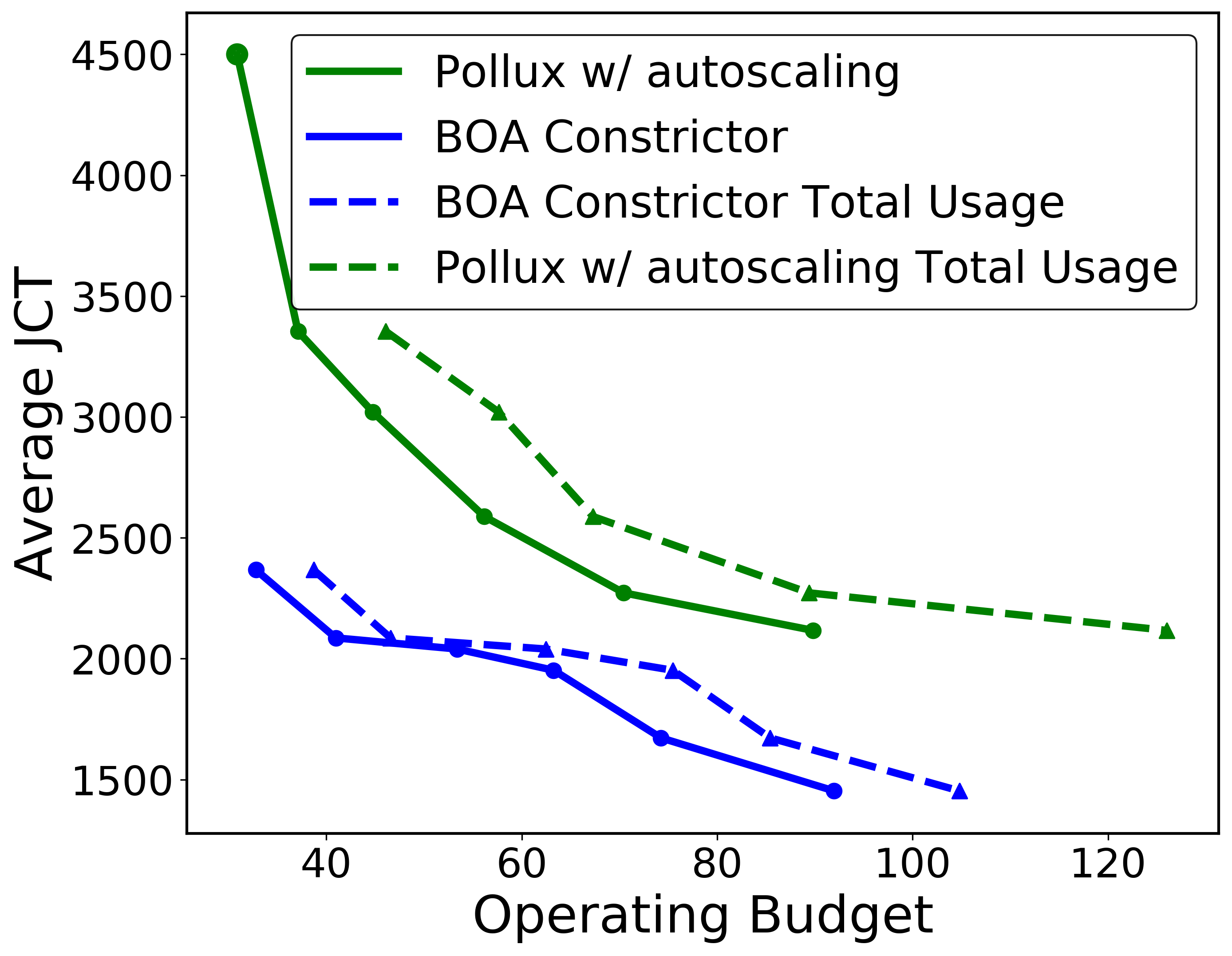}
    \caption{Actual Usage v.s. Effective Usage}
    \label{fig:actual}
\end{figure}

\section{Loss of the arrival-routing simplification}
\label{app:simplification}

Section~\ref{sec:hetero} restricts attention to \emph{hetero-fixed-width} policies (Definition~\ref{def:BOA-hetero-rescale}). This appendix compares the optimal hetero-fixed-width policy to the optimal unrestricted per-epoch policy in the no-rescaling setting, showing that the simplification incurs negligible loss in average JCT.

The unrestricted optimization, parameterized by per-epoch routing $\{p_{ij}^{(h)}\}$ and widths $\{k_{ij}^{(h)}\}$, minimizes the average JCT subject to the operating cost being at most $b$:
\begin{equation}
\begin{aligned}
& \underset{\{k_{ij}^{(h)},p_{ij}^{(h)}\}}{\text{minimize}}
& & \frac{1}{\lambda}\sum_{i,j,h}\frac{p_{ij}^{(h)}\,\rho_{ij}}{s_{ij}^{(h)}(k_{ij}^{(h)})}\\
& \text{subject to}
& & \sum_{i,j,h}\frac{c^{(h)}\,p_{ij}^{(h)}\,\rho_{ij}\,k_{ij}^{(h)}}{s_{ij}^{(h)}(k_{ij}^{(h)})}\leq b,\\
& & & \sum_{h\in H}p_{ij}^{(h)}=1\quad\forall i,j,\\
& & & k_{ij}^{(h)}\geq \base_i^{(h)},\ p_{ij}^{(h)}\in[0,1].
\end{aligned}
\label{eq:hetero-opt-norescale}
\end{equation}

The hetero-fixed-width optimization replaces $p_{ij}^{(h)}$ with per-class arrival routing $p_i^{(h)}$ (no $j$ index):
\begin{equation}
\begin{aligned}
& \underset{\{k_{ij}^{(h)},p_i^{(h)}\}}{\text{minimize}}
& & \frac{1}{\lambda}\sum_{i,j,h}\frac{p_i^{(h)}\,\rho_{ij}}{s_{ij}^{(h)}(k_{ij}^{(h)})}\\
& \text{subject to}
& & \sum_{i,j,h}\frac{c^{(h)}\,p_i^{(h)}\,\rho_{ij}\,k_{ij}^{(h)}}{s_{ij}^{(h)}(k_{ij}^{(h)})}\leq b,\\
& & & \sum_{h\in H}p_i^{(h)}=1\quad\forall i,\\
& & & k_{ij}^{(h)}\geq \base_i^{(h)},\ p_i^{(h)}\in[0,1].
\end{aligned}
\label{eq:hetero-fw-opt-norescale}
\end{equation}

We compared the average JCT of the two above approaches at a variety of budget levels, and our simplification was always within 5\% of the true optimal average JCT.


\section{Heterogeneous cluster pricing}
\label{app:heter-pricing}

Table~\ref{tbl:heter-pricing} lists the Azure SKUs and on-demand list prices we use to compute USD-denominated operating budget in the heterogeneous Sia setting (Section~\ref{sec:eval:setup}).

{\small
\begin{table}[h]
    \centering
    \begin{tabular}{|l|c|c|c|}
        \hline
        \textbf{GPU type} & \textbf{Memory} & \textbf{Azure SKU} & \textbf{USD / GPU-hour} \\
        \hline\hline
        T4   & 16~GB & NC64as\_T4\_v3 & 1.088 \\ \hline
        V100 & 32~GB & ND40rs\_v2     & 2.754 \\ \hline
        A100 & 40~GB & ND96asr\_v4    & 3.400 \\ \hline
    \end{tabular}
    \caption{On-demand prices used to compute USD-denominated cost in the heterogeneous Sia setting (Azure list prices, \texttt{eastus}, queried May~2026).}
    \label{tbl:heter-pricing}
\end{table}
}

\section{Implementing the autoscaling variants}
\label{app:autoscaling-extensions}

This appendix gives the full details of the three autoscaling baselines (Pollux-with-Autoscaling, Rubick-with-Autoscaling, and Sia-with-Autoscaling) summarized in Section~\ref{sec:eval:setup}. All three are built around the same cluster-efficiency heuristic that the corresponding fixed-cluster schedulers already use; the differences are in how that heuristic is operationalized when the cluster size becomes a decision variable.

\subsection{Tuning the target efficiency $c$}

While~\cite{qiao2021pollux} only suggests setting the target efficiency at $0.5$, we explore the full range of the target efficiency level, $c$; this is a performance knob that we can tune to evaluate the autoscaling variants at different operating budgets. Specifically, we define $\Delta = \min\{0.3(1-c), 0.3c\}$. The cluster size is increased whenever the cluster efficiency exceeds $c + \Delta$, and decreased whenever the cluster efficiency falls below $c - \Delta$. The new cluster size is then set using combinatorial optimization to select the cluster size and job allocations whose cluster efficiency is closest to $c$. Importantly, although adjusting the target efficiency $c$ controls the GPU usage of the policy, there is no analysis predicting the average number of GPU-hours that an autoscaling variant will consume for a given value of $c$.

\subsection{Rubick: non-monotone cluster efficiency}

In the Pollux setting, where every job fits on a single GPU, the cluster efficiency is monotone in the cluster size, and the cluster size matching a target efficiency $c$ can be located by binary search. In the Rubick setting, however, this monotonicity fails. For example, suppose the workload contains one ordinary job and one $\base_i=8$ job. At cluster size $4$, the large job is queued and the cluster efficiency is determined entirely by the ordinary job; at cluster size $12$, both jobs are running and the cluster efficiency becomes higher because the normalized speedup of the $\base_i=8$ job is exactly $1$. Because the cluster efficiency is no longer monotone in cluster size, the autoscaler cannot use binary search and must instead enumerate every feasible cluster size to find the one whose efficiency is closest to $c$. We tolerate this overhead in our simulator (the simulator pauses while the sweep completes) for the sake of a fair comparison; even so, \boa outperforms Rubick-with-Autoscaling (Section~\ref{sec:eval:sim}).

\subsection{Sia: which cluster to scale}

In the heterogeneous Sia setting, when the cluster efficiency drifts away from its target, the heuristic does not specify which GPU-type cluster should be grown or shrunk; the adaptive-training literature on heterogeneous clusters provides no answer either. Our implementation maintains the same efficiency target for each GPU-type cluster and lets each cluster autoscale accordingly and independently. Even with this best-effort design, \boa still outperforms Sia-with-Autoscaling (Section~\ref{sec:eval:sim}).

\section{Extending the AdaptDL Framework}
\label{app:adaptdl-extensions}

We extend the AdaptDL framework to help evaluate \boa and improve user experience.

\subhead{Changes in Autoscaling Mechanism.} AdaptDL relies on a Kubernetes autoscaler and uses \emph{placeholder pods} to signal when the cluster should be sized up or down.
We observe that this mechanism is too slow for the scaling decisions that \boa wants to make, so we instead directly expose the AWS autoscaling API to \boa.

\subhead{Profile Sharing.}
The original AdaptDL framework trains a performance model from scratch for every new job. While not central to our implementation, we extended the system to allow jobs of the same class to share performance models. This is realized by a global profiler on top of the AdaptDL schedulers. The main purpose of this modification is to isolate our improvement in scheduling from the orthogonal hyperparameter tuning process (see Section~\ref{sec:eval:imple}).

\subhead{Decision Support Tool.}
We leverage the predictability of the BOA policy to build an offline profiling calculator. Given a workload trace as input, \boa can provide the full Pareto frontier by running Algorithm~\ref{alg:boa_width_calculator} at different budgets. This tool allows a customer to select an optimal budget $b$ based on their desired average JCT before provisioning any real resources.

\subhead{Extensibility to Hyperparameter Tuning.}
While we deploy the batch-size and learning-rate tuning used in~\cite{qiao2021pollux} for each job in our evaluations, the BOA policy is compatible with \emph{any} hyperparameter tuning approach. The BOA policy and the hyperparameter optimization solve orthogonal problems: the BOA policy optimizes system performance given the speedup functions, while the hyperparameter optimization improves those speedup functions. \boa does not need to know the particular hyperparameter choices, so long as it can obtain the speedup functions.

\end{document}